\newcommand{\remove}[1]{}
\newcommand*\samethanks[1][\value{footnote}]{\footnotemark[#1]}
\g@addto@macro\bfseries{\boldmath}
\g@addto@macro\mdseries{\unboldmath}
\g@addto@macro\normalfont{\unboldmath}
\g@addto@macro\rmfamily{\unboldmath}
\g@addto@macro\upshape{\unboldmath}
\renewcommand*{\multicitedelim}{\addcomma\space}
    \newlength{\temp@x}%
    \newlength{\temp@y}%
    \newlength{\temp@w}%
    \newlength{\temp@h}%
    \def\my@coords#1#2#3#4{%
      \setlength{\temp@x}{#1}%
      \setlength{\temp@y}{#2}%
      \setlength{\temp@w}{#3}%
      \setlength{\temp@h}{#4}%
      \adjustlengths{}%
      \my@pdfliteral{\strip@pt\temp@x\space\strip@pt\temp@y\space\strip@pt\temp@w\space\strip@pt\temp@h\space re}}%
      \def\my@pdfliteral#1{\pdfliteral page{#1}}
      \def\adjustlengths{}%
      \def\my@pdfliteral #1{}
      \def\adjustlengths{\setlength{\temp@h}{-\temp@h}\addtolength{\temp@y}{1in}\addtolength{\temp@x}{-1in}}%
    \def\Hy@colorlink#1{%
      \begingroup
        \ifHy@ocgcolorlinks
          \def\Hy@ocgcolor{#1}%
          \my@pdfliteral{q}%
          \my@pdfliteral{7 Tr}
        \else
          \HyColor@UseColor#1%
        \fi
    }%
    \def\Hy@endcolorlink{%
      \ifHy@ocgcolorlinks%
        \my@pdfliteral{/OC/OCPrint BDC}%
        \my@coords{0pt}{0pt}{\pdfpagewidth}{\pdfpageheight}%
        \my@pdfliteral{F}
        %
        \my@pdfliteral{EMC/OC/OCView BDC}%
        \begingroup%
          \expandafter\HyColor@UseColor\Hy@ocgcolor%
          \my@coords{0pt}{0pt}{\pdfpagewidth}{\pdfpageheight}%
          \my@pdfliteral{F}
        \endgroup%
        \my@pdfliteral{EMC}%
        \my@pdfliteral{0 Tr}
        \my@pdfliteral{Q}%
      \fi
      \endgroup
    }%
\colorlet{DarkRed}{red!50!black}
\colorlet{DarkGreen}{green!50!black}
\colorlet{DarkBlue}{blue!50!black}
\declaretheorem[numberwithin=section]{theorem}
\declaretheorem[numberlike=theorem]{lemma}
\declaretheorem[numberlike=theorem]{question}
\DeclareMathOperator{\poly}{poly}
\renewcommand{\paragraph}{%
  \@startsection{paragraph}{4}%
  {\z@}{1.4ex \@plus 1ex \@minus .2ex}{-1em}%
  {\normalfont\normalsize\bfseries}%
}
\title{Fast $2$-Approximate All-Pairs Shortest Paths}
\author{Michal Dory\thanks{University of Haifa} \and Sebastian Forster\thanks{Department of Computer Science, University of Salzburg. This work is supported by the Austrian Science Fund (FWF): P 32863-N. This project has received funding from the European Research Council (ERC) under the European Union's Horizon 2020 research and innovation programme (grant agreement No~947702).} \and Yael Kirkpatrick\thanks{MIT, this material is based upon work supported by the National Science Foundation Graduate Research Fellowship under Grant No 2141064} \and Yasamin Nazari\thanks{VU Amsterdam. This work was partially conducted while the author was a postdoc at University of Salzburg.} \and Virginia Vassilevska Williams \thanks{MIT, Supported by NSF Grants CCF-2129139 and CCF-2330048 and BSF Grant 2020356.} \and Tijn de Vos\samethanks[2]}
\date{}
\begin{document}
\begin{titlepage}
    \maketitle
    \begin{abstract}
    In this paper, we revisit the classic approximate All-Pairs Shortest Paths (APSP) problem in undirected graphs. 
For unweighted graphs, we provide an algorithm for $2$-approximate APSP in $\tilde O(n^{2.5-r}+n^{\omega(r)})$ time, for any $r\in[0,1]$. This is $O(n^{2.032})$ time, using known bounds for rectangular matrix multiplication~$n^{\omega(r)}$~[Le Gall, Urrutia, SODA 2018]. Our result improves on the $\tilde{O}(n^{2.25})$ bound of [Roditty, STOC 2023], and on the $\tilde{O}(m\sqrt n+n^2)$ bound of [Baswana, Kavitha, SICOMP 2010] for graphs with $m\geq n^{1.532}$ edges. 

For weighted graphs, we obtain $(2+\epsilon)$-approximate APSP in $\tilde O(n^{3-r}+n^{\omega(r)})$ time, for any $r\in [0,1]$. This is $O(n^{2.214})$ time using known bounds for $\omega(r)$. It improves on the state of the art bound of $O(n^{2.25})$ by [Kavitha, Algorithmica 2012]. Our techniques further lead to improved bounds in a wide range of density for weighted graphs. 
In particular, for the sparse regime we construct a distance oracle in $\tilde O(mn^{2/3})$ time that supports $2$-approximate queries in constant time. For sparse graphs, the preprocessing time of the algorithm matches conditional lower bounds [Patrascu, Roditty, Thorup, FOCS 2012; Abboud, Bringmann, Fischer, STOC 2023]. To the best of our knowledge, this is the first 2-approximate distance oracle that has subquadratic preprocessing time in sparse graphs.

We also obtain new bounds in the near additive regime for unweighted graphs. We give faster algorithms for $(1+\epsilon,k)$-approximate APSP, for $k=2,4,6,8$.

We obtain these results by incorporating fast rectangular matrix multiplications into various combinatorial algorithms that carefully balance out distance computation on layers of sparse graphs preserving certain distance information.

    \end{abstract}
    \thispagestyle{empty}

\newpage
\tableofcontents
\thispagestyle{empty}
\end{titlepage}

\newpage
\newpage
\section{Introduction}
The All-Pairs Shortest Paths (APSP) problem is one of the most fundamental problems in graph algorithms. In this problem, the goal is to compute the distances between all pairs of vertices in a graph. It is well-known that APSP can be solved in $O(n^3)$ time in directed weighted graphs with $n$ vertices using the Floyd-Warshall algorithm (see \cite{leiserson1994introduction}), or in $\tilde{O}(mn)$ time using Dijkstra's algorithm, where $m$ is the number of edges in the graph (see also \cite{thorup1999undirected,pettie2004new,pettie2005shortest}). A slightly subcubic algorithm for APSP with running time $n^{3}/2^{\Omega(\log{n})^{1/2}}$ was given by Williams \cite{williams2014faster}. 
A natural hypothesis in graph algorithms (see \cite{RodittyZ04,vsurvey}) is that $n^{3-o(1)}$ time is required to solve APSP in weighted graphs; this is known as the APSP hypothesis.
Subcubic equivalences\footnote{Two problems are subcubically equivalent if one problem can be solved in time $O(n^{3-\epsilon})$ for some $\epsilon >0$ if and only if the other can be solved in time $O(n^{3-\delta})$ for some $\delta > 0$.} between the APSP problem and many other problems such as finding a negative weight triangle or finding the radius of the graph \cite{vassilevska2015hardness,williams2018subcubic,DBLP:journals/talg/AbboudGW23} significantly strengthened the belief in the APSP hypothesis.

All the above algorithms solve the problem in weighted, directed graphs. If the graphs are unweighted and undirected, APSP can be solved faster, in $\tilde{O}(n^{\omega})$ time, using fast matrix multiplication \cite{seidel1995all,galil1997all}, where $\omega<2.372$ is the exponent of square matrix multiplication \cite{coppersmith1987matrix,williams2012multiplying,le2014powers,alman2021refined,DBLP:journals/corr/abs-2210-10173}. For weighted, directed graphs with bounded weights, Zwick showed an algorithm that takes $O(n^{2.529})$ time \cite{Zwick02,GallU18}\footnote{The running time of this algorithm depends on the exponent of rectangular matrix multiplication, and becomes $O(n^{2.529})$ with the rectangular matrix multiplication algorithm from \cite{GallU18}.}. 
In $\tilde{O}((n^{\omega}/\epsilon) \cdot \log{W})$ time it is also possible to obtain a $(1+\epsilon)$-approximation for APSP in weighted directed graphs \cite{zwick1998all}, where $W$ is the maximum edge weight and the weights are scaled such that the smallest non-zero weight is 1. 

However, the above complexities can be high for large graphs, and it is desirable to have faster algorithms. While it may be difficult to get faster algorithms for exact APSP, a natural question is whether we can get faster \emph{approximation} algorithms for APSP. We say that an algorithm gives an $(\alpha,\beta)$-approximation for APSP if for any pair of vertices $u,v$ it returns an estimate $\delta(u,v)$ of the distance between $u$ and $v$ such that $d(u,v) \leq \delta(u,v) \leq \alpha \cdot d(u,v) + \beta$, where $d(u,v)$ is the distance between $u$ and $v$. If $\beta=0$, we get a purely multiplicative approximation that we refer to as $\alpha$-approximation, $\alpha$ is sometimes called the \emph{stretch} of the algorithm. If $\alpha=1$, we get a purely additive approximation, and call it a $+\beta$-approximation. 

Dor, Halperin, and Zwick \cite{DHZ00} showed that obtaining a $(2-\epsilon)$-approximation for APSP is at least as hard as Boolean matrix multiplication. This implies that we cannot get algorithms with running time below $O(n^{\omega})$ for approximate APSP with approximation ratios below 2, as it would lead to algorithms for matrix multiplication with the same running time. Their reduction holds even in the case that the graphs are unweighted and undirected. In the directed case the same result holds for \emph{any} approximation. This makes the special case of a 2-approximation in undirected graphs a very interesting special case, as this is the first approximation ratio where we can beat the $O(n^{\omega})$ bound.
Since in directed graphs any approximation for APSP requires $\Omega(n^{\omega})$ time, we will focus from now on \emph{undirected} graphs. 

\paragraph{2-Approximate APSP.}
While 2-approximation algorithms for APSP have been extensively studied \cite{AingworthCIM99,DHZ00,CohenZ01,BaswanaK10,Kavitha12,DengKRWZ22,durr2023improved,Roditty23} (see \autoref{table_apsp} for a summary), it is still unclear what the best running time is that can be obtained for this problem. 
This question is also open even in the simpler case of unweighted and undirected graphs.
The study of 2-approximate APSP was treated by the seminal work of Aingworth, Chekuri, Indyk, and Motwani~\cite{AingworthCIM99} that showed an additive +2-approximation algorithm for APSP that takes $\tilde{O}(n^{2.5})$ time. Their algorithm works in undirected unweighted graphs. 
While the running time of the algorithm is above $O(n^{\omega})$, it is a simple \emph{combinatorial} algorithm, where we use the widespread informal terminology that an algorithm is combinatorial if it does not use fast matrix multiplication techniques. 
The running time was later improved by Dor, Halperin, and Zwick that showed a $+2$-approximation algorithm in $\tilde{O}(\min\{n^{3/2}m^{1/2}, n^{7/3} \})$ time \cite{DHZ00}. This algorithm is still the fastest known combinatorial algorithm for a +2-approximation. Very recent results show that using fast matrix multiplication techniques, one can get faster algorithms for a +2-approximation, leading to an $O(n^{2.260})$ running time \cite{DengKRWZ22,durr2023improved}. 
All the above mentioned algorithms give an additive +2-approximation in unweighted undirected graphs, which also implies a multiplicative 2-approximation. This holds since for pairs $u,v$ where $d(u,v)=1$ we can learn the exact distance in $O(m)$ time, and once $d(u,v) \geq2$, an additive +2-approximation is also a multiplicative 2-approximation. However, if our goal is to obtain a multiplicative approximation, it may be possible to get a faster algorithm.

Algorithms for multiplicative 2-approximate APSP are studied in \cite{CohenZ01,BaswanaK10,Kavitha12,Roditty23}. In particular, Cohen and Zwick \cite{CohenZ01} showed that a 2-approximation for APSP can be computed in $\tilde{O}(n^{3/2}m^{1/2})$ time also in weighted graphs. A faster algorithm with running time of $\tilde{O}(m\sqrt{n}+n^2)$ was shown by Baswana and Kavitha \cite{BaswanaK10}, this algorithm also works in weighted graphs.
While in dense graphs the running time is $\tilde{O}(n^{2.5})$, this algorithm is more efficient in sparse graphs. In particular, for graphs with $O(n^{3/2})$ edges, the running time becomes $\tilde{O}(n^2)$.
A faster algorithm for denser graphs was shown by Kavitha that  showed a $(2+\epsilon)$-approximation for weighted APSP in $\tilde{O}(n^{2.25})$ time~\cite{Kavitha12}.\footnote{In the introduction we assume that $1/\epsilon=n^{o(1)}$ for simplicity of presentation. Similarly, when we discuss weighted graphs, we assume polynomial weights.} This algorithm exploits fast matrix multiplication techniques. Very recently, Roditty showed a combinatorial algorithm for 2-approximate APSP in unweighted undirected graphs in $\tilde{O}(n^{2.25})$ time~\cite{Roditty23}. To conclude, currently the fastest 2-approximation algorithms take $\tilde{O}(\min \{m\sqrt{n}+n^2, n^{2.25} \})$ time, and it is still unclear what is the best running time that can be obtained for the problem. In particular, the following question is still open.  

\begin{question}
Can we get a 2-approximation for APSP in $\tilde{O}(n^2)$ time?
\end{question}

It is worth mentioning that a slightly higher approximation of $(2,1)$ can be obtained in $\tilde{O}(n^2)$ time in unweighted undirected graphs \cite{BaswanaK10,berman2007faster,Sommer16,Knudsen17}. 
In weighted undirected graphs, a multiplicative 3-approximation can be obtained in $\tilde{O}(n^2)$ time \cite{CohenZ01,BaswanaK10}. Furthermore, \cite{BaswanaK10} also gives a $(2,W_{u,v})$-approximation for weighted undirected graphs in $\tilde{O}(n^2)$ time, where $W_{u,v}$ is the maximum weight of an edge in the shortest path between $u$ and $v$. In other words, for any pair of vertices $u,v$ the algorithm returns an estimate $\delta(u,v) \leq 2d(u,v)+W_{u,v}$. 

\paragraph{Distance Oracles.} 
Note that $\Omega(n^2)$ is a natural barrier for APSP, as just writing the output of the problem takes $\Omega(n^2)$ time. However, $\Omega(n^2)$ time or space may be too expensive for large graphs, and if we do not need to write the output explicitly, we may be able to obtain algorithms with \emph{subquadratic} time or space. This has motivated the study of \emph{distance oracles}, which are compact data structures that allow fast query of (possibly approximate) distances between any pair of vertices. The study of near-optimal approximate distance oracles was initiated by the seminal work of Thorup and Zwick \cite{TZ2005} that showed that for any integer $k \geq 2$, a distance oracle of size $O(kn^{1+1/k})$ can be constructed in $O(kmn^{1/k})$ time, such that for any pair of vertices we can obtain a $(2k-1)$-approximation of the distance between them in query time $O(k)$. As an important special case, it gives 3-approximate distance oracles of size $O(n^{3/2})$ in $O( m\sqrt{n}) $ time. The construction is for weighted undirected graphs. Note that the distance oracle uses \emph{subquadratic} space, and the construction time is \emph{subquadratic} for sparse graphs. As shown later these distance oracles can be built also in $\tilde{O} (\min \{kmn^{1/k}, n^2\}$) time \cite{BaswanaK10}. The construction time was further improved in \cite{wulff2012approximate}, where the query time and size were further improved in \cite{wulff2013approximate,chechik2014approximate,chechik2015approximate}.

Distance oracles of size $\tilde{O}(n^{5/3})$ that provide a $(2,1)$-approximation in unweighted undirected graphs are studied in \cite{BaswanaGS09,PatrascuR14,Sommer16,Knudsen17}. This tradeoff between stretch and space is optimal assuming the hardness of set intersection \cite{PatrascuR14,PatrascuRT12}. The fastest of them run in $\min\{\tilde{O}(n^2, mn^{2/3})\}$ time \cite{BaswanaGS09,Sommer16,Knudsen17}.\footnote{The running time of $\tilde{O}(mn^{2/3})$ is implicit in \cite{BaswanaGS09}.}

Recently, slightly subquadratic algorithms with nearly 2-approximations were given for undirected unweighted graphs. 
In particular, a distance oracle with stretch $ (2 (1 + \epsilon), 5)$ and slightly subquadratic running time is given by Akav and Roditty \cite{AkavR20}. In a follow-up work, Chechik and Zhang constructed a $(2,3)$-approximate distance oracle of size $\tilde{O}(n^{5/3})$ in $\tilde{O}(m+n^{1.987})$ time. They also study other trade-offs between the stretch and running time, and in particular show that $(2+\epsilon,c(\epsilon))$-approximate distance oracle of size $\tilde{O}(n^{5/3})$ can be constructed in $\tilde{O}(m+n^{5/3 + \epsilon})$ time, where $c(\epsilon)$ is a constant depending exponentially on $1/\epsilon$. The preprocessing time of this distance oracle nearly matches a recent conditional lower bound by Abboud, Bringmann, and Fischer \cite{aboud2022stronger}, who showed that $m^{5/3-o(1)}$ time is required for a $(2+o(1))$-approximation, conditional on the 3-SUM conjecture.  
While there are subquadratic constructions of distance oracles with nearly 2-approximations, all existing algorithms for 2-approximations take at least $\Omega(n^{2})$ time \cite{DHZ00,CohenZ01,BaswanaK10,Kavitha12}, which raises the following question. 

\begin{question}
Can we construct a 2-approximate distance oracle in subquadratic time?
\end{question}

\subsection{Our Results}

Throughout we write $n^{\omega(r)}$ for the time for multiplying an $n\times n^r$ matrix by an $n^r \times n$ matrix (see \autoref{sec:prelim}).  Rectangular matrix multiplication is an active research field, with the bounds on $\omega(r)$ being improved in recent years~\cite{gall2023faster, GallU18, le2012faster}. We provide how the 
recent work by Vassilevska Williams, Xu, Xu, and Zhou~\cite{VassilevskaWXXZ23} affects our running times in \autoref{app:newRMM}. For the rest of this paper, we use \cite{GallU18}, the last published paper on the topic, for the sake of replicability. We balance the terms using \cite{balancer}.

\paragraph{Unweighted 2-Approximate APSP.}

Our first contribution is a faster algorithm for 2-approximate APSP in unweighted undirected graphs.\footnote{All our randomized algorithms are always correct; the randomness only affects the running time.}

\begin{restatable}{theorem}{ThmTwoApx}\label{thm:two_apx}
There exists a randomized algorithm that, given an unweighted, undirected graph $G=(V,E)$, computes $2$-approximate APSP.  With high probability, the algorithm takes $\tilde O(n^{2.5-r}+n^{\omega(r)})$ time, for any $r\in [0,1]$. Using known upper bounds for rectangular matrix multiplication, this is $O(n^{2.032})$ time. 
\end{restatable}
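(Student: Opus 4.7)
My plan is to extend the hitting-set scheme underlying Roditty's $\tilde O(n^{9/4})$ algorithm~\cite{Roditty23} by performing the ``global'' distance computation with rectangular matrix multiplication, parameterised by $r$.

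Fix $s := n^{1-r}$ and sample $H \subseteq V$ by including each vertex independently with probability $\Theta(\log(n)/s)$. A Chernoff bound yields $|H| = \tilde O(n^r)$ with high probability, and a union bound over the $\binom{n}{2}$ vertex pairs shows that, with high probability, $H$ intersects the set $N_s(v)$ of the $s$ closest neighbors of every vertex $v$, and $H$ intersects every shortest path in $G$ of length at least $s$. This $H$ plays a dual role: the matrix-multiplication step benefits from $|H|$ being small, while the combinatorial part benefits from $H$ hitting every small ball and every sufficiently long shortest path.

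The algorithm then has three parts. First, compute $d(h, v)$ for all $h \in H$ and $v \in V$ via a Seidel-type squaring procedure in which the key product is between an $n \times n^r$ matrix and an $n^r \times n$ matrix, costing $\tilde O(n^{\omega(r)})$. Second, for each $v \in V$, compute exact distances from $v$ to all vertices in $N_s(v)$ by a local BFS that is pruned once $s$ vertices are discovered (or once it reaches an $H$-vertex, after which the precomputed $H$-distances can take over); the target is total cost $\tilde O(n^{2.5-r})$. Third, given a query $(u, v)$, return the exact distance if $v \in N_s(u)$ or $u \in N_s(v)$, and otherwise return $\min_{h \in H}\, d(u, h) + d(h, v)$.

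For correctness, if $v \in N_s(u)$ or $u \in N_s(v)$ the answer is exact. Otherwise, if $d(u, v) \geq s$ then by the path-hitting property $H$ intersects a shortest $u$-$v$ path, and the $H$-based estimate is exact as well. The remaining case is $d(u, v) < s$ with $v \notin N_s(u)$ and $u \notin N_s(v)$: the naive routing through a single $h \in H \cap N_s(u)$ gives only a $3$-approximation, so a more refined argument is needed to reduce this to $2$.

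The hardest parts will be (i) this last correctness subcase, and (ii) the amortized analysis of the local BFS. For (i), I expect to need either a two-scale hitting set or a midpoint argument on the shortest $u$-$v$ path, combining the exact $N_s$ information at one endpoint with the $H$-distances at the other. For (ii), the key idea is that every BFS tree, restricted to its first $s$ discovered vertices, meets $H$ quickly by the hitting-set property, so the relevant edge-work per source can be controlled by $\tilde O(n^{1.5-r})$ through a charging that amortizes the overlap between balls rather than summing degrees naively.
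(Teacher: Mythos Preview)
Your proposal has a fatal correctness gap and two further unjustified steps. The subcase you flag---$d(u,v)<s$ with $v\notin N_s(u)$ and $u\notin N_s(v)$---cannot be repaired by a midpoint or two-scale argument. Because $N_s(u)$ is defined by \emph{count} rather than radius, it need not contain any vertex of a shortest $u$--$v$ path other than $u$ itself (take $u$ with $\ge s$ neighbours, all off that path); the same holds for the midpoint, so there is simply no pivot on or near the path, and routing through any $h\in H\cap N_s(u)$ gives only stretch $3$. Separately, the $\tilde O(n^{2.5-r})$ budget for the pruned BFS is not established: even with the $H$-stopping rule each BFS must scan the full adjacency list of every dequeued vertex, and summed over sources this is $\sum_w \deg(w)\,|C(w)|$, for which the only general handle is a Thorup--Zwick-style $\tilde O(m\,n^{1-r})$ bound---i.e.\ $\tilde O(n^{3-r})$ when $m=\Theta(n^2)$---and your ``overlap'' amortization is not a recognised technique. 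Finally, exact $H\times V$ distances in $n^{\omega(r)}$ time via a Seidel-type recursion is not available: Seidel squares the full $n\times n$ matrix, and restricting the output rows does not make the products rectangular.

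The paper's route avoids all three issues by decomposing according to \emph{vertex degree on the shortest path} rather than distance. If every vertex on a shortest $u$--$v$ path has degree at most $n^{1-r}$, the path lies in the subgraph $G'$ of edges touching a low-degree endpoint; since $|E(G')|\le n^{2-r}$, running Baswana--Kavitha's $2$-approximate APSP on $G'$ yields stretch $2$ directly in $\tilde O(n^{2.5-r})$ time. Otherwise some path vertex $x$ has high degree, and a degree-based hitting set $S$ of size $\tilde O(n^r)$ contains a \emph{neighbour} $a$ of $x$; routing through $a$ gives additive error $+2$. The needed distances from $S$ are obtained by BFS on $O(\log n)$ degree-layered subgraphs in $\tilde O(n^2)$ total time, and the combination $\min_{a\in S}\{d(u,a)+d(a,v)\}$ is a single $(1+\epsilon)$-approximate rectangular distance product in $\tilde O(n^{\omega(r)})$ time. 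The resulting $(2+\epsilon)$-approximation is then upgraded to stretch $2$ by taking the minimum with a $+\log n$ additive APSP computed in $\tilde O(n^2)$ time.
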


Our running time improves over the previous best running time of $\tilde{O}(\min \{m\sqrt{n}+n^2, n^{2.25} \})$ \cite{BaswanaK10,Kavitha12,Roditty23} as long as $m\geq n^{1.532}$, and gets closer to an $O(n^2)$ running time. In particular, if we have a rectangular matrix multiplication bound  of $\omega(0.5)=2$, then we obtain a $\tilde O(n^2)$ time algorithm. Currently, we know $\omega(0.5)<2.043$ \cite{VassilevskaWXXZ23} -- a very recent improvement on \cite{GallU18}: $\omega(0.5)<2.044183$.


While the fastest version of our algorithm exploits fast rectangular matrix multiplication, our approach also leads to an alternative simple combinatorial 2-approximation algorithm in $\tilde{O}(n^{2.25})$ time, matching the very recent result by Roditty~\cite{Roditty23}. See \autoref{sec:combinatotial} for the details.


\begin{table}[h]
\begin{center}
\begin{tabular}{|c|c|c|c|} 
\hline
 \multicolumn{3}{|c|}{Algorithms for Unweighted APSP} \\
 \hline
    Reference & Time & Approx. \\ \hline
    \cite{AingworthCIM99} & $n^{2.5}$ &  $+2$ \\  \hline
    \cite{DHZ00} & $ n^{7/3}$ &  $ +2$ \\  \hline
    \cite{DengKRWZ22} & $n^{2.287}$ &  $ +2$ \\  \hline
    \cite{durr2023improved} & $n^{2.260}$ &  $ +2$ \\  \hline
    \cite{Roditty23} & $n^{2.25}$ &  $ 2$ \\  \hline
    \textbf{This work} & $n^{2.032}$ &  $ 2$ \\  \hline

\end{tabular}
\quad
\begin{tabular}{|c|c|c|c|}
\hline

  \multicolumn{3}{|c|}{Algorithms for Weighted APSP} \\
 \hline
 Reference & Time & Approx. \\ \hline
  \cite{CohenZ01} & $n^{3/2}m^{1/2}$ &  $2$ \\  \hline
    \cite{BaswanaK10} & $m\sqrt{n}+n^2$ &  $2$ \\  \hline
  \cite{Kavitha12} & $n^{2.25}$ &  $2+\epsilon$ \\  \hline
  \textbf{This work} & $n^{2.214}$ &  $2+\epsilon$ \\  \hline
   \textbf{This work} & $mn^{2/3}$ &  $2$ \\  \hline
\end{tabular}
\end{center}
\caption{Algorithms for 2-Approximate APSP. The last result in the right table is a 2-approximate distance oracle.}
\label{table_apsp}
\end{table}

\paragraph{Weighted 2-Approximate APSP.} For weighted undirected graphs we show the following.

\begin{restatable}{theorem}{ThmTwoApxWeighted}\label{thm:two_apx_wghted}
There exists a randomized algorithm that, given an undirected graph $G=(V,E)$ with non-negative integer weights bounded by $W$, computes $(2+\epsilon)$-approximate APSP. With high probability the algorithm takes $O(n^{3-r} +n^{\omega(r)}(1/\epsilon)^{O(1)}\log W)$  time, for any $r\in [0,1]$. Using known upper bounds for rectangular matrix multiplication, this is $O(n^{2.214}(1/\epsilon)^{O(1)}\log W)$  time.
\end{restatable}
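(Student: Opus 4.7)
The plan is to lift the unweighted $2$-approximate APSP algorithm of \autoref{thm:two_apx} to the weighted setting by replacing BFS with Dijkstra for local distance computations and invoking Zwick-style scaling to convert the weighted $(\min,+)$-products into a controlled number of integer-entry rectangular matrix multiplications. The scaling introduces only a $(1+\epsilon)$ factor in the approximation and a $(1/\epsilon)^{O(1)} \log W$ overhead on the matrix-multiplication term, so the overall result will be a $(2+\epsilon)$-approximation within the advertised running time.

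First, I would sample a hitting set $S \subseteq V$ of size $\tilde O(n^r)$ uniformly at random, so that with high probability $S$ intersects the $n^{1-r}$ nearest vertices of every $v \in V$. For each $v$ I would compute its ball $B(v)$ -- the $n^{1-r}$ closest vertices together with their exact weighted distances -- via a truncated Dijkstra. Since each such Dijkstra extracts $n^{1-r}$ vertices and relaxes $O(n)$ edges per popped vertex in a dense graph, the total work across all sources is $O(n \cdot n^{2-r}) = O(n^{3-r})$, matching the first term of the claimed bound. For pairs $(u,v)$ with $u \notin B(v)$ and $v \notin B(u)$, I would compute the pivot-based estimate $m(u,v) := \min_{s \in S}\bigl(d(s,u) + d(s,v)\bigr)$ as a rectangular $(\min,+)$-product: partition the distance range into $O(\epsilon^{-1} \log W)$ geometric scales, round within each scale, and reduce to a standard rectangular matrix multiplication of inner dimension $|S| = n^r$ on bounded-integer matrices via the Alon--Galil--Margalit trick, yielding $\tilde O(n^{\omega(r)} (1/\epsilon)^{O(1)} \log W)$ in total. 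The final output, taking the minimum of $m(u,v)$ and the exact local distance when $v \in B(u)$ or $u \in B(v)$, is a $(2+\epsilon)$-approximation via the standard hitting-set analysis from prior work on $2$-approximate weighted APSP, combined with the $(1+\epsilon)$ slack propagated by the scaling.

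The main obstacle is ensuring that the weighted distances from $S$ to $V$ needed inside the $(\min,+)$-product are obtained within the budget: a naive $|S|$-source Dijkstra on a dense graph would cost $\Omega(n^{r+2})$, which can exceed both terms at the balance point $r^\star$ where $\omega(r^\star) = 3 - r^\star$. To stay within budget I would avoid materializing the full $|S|\times n$ distance matrix outside the scaled rectangular-MM pipeline -- for instance, by extracting the approximate pivot distances directly from the iterated rectangular products, or by precomputing them on a $(1+\epsilon)$-hopset of near-linear size so that each SSSP runs in near-linear time. The analysis must then be robust to the $(1+\epsilon)$ slack introduced by this sparsification so that the final approximation factor remains $2+\epsilon$.
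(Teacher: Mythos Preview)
Your proposal has a genuine correctness gap: the two-case split ``$v\in B(u)$ or $u\in B(v)$'' (exact local distance) versus ``neither'' (pivot estimate $m(u,v)$) does not suffice for a $2$-approximation. When $v\notin B(u)$ and $u\notin B(v)$, all you can conclude is $d(u,p(u))\le d(u,v)$, which yields only the Thorup--Zwick $3$-approximation $d(u,p(u))+d(p(u),v)\le 3d(u,v)$. A $2$-approximation through the pivot requires $d(u,p(u))\le d(u,v)/2$, and that follows only when some vertex $x$ on the shortest $u$--$v$ path lies outside \emph{both} $B(u)$ and $B(v)$. The missing case is the ``adjacent'' one: $u\notin B(v)$, $v\notin B(u)$, yet every vertex of the shortest path lies in $B(u)\cup B(v)$. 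Your plan simply does not handle this case, and there is no ``standard hitting-set analysis'' that closes it --- this is exactly the case that distinguishes $2$-approximation from $3$-approximation in weighted graphs.

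The paper's proof is built around this very obstacle. It computes an additional estimate $\delta_{\rm adjacent}(u,v)=\min\{d(u,u')+w(u',v')+d(v',v):\{u',v'\}\in E,\,u'\in B(u),\,v'\in B(v)\}$, which returns the exact distance in the adjacent case, and shows it can be evaluated for all pairs in $\tilde O(n^2/p)=\tilde O(n^{3-r})$ time by running, for each vertex, two Dijkstras on auxiliary graphs of size $\tilde O(n/p)$ built from the bunch structure. This is the main algorithmic idea driving the $n^{3-r}$ term; your truncated-Dijkstra ball computation contributes the same term but is not the bottleneck conceptually. For the MSSP-from-$S$ step that you flag as problematic, the paper invokes the Elkin--Neiman $(1+\epsilon)$-MSSP algorithm (\autoref{thm:EN_MSSP}), which internally uses hopsets plus rectangular products and delivers the $n^{\omega(r)}(1/\epsilon)^{O(1)}\log W$ term directly; your suggestion to ``extract the approximate pivot distances directly from the iterated rectangular products'' would require square, not rectangular, products and so does not stay within budget.
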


We remark that with standard scaling techniques the result generalizes to non-integer weights. Hence we state all our weighted results only for integers. 

To the best of our knowledge, this is the first improvement for weighted $(2+\epsilon)$-approximate APSP since the work of Kavitha~\cite{Kavitha12} that showed an algorithm with $\tilde{O}(n^{2.25})$ running time, that also exploited fast matrix multiplication techniques. We note that our algorithm and the algorithm of \cite{Kavitha12} give a $(2+\epsilon)$-approximation, and currently the fastest algorithms that give a 2-approximation for weighted APSP take $\tilde{O}( \min\{ n^{\omega}, m\sqrt{n}+n^2 \})$ time \cite{Zwick02,BaswanaK10}. The fastest combinatorial algorithm for the problem is the $\tilde{O}(m\sqrt{n}+n^2)$ time algorithm of Baswana and Kavitha~\cite{BaswanaK10}. This algorithm takes $\tilde{O}(n^{2.5})$ time in dense graphs, but it is faster for sparser graphs.

Our approach can also be combined with the approach of \cite{BaswanaK10} to obtain faster algorithms for sparser graphs, giving the following. See \autoref{table_weighted} for some specific choices for the value of $m$.

\begin{restatable}{theorem}{ThmTwoApxWeightedPar}\label{thm:two_apx_wghted_par}
    There exists a randomized algorithm that, given an undirected graph $G=(V,E)$ with non-negative, integer weights bounded by $W$ and a parameter $r\in [0,1]$, computes $(2+\epsilon)$-approximate APSP. With high probability, the algorithm runs in $\tilde O(mn^{1-r}+n^{\omega(r)}(1/\epsilon)^{O(1)}\log W)$ time.
\end{restatable}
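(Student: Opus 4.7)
The plan is to adapt the algorithm underlying Theorem~\ref{thm:two_apx_wghted}, replacing its dense-graph subroutines with sparsity-aware ones in the style of Baswana and Kavitha~\cite{BaswanaK10}. The $\tilde O(n^{3-r})$ term in that theorem arises from running single-source shortest paths (SSSP) from a random hitting set $A\subseteq V$ of size $\tilde O(n^{1-r})$, where each SSSP call is executed in the dense $O(n^2)$ time bound suitable for $m=\Theta(n^2)$. My first modification is simply to replace each such call with a standard heap-based Dijkstra running in $\tilde O(m)$ time, yielding $\tilde O(mn^{1-r})$ total cost for this stage.

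The remaining components of the algorithm are kept essentially unchanged. In particular, the rectangular matrix multiplication step that handles pairs $(u,v)$ whose approximate shortest path passes through some vertex of $A$ — together with the geometric $(1+\epsilon)$-scaling needed to cope with weighted edges over $O(\log W)$ weight classes — still contributes $\tilde O(n^{\omega(r)}(1/\epsilon)^{O(1)}\log W)$. The correctness and $(2+\epsilon)$-approximation guarantee then carry over verbatim from Theorem~\ref{thm:two_apx_wghted}, since its case analysis (short-distance pairs handled exactly via local computations, long-distance pairs handled via some $a\in A$ with a triangle inequality argument) does not depend on how the SSSPs from $A$ are implemented.

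The main obstacle, as is typical when combining the Baswana--Kavitha framework with matrix multiplication, is to audit every auxiliary subroutine to ensure no hidden $\Omega(n^2)$ factor survives in the sparse regime. The critical piece is the computation of each vertex's bunch $B(v)$, consisting of its nearest vertices up to (but not including) the closest element of $A$, done by a truncated local Dijkstra from $v$. Since $|A|=\tilde O(n^{1-r})$, each such bunch has size $\tilde O(n^r)$ w.h.p.\ by the hitting-set property, and an amortized argument charging each edge relaxation to the bunch containing the relevant endpoint will show that the total cost of these local explorations is dominated by the leading term $\tilde O(mn^{1-r})$. Assembling all three contributions — SSSP from $A$, rectangular matrix multiplication, and the audited local computations — then yields the claimed $\tilde O(mn^{1-r}+n^{\omega(r)}(1/\epsilon)^{O(1)}\log W)$ bound.
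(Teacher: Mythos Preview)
Your plan has a genuine gap: you misidentify where the $\tilde O(n^{3-r})$ term in Theorem~\ref{thm:two_apx_wghted} comes from, and as a result you never address the actual bottleneck.

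In the dense-case algorithm (\autoref{sc:weighted_dense}), the pivot set $S$ has size $\tilde O(n^{r})$, not $\tilde O(n^{1-r})$, and the $(1+\epsilon)$-MSSP from $S$ is precisely what already contributes the $\tilde O(n^{\omega(r)})$ term via Elkin--Neiman. The $\tilde O(n^{3-r})=\tilde O(n^2/p)$ term instead comes from computing $\delta_{\rm adjacent}$: for every vertex one runs Dijkstra on an auxiliary graph containing all $\tilde O(n/p)=\tilde O(n^{2-r})$ bunch edges. That cost is governed by the \emph{bunch sizes}, not by $m$, so replacing ``$O(n^2)$-time Dijkstra on $G$'' by ``heap-based Dijkstra on $G$'' does nothing to it. Neither of the two explicit approaches in the paper to $\delta_{\rm adjacent}$ reaches your target: the dense one still gives $\tilde O(n^{3-r})$, and the sparse one (\autoref{sc:static_distance_oracles}) gives $\tilde O(m/p^2)=\tilde O(mn^{2(1-r)})$, which overshoots $\tilde O(mn^{1-r})$. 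Your sketch of the ``local bunch computations'' does not fill this hole; those truncated Dijkstras compute bunches and bunch distances, not the cross-bunch quantity $\delta_{\rm adjacent}$. (There is also a parameter mismatch: with $|A|=\tilde O(n^{1-r})$ you have already computed exact distances from $A$ by Dijkstra, so there is no separate matrix-multiplication step through $A$; and if there were, its cost would be $n^{\omega(1-r)}$, not $n^{\omega(r)}$.)

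The paper does not obtain Theorem~\ref{thm:two_apx_wghted_par} by tweaking Theorem~\ref{thm:two_apx_wghted}. It instead invokes the parameterized Baswana--Kavitha machinery of Theorem~\ref{thm:ParBK}: an $O(\log n)$-level hierarchy $V=S_0\supseteq\cdots\supseteq S_k$ with $|S_k|=\tilde O(n^{r})$, where at each level one runs Dijkstra from $S_i$ on a suitably sparsified graph (Algorithm~\ref{alg:BK_scheme}). This multilevel scheme sidesteps the explicit $\delta_{\rm adjacent}$ computation entirely and yields $\tilde O(n^2+mn^{1-r})$ for all levels combined (Lemma~\ref{lm:BK_scheme_time}); a single $(1+\epsilon)$-MSSP from the top level $S_k$ via~\cite{ElkinN22} then contributes the $\tilde O(n^{\omega(r)}(1/\epsilon)^{O(1)}\log W)$ term.
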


In particular, our algorithm is faster than \cite{BaswanaK10} for $m\geq  n^{\omega(0.5)-0.5}$. With the current value of $\omega(0.5)$ our algorithm is faster for $m\geq n^{1.545}$~\cite{GallU18}. 

\paragraph{Weighted 2-Approximate Distance Oracle.}
The results mentioned above are fast for graphs that are relatively dense. For sparser graphs we show a simple combinatorial algorithm that gives the following.
\begin{restatable}{theorem}{thmstatic}\label{thm:static_2}
    There exists a combinatorial algorithm that, given a weighted graph $G=(V,E)$, constructs a distance oracle that answers $2$-approximate distance queries in constant time, and uses $\tilde O(mn^{2/3})$ space with preprocessing time $\tilde O(mn^{2/3})$. 
\end{restatable}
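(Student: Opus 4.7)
The plan is to follow the classical pivot-and-ball paradigm for approximate distance oracles, instantiated with the ball-size parameter $s=n^{2/3}$. First, I would sample a set $S\subseteq V$ uniformly at random with $|S|=\tilde\Theta(n/s)=\tilde\Theta(n^{1/3})$ so that with high probability every vertex $u$ has at least one pivot among its $s$ nearest vertices, and run Dijkstra from each pivot to obtain all distances $d(s,v)$ and store them in a hash table. This step costs $\tilde O(|S|\cdot m)=\tilde O(mn^{1/3})$ time and $\tilde O(n|S|)=\tilde O(n^{4/3})$ space, both well within the claimed bounds.

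Next, for each $u\in V$ let $p(u)\in S$ denote the nearest pivot to $u$ and $r(u)=d(u,p(u))$. I would perform a truncated Dijkstra from $u$ that halts upon reaching radius $r(u)$, producing the ball $B(u)=\{v : d(u,v)\le r(u)\}$ of size $\tilde O(s)=\tilde O(n^{2/3})$ with high probability. Storing the distances $\{d(u,v) : v\in B(u)\}$ in a hash table, the total time across all $u$ is $\tilde O(ms)=\tilde O(mn^{2/3})$ by the standard amortized analysis for Thorup--Zwick-style bunches, and the total space is $\tilde O(ns)=\tilde O(n^{5/3})\le \tilde O(mn^{2/3})$ when $m\ge n$.

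To answer a query for the pair $(u,v)$, I would first test in $O(1)$ via the hash tables whether $v\in B(u)$ or $u\in B(v)$ and return the stored exact distance if so. Otherwise we have $d(u,v)>r(u)$ and $d(u,v)>r(v)$, and I would return
\[\delta(u,v) \;=\; \min\bigl\{d(u,p(u))+d(p(u),v),\; d(u,p(v))+d(p(v),v)\bigr\},\]
evaluated with a handful of hash lookups. If $d(u,p(u))\le d(u,v)/2$ then the triangle inequality yields $\delta(u,v)\le 2d(u,p(u))+d(u,v)\le 2d(u,v)$, and the case $d(v,p(v))\le d(u,v)/2$ is symmetric.

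The main obstacle is the remaining case, $d(u,p(u))>d(u,v)/2$ and $d(v,p(v))>d(u,v)/2$, in which the naive Thorup--Zwick analysis gives only stretch $3$. I expect to close this gap via a midpoint argument: on a shortest $u$--$v$ path, the vertex $w$ closest to the midpoint satisfies $d(u,w)\le d(u,v)/2<r(u)$ and, symmetrically, $d(v,w)<r(v)$, forcing $w\in B(u)\cap B(v)$ so that both halves of the exact distance are already stored. Making this structurally sound in constant query time is the delicate step: it requires augmenting Step~2 with a small amount of auxiliary bookkeeping, for example indexing each ball $B(u)$ by the pivot one wishes to route through, or pre-marking for every $u$ an edge on a shortest path to each point of $B(u)$ to locate the midpoint edge in $O(1)$, while keeping the total preprocessing within the $\tilde O(mn^{2/3})$ budget.
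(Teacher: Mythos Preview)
Your midpoint argument is the gap, and it is not a matter of bookkeeping. In a \emph{weighted} graph the midpoint of a shortest $u$--$v$ path can fall in the interior of an edge $\{w,w'\}$, with $d(u,w)<d(u,v)/2<d(u,w')$. Then $w\in B(u)$ and $w'\in B(v)$, but nothing forces either endpoint into the \emph{intersection} $B(u)\cap B(v)$: one has $d(v,w)>d(u,v)/2$ and $d(u,w')>d(u,v)/2$, and you only know $r(u),r(v)>d(u,v)/2$, not that they exceed these larger quantities. Concretely, take a two-edge path $u\!-\!w\!-\!v$ with $w(u,w)=1$, $w(w,v)=3$, and $r(u)=r(v)=2.5$; then $w\in B(u)\setminus B(v)$, the balls are disjoint on the path, and routing through $w$ gives $d(u,w)+d(w,v)=4=d(u,v)$ only because $w$ lies on the path---but you have no way to select it at query time. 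What survives in general is only a $(2,W_{u,v})$-guarantee, not a pure $2$-approximation; your proposed fixes (indexing balls by pivot, marking shortest-path edges) do not circumvent this, because the issue is structural, not a lookup problem.

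The paper closes the gap differently: it precomputes, for every pair $(u,v)$ whose bunches are joined by an edge, the quantity
\[
\delta_{\mathrm{adjacent}}(u,v)=\min\bigl\{d(u,u')+w(u',v')+d(v',v):\{u',v'\}\in E,\ u'\in B(u),\ v'\in B(v)\bigr\},
\]
by iterating over every edge $\{u',v'\}$ and then over all $u\in C(u')$, $v\in C(v')$. This handles the problematic case exactly, but its cost is $\sum_{\{u',v'\}\in E}|C(u')|\cdot|C(v')|$, which forces two changes relative to your setup. First, one needs \emph{clusters} (not just bunches) to be small, so the paper uses the Thorup--Zwick refinement that bounds both by $\tilde O(1/p)$ simultaneously rather than plain random sampling. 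Second, the parameters must be flipped: with your ball size $\tilde O(n^{2/3})$ the sum above is $\tilde O(mn^{4/3})$, far over budget. The paper instead takes $p=n^{-1/3}$, giving $|S|=\tilde O(n^{2/3})$ pivots and bunch/cluster size $\tilde O(n^{1/3})$; then Dijkstra from $S$ costs $\tilde O(pnm)=\tilde O(mn^{2/3})$ and the adjacent pass costs $\tilde O(m/p^{2})=\tilde O(mn^{2/3})$, both within the claim.
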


For sparse graphs ($m = o(n^{4/3})$), we have \emph{subquadratic} running time and space.  To the best of our knowledge this is the first 2-approximate distance oracle with subquadratic construction time for sparse graphs. 
Moreover, when $m=O(n)$ our bounds match the conditional lower bound of $\tilde \Omega(m^{5/3})$ preprocessing time for $2$-approximations, conditional on the set intersection conjecture~\cite{PatrascuRT12} or the 3-SUM conjecture~\cite{aboud2022stronger}, where \cite{PatrascuRT12} shows the stronger $\tilde \Omega(m^{5/3})$ \emph{space} lower bound. 
Moreover, for any stretch strictly below $2$, there is a $\tilde \Omega(n^2)$ lower bound (conditional on the set intersection conjecture)~\cite{PatrascuR14}. In \cite{PatrascuR14} they also show a polynomial time construction of a 2-approximate distance oracle for weighted graphs of size $\tilde{O}(n^{4/3} m^{1/3})$, which is subquadratic space for sparse graphs.
The proof of \autoref{thm:static_2} can be found in \autoref{sc:static_distance_oracles}. 



We can also obtain a $(2,W_{u,v})$-approximate distance oracle with a better construction time than our $2$-approximate construction. In particular, we can obtain a \emph{subquadratic} preprocessing time of $\tilde O(nm^{2/3})$ when $m \leq n^{3/2}$, see \autoref{app:static2W} for full details. This result is also implied by an algorithm of Baswana, Goyal, and Sen~\cite{BaswanaGS09}. They focus on $(2,1)$-distance oracles for unweighted graphs using the same algorithm. We observe that 1) the same algorithm can lead to subquadratic preprocessing time (in sparse graphs), and 2) this leads to a 
 $(2,W_{u,v})$-approximation for weighted graphs. We make these observations explicit for a wider range of parameter settings and for completeness give an analysis in \autoref{app:static2W}. 

\begin{table}[h]
\begin{center}
\begin{tabular}{|c|c|c|c|} 
\hline
    $m$ & This work & Previous results \\ \hline
    $n^{1.0}$ & $n^{1.667}$ &  $n^{2}$ \cite{BaswanaK10} \\  \hline
    $n^{1.1}$ & $n^{1.767}$ &  $n^{2}$ \cite{BaswanaK10} \\  \hline
    $n^{1.2}$ & $n^{1.867}$ &  $n^{2}$ \cite{BaswanaK10} \\  \hline
    $n^{1.3}$ & $n^{1.967}$ &  $n^{2}$ \cite{BaswanaK10} \\  \hline
    $n^{1.4}$ & $n^{2.009}$ &  $n^{2}$ \cite{BaswanaK10} \\  \hline
    $n^{1.5}$ & $n^{2.032}$ &  $n^{2}$ \cite{BaswanaK10} \\  \hline

\end{tabular}
\quad
\begin{tabular}{|c|c|c|c|}
\hline

    $m$ & This work & Previous results \\ \hline
    $n^{1.6}$ & $n^{2.062}$ &  $n^{2.1}$ \cite{BaswanaK10} \\  \hline
    $n^{1.7}$ & $n^{2.097}$ &  $n^{2.2}$ \cite{BaswanaK10} \\  \hline
    $n^{1.8}$ & $n^{2.134}$ &  $n^{2.25}$ \cite{Kavitha12}\\  \hline
    $n^{1.9}$ & $n^{2.173}$ &  $n^{2.25}$ \cite{Kavitha12} \\  \hline
    $n^{2.0}$ & $n^{2.214}$ &  $n^{2.25}$ \cite{Kavitha12}\\  \hline
\end{tabular}
\end{center}
\caption{Comparison of the novel running time bounds of our approximation algorithms for weighted graphs and prior work, for $1/\epsilon=n^{o(1)}$. For sparser graphs ($m \leq n^{1.3}$), we use our 2-approximate distance oracle (\autoref{thm:static_2}), and for denser graphs ($m \geq n^{1.4}$) we use our $(2+\epsilon)$-approximate APSP (\autoref{thm:two_apx_wghted_par}). \autoref{thm:static_2} gives the fastest running time when $m \leq n^{4/3}$, and \autoref{thm:two_apx_wghted_par} gives the fastest running time when $m\geq n^{1.545}$. For $n^{1.334} < m < n^{1.545}$, we do not improve on \cite{BaswanaK10}.}
\label{table_weighted}
\end{table}

\paragraph{Near-Additive APSP.}

We also study algorithms that give a near-additive approximation in unweighted undirected graphs. We show the following.

\begin{restatable}{theorem}{ThmNearAdditive}\label{thm:near_additive}
There exists a deterministic algorithm that, given an unweighted, undirected graph $G=(V,E)$ and an even integer $k\geq 2$, computes $(1+\epsilon,k)$-approximate APSP with running time $\tilde O\left(n^{2+(1-r)\tfrac{2}{k+2}}+n^{\omega(r)}(1/\epsilon)\right)$, for any choice of $r\in [0,1]$.  
\end{restatable}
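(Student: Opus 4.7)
The plan is to interpolate between two familiar extremes using the parameter~$r$: the Dor-Halperin-Zwick (DHZ) multi-scale hitting-set framework, which gives a purely combinatorial $+k$-approximation in $\tilde O(n^{2+2/(k+2)})$ time (recovered at $r=0$), and Zwick's $(1+\epsilon)$-approximate APSP via iterated min-plus matrix products with time $\tilde O(n^2 + n^\omega/\epsilon)$ (recovered at $r=1$). The key idea is to keep the DHZ scale structure for the combinatorial part but to replace the computation at the top scale by a fast rectangular min-plus matrix product, sizing the top hitting set to match the rectangular dimension~$n^r$.

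Concretely, I would set $\ell = k/2 + 1$ and pick geometric distance thresholds $s_0 = 1 < s_1 < \dots < s_\ell = n^{1-r}$ with common ratio $q = s_\ell^{1/\ell} = n^{(1-r)\cdot 2/(k+2)}$. For each scale~$i$, deterministically construct a hitting set $D_i$ of size $\tilde O(n/s_i)$ that intersects every ball of radius $s_i$; in particular $|D_\ell| = n^r$. For each intermediate scale $i < \ell$, perform a bounded BFS from every vertex of $D_i$ up to depth $O(s_{i+1})$ in a suitably sparsified subgraph (keeping only edges incident to vertices of bounded degree, as in DHZ). The standard DHZ charging argument across the geometrically-spaced scales bounds the total combinatorial cost by $\tilde O(\ell \cdot n^2 q) = \tilde O(n^{2+(1-r)\cdot 2/(k+2)})$, which is the first term.

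At the top scale, I would form an $n \times n^r$ matrix $A$ whose $(u,s)$-entry is an estimate of $d(u,s)$ obtained from the intermediate BFSes (taking the shortest detour through a nearest lower-scale landmark when $u$ is not reached directly), and compute $A \star A^\top$ by a $(1+\epsilon)$-approximate rectangular min-plus product. Using Zwick's reduction of bounded-entry min-plus to Boolean matrix multiplication, applied to a rectangular Boolean product of shape $n \times n^r$ by $n^r \times n$, this step takes $\tilde O(n^{\omega(r)}/\epsilon)$ time, giving the second term. The final estimate for each pair is the minimum of the intermediate-scale estimate and the top-scale product estimate.

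For the stretch analysis I would adapt the DHZ induction: for a pair $(u,v)$ with true distance $d$, let $i^\ast$ be the smallest scale at which $D_{i^\ast}$ intersects the shortest path within distance $s_{i^\ast}$; summing the $O(s_{i^\ast})$ contributions across the $\ell$ scales gives the additive $+k$, while the $(1+\epsilon)$-rounding of the rectangular product contributes the multiplicative $(1+\epsilon)$. The main obstacle I anticipate is ensuring that these two error sources compose to give exactly $(1+\epsilon, k)$, rather than the weaker $(1+\epsilon)d + (1+\epsilon)k$; the natural fix is to use the rectangular product only for pairs whose shortest path reaches into the top scale, so that short pairs enjoy a purely additive $+k$ guarantee and never see the multiplicative rounding, while long pairs have $d \ge s_\ell$ so that the additive $k$ can be absorbed without blow-up.
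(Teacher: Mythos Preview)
Your high-level intuition—interpolate between DHZ and Zwick's approximate min-plus via a size-$n^r$ top hitting set—is right, but your execution diverges from the paper in a way that creates real problems.

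The paper does \emph{not} re-engineer DHZ's internal scale hierarchy. Instead it uses a single degree threshold $n^{1-r}$ to split into two disjoint cases. If the shortest $u$--$v$ path consists entirely of vertices of degree $\le n^{1-r}$, it lies in a subgraph $G'$ with $O(n^{2-r})$ edges, and running DHZ \emph{as a black box} on $G'$ gives a pure $+k$ estimate in time $\tilde O\bigl(n^{2-2/(k+2)}(n^{2-r})^{2/(k+2)}\bigr)=\tilde O\bigl(n^{2+(1-r)\cdot 2/(k+2)}\bigr)$. If instead the path contains some vertex of degree $>n^{1-r}$, then (over $O(\log n)$ degree levels) one computes \emph{exact} BFS distances from a hitting set $S_i$ of size $\tilde O(n^r)$ in a graph with $\tilde O(n\cdot 2^{i+1})$ edges (total $\tilde O(n^2)$), and then does a single $(1+\epsilon)$-approximate rectangular min-plus product on the resulting $n\times|S_i|$ matrix to get a $(1+\epsilon,2+2\epsilon)$ estimate. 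Because the two cases are disjoint, the $+k$ and the $(1+\epsilon)$ errors never stack; the final output is the minimum of the two estimates and is $(1+\epsilon,k)$ for every pair.

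Your proposal, by contrast, tries to keep the full DHZ ladder and swap only the top rung for a matrix product. Two concrete issues arise. First, the entries of your matrix $A$ are supposed to be estimates of $d(u,s)$ for $s\in D_\ell$ ``obtained from the intermediate BFSes''—but in DHZ the intermediate BFSes \emph{consume} the top-level distances (they run in sparsified graphs augmented with weighted edges to $D_\ell$), so populating $A$ this way is circular, and without the top-level BFS you have no exact distances to feed in. Second, even if you populate $A$ with $+O(k)$-approximate entries, the $(1+\epsilon)$ product then multiplies these, giving $(1+\epsilon)(d+O(k))$; your fix of separating ``short'' from ``long'' pairs by whether the path ``reaches into the top scale'' is exactly the right instinct, but the clean realization is to separate by \emph{degree on the path}, not by distance, and then run DHZ wholesale on the sparse side rather than slicing into its internals. (A smaller point: DHZ's hitting sets hit large-degree \emph{neighborhoods}, not balls of a given radius; your ``ball of radius $s_i$'' description does not match the construction you need.)
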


As an important special case, we get a $(1+\epsilon,2)$-approximation in $O(n^{2.152})$ time when $1/\epsilon=n^{o(1)}$. 
While the running time is higher compared to our multiplicative 2-approximation, it improves over previous additive or near-additive approximation algorithms. As mentioned above, currently the fastest algorithm for a +2-approximation takes $O(n^{2.260})$ time \cite{durr2023improved}, and it exploits fast matrix multiplication. The fastest combinatorial +2-approximation algorithm takes  $\tilde{O}(\min\{n^{3/2}m^{1/2}, n^{7/3} \})$ time \cite{DHZ00}. If we consider near-additive approximation algorithms,  Berman and Kasiviswanathan showed a $(1+\epsilon,2)$-approximation in $n^{2.24+o(1)}$ time based on fast matrix multiplication \cite{berman2007faster}.

For $k \to \log n$, our running time goes to $\tilde O(n^2)$. In particular, when $1/\epsilon=n^{o(1)}$ we obtain 
a $(1+\epsilon,4)$-approximation in $O(n^{2.119})$ time, a $(1+\epsilon,6)$-approximation in $O(n^{2.098})$ time, and a $(1+\epsilon,8)$-approximation in $O(n^{2.084})$ time. The state of the art for these approximations is $O(n^{2.2})$, $O(n^{2.125})$, and $O(n^{2.091})$~\cite{DHZ00} respectively, using the purely additive algorithm by Dor, Halperin, and Zwick \cite{DHZ00} that gives a deterministic $+k$-approximation for APSP in $\tilde O(\min\{ n^{2-\tfrac{2}{k+2}}m^{\tfrac{2}{k+2}},n^{2+\tfrac{2}{3k-2}}\})$ time, for even $k>2$. See also \autoref{table_near_additive} in \autoref{sec:near_additive} for comparison of our results and prior work. 
We remark that the algorithm of \cite{DHZ00} becomes faster than ours once the additive term is at least 10. 
Our work shows that for smaller additive terms, a $(1+\epsilon,\beta)$-approximation can be obtained faster than a purely additive $+\beta$-approximation. To the best of our knowledge, previously such results were known only for the special case of $\beta=2$ \cite{berman2007faster}. 

We remark that except \cite{berman2007faster} previous algorithms for near-additive approximations \cite{cohen2000polylog,elkin2005computing,DBLP:conf/swat/ElkinGN22} had larger additive terms compared to the ones we study here.
In particular, Elkin, Gitlitz and Neiman \cite{DBLP:conf/swat/ElkinGN22} showed an algorithm that computes a $(1+\epsilon,\beta)$-approximation in $\tilde{O}(mn^{1/k}+n^{2+1/{(2^k-1)}})$ time, where $\beta=O(k/\epsilon)^{O(k)}$. While $\beta$ is a constant when $\epsilon$ and $k$ are constants, it is larger compared to the constants we consider here. 



\paragraph{Independent work} 
Independently, Saha and Ye~\cite{SahaY23} achieved similar results. They also obtain $2$-approximate APSP for unweighted graphs in $O(n^{2.032})$ time ({\autoref{thm:two_apx}}), moreover they give a deterministic algorithm for this problem in $O(n^{2.073})$ time. In the (near) additive regime (\autoref{thm:near_additive}), they both give faster running times and do not incur the multiplicative $(1+\epsilon)$ error. They include additional results for additive approximations in weighted graphs, but do not have our results for $2$-approximate APSP on weighted graphs (\autoref{thm:two_apx_wghted_par} and \autoref{thm:static_2}).

\subsection{Technical Overview}

\subsubsection*{Unweighted 2-Approximate APSP}

We start by describing our 2-approximation algorithm for APSP in unweighted undirected graphs running in $O(n^{2.032})$ time.
At a high-level, we divide all the shortest paths in the graph into 2 types: the \emph{sparse} paths and the \emph{dense} paths. A path is sparse if the degrees of all vertices in the path are at most $\sqrt{n}$, and it is dense otherwise. 

\paragraph{Dealing with sparse paths.} In order to compute the distances between all pairs of vertices $u,v$ where the shortest path between $u$ and $v$ is sparse we use the following approach. All the sparse paths are contained in a subgraph $G'=(V,E')$ of the input graph $G=(V,E)$ obtained by taking all edges adjacent to vertices of degree $\leq \sqrt{n}$. Note  that this graph has only $O(n^{3/2})$ edges. To estimate the distances, we run the algorithm of Baswana and Kavitha \cite{BaswanaK10} on the graph $G'$, and exploit the fact that this algorithm is efficient for sparse graphs. This gives a 2-approximation for the distances in the sparse graph $G'$ in $\tilde{O}(|E'|\sqrt{n}+n^2)=\tilde{O}(n^2)$ time.

\paragraph{Dealing with dense paths in $\tilde{O}(n^{2.5})$ time.} We are left with dense shortest paths, i.e. paths that have at least one vertex with degree larger than $\sqrt{n}$. As a warm-up, we start by describing a simple algorithm that obtains +2-approximations for the lengths of all these paths in $\tilde{O}(n^{2.5})$ time, following the classic algorithm of Aingworth, Chekuri, Indyk and Motwani \cite{AingworthCIM99}, and later we explain how we get a faster algorithm. We denote by $H$ the high-degree vertices that are vertices with degree larger than $\sqrt{n}$. We start by computing a \emph{hitting set} $S$, a small set of vertices such that each vertex in $H$ has a neighbor in $S$. It is easy to find a hitting set of size $\tilde{O}(\sqrt{n})$, for example by adding each vertex to the set with probability $\tilde{O}(1/\sqrt{n})$. Since each high-degree vertex has degree at least $\sqrt{n}$, with high probability all high-degree vertices will have neighbors in $S$. The algorithm then proceeds as follows.

\begin{enumerate}
\item We compute distances from $S$ to all other vertices.\label{step_mssp}
\item We set $\delta(u,v)=\min_{a \in S} \{ d(u,a)+d(a,v) \}.$\label{step_estimate}
\end{enumerate}

If the shortest path between $u$ and $v$ is dense, then after Step \ref{step_estimate} the value $\delta(u,v)$ is a +2-approximation for $d(u,v)$. The reason is that there exists a high-degree vertex $x$ in the shortest $u-v$ path, and $x$ has a neighbor $a \in S$. Then $d(u,a)+d(a,v) \leq d(u,x)+d(x,a)+d(a,x)+d(x,v)=d(u,v)+2$. See \autoref{stretch_pic} for illustration.
Hence by computing distances from $S$ to all other vertices, we can get an additive +2-approximation for all dense paths.

\setlength{\intextsep}{4pt}
\begin{figure}[h]
\centering
\setlength{\abovecaptionskip}{0pt}
\setlength{\belowcaptionskip}{8pt}
\includegraphics[scale=0.6]{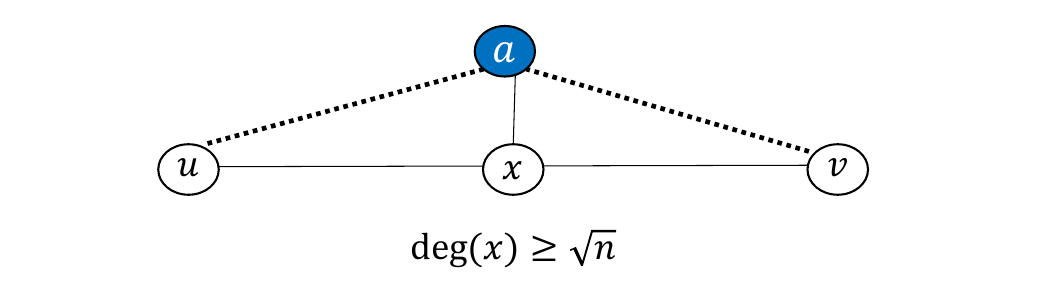}
 \caption{Illustration of the stretch analysis.}
\label{stretch_pic}
\end{figure}

The running time is $\tilde{O}(n^{2.5})$. First, computing distances from all vertices in $S$ takes $O(|S|m)=\tilde{O}(n^{2.5})$ time since $|S|=\tilde{O}(\sqrt{n}), m=O(n^2)$ and computing the distances from one vertex in $S$ takes $O(m)$ time by computing a BFS tree. Second, in Step \ref{step_estimate}, for each one of the $n^2$ pairs of vertices we compute distances through all possible vertices in $S$ which takes $\tilde{O}(n^{2.5})$ time.

\paragraph{Implementing Step \ref{step_mssp} faster.} Our goal is to implement the above approach faster. First, note that Step \ref{step_mssp} takes $O(|S|m)$ time. To obtain a faster algorithm, our goal is to run this step on a graph where $O(|S|m)=\tilde{O}(n^2)$. To do so, we divide the dense paths to $O(\log{n})$ different levels. We say that a path is $2^i$-dense if the maximum degree of a vertex in the path is between $[2^i,2^{i+1}]$. Since $2^i$-dense paths have a vertex with degree at least $2^i$, we can find a hitting set $S_i$ of size $\tilde{O}(n/2^i)$ such that each $2^i$-dense path will have a neighbor in $S_i$. Our goal is to repeat the above algorithm but on a sparser graph. Let $G_i=(V,E_i)$ be a subgraph 
of $G$ that has all edges adjacent to vertices of degree at most $2^{i+1}$. To deal with $2^i$-dense paths we work as follows.

\begin{enumerate}
\item We compute distances from $S_i$ to all other vertices in the graph $G_i$.\label{step_mssp_i}
\item We set $\delta(u,v)=\min_{a \in S_i} \{ \delta(u,a)+\delta(a,v) \}.$\label{step_estimate_i}
\end{enumerate}

In Step \ref{step_estimate_i}, the distance estimates $\delta(u,a),\delta(a,v)$ are the estimates computed in Step \ref{step_mssp_i}.
By definition, all the vertices of the $2^i$-dense paths and their edges to their neighbors are included in the graph $G_i$, so it is enough to compute distances in this graph in order to obtain +2-approximation for the lengths of $2^i$-dense paths. Since we worked on a sparser graph the running time for computing the BFS trees in level $i$ is now $O(|S_i||E_i|)=\tilde{O}((n/2^i) \cdot n  2^{i+1})=\tilde{O}(n^2)$. Summing up over all $O(\log{n})$ levels gives $\tilde{O}(n^2)$ running time for computing the BFS trees. After this step, we are guaranteed that if the shortest path between $u$ and $v$ is $2^i$-dense, then there is a vertex $a \in S_i$ such that $\delta(u,a)+\delta(a,v)\leq d(u,v)+2$, where $\delta(u,a),\delta(a,v)$ are the distances computed from $a$ in $G_i$.

\paragraph{Implementing Step \ref{step_estimate} faster.} By now we have computed all the relevant distances from the sets $S_i$, there is a remaining challenge. In order to estimate the distance of each pair $u$ and $v$ we need distance estimates going through all possible vertices $a \in S_i$ (Step \ref{step_estimate}), which takes $\tilde{O}(n^{2.5})$ time for all pairs, as in the worst case $|S_i|=\tilde{O}(\sqrt{n})$. 
To implement this step faster, we exploit fast matrix multiplication. Note that Step \ref{step_estimate} is essentially equivalent to matrix multiplication when the operations are minimum and plus, such multiplications are called \emph{distance products}. While it is well-known that APSP can be computed using matrix multiplication, usually it requires multiplication of square matrices which takes $\tilde{O}(n^{\omega})$ time. The trick in our case is that since we only want to compute distances through a small set $S_i$, we can use fast \emph{rectangular} matrix multiplication, as first exploited by Zwick~\cite{Zwick02}. Since $|S_i|=\tilde{O}(\sqrt{n})$ we only need to multiply matrices with dimensions $n \times n^{0.5}$ and $n^{0.5} \times n$, which can be done in just $O(n^{2.045})$ time using the rectangular matrix multiplication algorithms by Le Gall and Urrutia \cite{GallU18}. 
Fast matrix multiplication algorithms do not directly apply to distance products, however there is a well-known reduction that shows that we can get $(1+\epsilon)$-approximation for distance products in the same time \cite{Zwick02}, see \autoref{sec:prelim} for the details.

\paragraph{Conclusion.} Using the ideas described we can get a $(2+\epsilon)$-approximation in $O(n^{2.045})$ time. To remove the $\epsilon$ term in the stretch, we exploit the fact that in unweighted graphs we can get an additive $+\log{n}$-approximation in $\tilde{O}(n^2)$ time \cite{DHZ00}. Note that for pairs of vertices at distance larger than $\log{n}$, an additive $+\log{n}$-approximation is already a multiplicative 2-approximation. Hence we can focus our attention on pairs of vertices at distance at most $\log{n}$ from each other. We show that this allows us to turn any $(2+\epsilon)$-approximation for unweighted graphs to a 2-approximation, as long as the algorithm depends polynomially on $1/\epsilon$ (see \autoref{sc:2eps_to_2} for the details).

 Moreover, we can improve the running time to $O(n^{2.032})$ by a better balancing of our two approaches, for dealing with sparse and dense paths. In particular, in our discussion so far we defined sparse paths to be ones where the maximum degree is at most $\sqrt{n}$, which led to hitting sets of size $\tilde{O}(\sqrt{n})$. To obtain a faster algorithm we want to have a smaller hitting set of size $O(n^r)$ for an appropriate choice or $r$, and then the sparse paths are paths where the maximum degree is at most $O(n^{1-r})$. With these parameters, computing distances in the sparse graph using \cite{BaswanaK10} takes $\tilde{O}(n^{2.5-r})$ time, while dealing with dense paths takes $\tilde{O}(n^{\omega(r)})$ time. Balancing these two terms gives an $O(n^{2.032})$ time algorithm. Full details appear in \autoref{sec:unweighted_APSP}. 

 We remark that we can also implement Step \ref{step_mssp} using the $(1+\epsilon)$-approximate multi-source shortest paths algorithm by Elkin and Neiman \cite{ElkinN22} that is based on fast rectangular matrix multiplication. This allows computing $(1+\epsilon)$-approximate distances from $\tilde O(n^{r})$ sources in $\tilde{O}(n^{\omega(r)})$ time, which leads to the same overall running time. If we do so we can have only one set $S$ as in the original description of the algorithm. In our paper we implement Step~\ref{step_mssp} using the combinatorial algorithm discussed above that takes $\tilde{O}(n^2)$ time, which shows that currently the bottleneck in the algorithm is Step~\ref{step_estimate}, and other parts of the algorithm can be computed in $\tilde{O}(n^2)$ time. 

\paragraph{A combinatorial algorithm.} Our approach also leads to a simple combinatorial 2-approximation algorithm that takes $\tilde{O}(n^{2.25})$ time. To do so, we just change the threshold of sparse and dense paths. We say that a path is sparse if all the vertices in the path have degree at most $n^{3/4}$, and it is dense otherwise. Computing 2-approximations for sparse paths will now take $\tilde{O}(n^{2.25})$ time by \cite{BaswanaK10}. In the dense case, since the dense paths now have a vertex of degree at least $n^{3/4}$, we can construct a smaller hitting set $S$ of size $\tilde{O}(n^{1/4})$, and then we can implement Steps \ref{step_mssp} and \ref{step_estimate} directly via a combinatorial algorithm in $\tilde{O}(n^{2.25})$ time (by the same approach described above but replacing the size of $S$ with $\tilde{O}(n^{1/4})$). See \autoref{sec:combinatotial} for the details.

We remark that Roditty \cite{Roditty23} recently obtained the same result (a combinatorial 2-approximation in $\tilde{O}(n^{2.25})$ time) via a different approach. At a high-level, his approach is based on a detailed case analysis of the $+4$-approximation algorithm of Dor, Halperin, and Zwick~\cite{DHZ00}, showing that for close-by pairs a better approximation can be obtained.

\paragraph{Near-additive approximations.} We can use the same approach also in order to obtain near-additive approximations. Note that in the dense case, our algorithm actually computed a $(1+\epsilon,2)$-approximation, where the $(1+\epsilon)$ term comes from using fast matrix multiplication. Hence, if on the sparse graph we run the $+k$-additive approximation algorithm by Dor, Halperin, and Zwick~\cite{DHZ00} instead of the multiplicative 2-approximation algorithm of Baswana and Kavitha~\cite{BaswanaK10}, we can get $(1+\epsilon,k)$-approximations for all the distances. See \autoref{sec:near_additive} for the details. 

\subsubsection*{Weighted 2-Approximate APSP}
The techniques above do not generalize well to the weighted setting. For weighted graphs, we use a different approach, based on the set-up of \emph{bunches} and \emph{clusters} as introduced by Thorup and Zwick~\cite{TZ2005}. For a parameter $p\in [\tfrac{1}{n},1]$ we sample each vertex with probability $p$, and if sampled we add it to a set $S$. With high probability, we have $|S|=\tilde O(pn)$. Now, for each vertex $u\in V$, we define the pivot of $u$ to be the closest vertex in $S$ to $u$, i.e., $p(u)$ is an arbitrary vertex in the set $ \{v\in S : d(u,v)=d(u,S) \}$, and we define the bunch of $u$ by $B(u):= \{v\in V : d(u,v) < d(u,p(u))\} \cup \{p(u)\}$. For a vertex $v\in V$, we define the cluster of $u$ as the inverse bunch: $C(v) := \{u\in V : v\in B(u)\}$. Thorup and Zwick~\cite{TZ2005} show how to compute pivots, bunches, clusters, and distances $d(u,v)$ for all $u\in V,v\in B(u)$ in time $\tilde O(\tfrac{m}{p})$. Moreover, in follow up work~\cite{TZ01} they show that with different techniques, that is, a more involved construction of $S$, we can have that both bunches and clusters are bounded by $\tilde O(\tfrac{1}{p})$ with high probability. The running time remains $\tilde O(\tfrac{m}{p})$.

\paragraph{Warm-up: $3$-approximate APSP.} 
To see how we use this to compute approximate shortest paths, we consider a pair of vertices $u,v\in V$. If either $v\in B(u)$ or $u\in B(v)$, we have the exact distance, so assume this is not the case. In particular if $v\notin B(u)$, then $d(u,p(u))\leq d(u,v)$. We compute shortest paths from $S$ to $V$, in particular obtaining $d(p(u),v)$. With this additional information we get a $3$-approximation almost directly~\cite{TZ2005}:
\begin{align*}
    d(u,p(u))+d(p(u),v) \leq d(u,p(u)) +d(p(u),u)+d(u,v) \leq 3d(u,v),
\end{align*}
where the first inequality holds by the triangle inequality. 

\paragraph{$2$-approximate APSP.}
By a closer inspection, we can improve this analysis to a $2$-approximation for each pair of vertices, whose shortest path interacts in a particular way with the bunches. To be precise, let $u,v\in V$ be a pair of vertices and let $\pi$ be a shortest path between them. Suppose it contains a vertex $x$, such that $x\notin B(u)$ and $x\notin B(v)$, see the right case in \autoref{fig:adjacent}. That means that $d(u,p(u))\leq d(u,x)$ and $d(v,p(v))\leq d(x,v)$, so at least one of the two is at most $d(u,v)/2$. Without loss of generality, say that $d(u,p(u))\leq d(u,v)/2$. Hence, after computing shortest paths from $S$, we can obtain a $2$-approximation as follows: $d(u,p(u))+d(p(u),v)\leq d(u,p(u))+d(u,p(u))+d(u,v)\leq 2d(u,v)$, where the first inequality holds by the triangle inequality. 

\begin{figure}
    \centering
    \includegraphics[width=\textwidth]{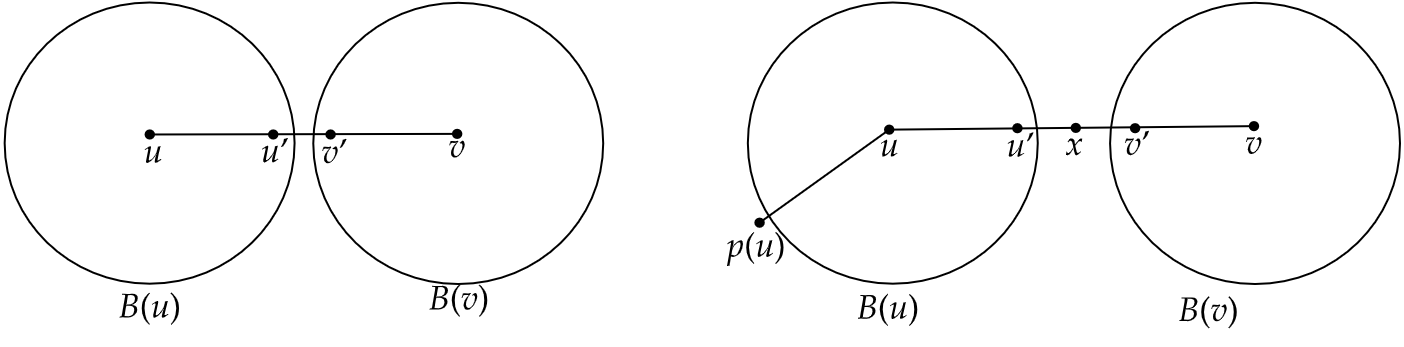}
    \caption{Two different possible interactions between the shortest path between $u$ and $v$, and the bunches of $u$ and $v$.}
    \label{fig:adjacent}
\end{figure}

It remains to give an algorithm that guarantees a $2$-approximation for the case that for every vertex~$x$ on the shortest path~$\pi$ we have $x\in B(u)$ or $x\in B(v)$, the left case in \autoref{fig:adjacent}. Note that this case also contains the special case that the bunches overlap. We will refer to it as the `adjacent case', since the two bunches $B(u)$ and $B(v)$ are adjacent in the sense that they are connected by an edge. In previous work, this adjacent case is often a bottleneck in the running time, and dealt with in various ways. As seen above, by not distinguishing it at all, we obtain a $3$-approximation~\cite{TZ2005}. 
By only considering the cases where the bunches have at least one vertex in common, Baswana, Goyal, and Sen~\cite{BaswanaGS09} obtain a $(2,1)$-approximation. In \autoref{app:static2W} we generalize this result to $(2,W_{u,v})$-approximate APSP in weighted graphs, where $W_{u,v}$ is the maximum weight of an edge on a shortest $u-v$ path. For a $2$-approximation, Kavitha~\cite{Kavitha12} and Baswana and Kavitha~\cite{BaswanaK10} each use a multilevel approach, the latter of which we detail later. More recently, in distributed~\cite{censor2021fast,dory2022exponentially} and dynamic~\cite{DoryFNV22} $2$-approximate APSP algorithms, the adjacent case is computed explicitly. To be precise, they compute 
$$\delta_{\rm{adjacent}}(u,v) = \min\{ d(u,u')+w(u',v')+d(v',v) : \{u',v'\} \in E, u'\in B(u), v'\in B(v)\},$$ 
which in the case that $\pi \subseteq B(u)\cup B(v)$ returns the exact distance between $u$ and $v$. Using these algorithms directly does not lead to fast algorithms in our, centralized, setting, since they are tailored for their respective models. 
Our work is inspired by this approach, and gives two novel ways to compute $\delta_{\rm{adjacent}}(u,v)$; one for sparse graphs and one for dense graphs. 

\paragraph{Sparse case.}
Rather than fixing $u,v \in V$ and trying to compute $\delta_{\rm{adjacent}}(u,v)$, we fix an edge $\{u',v'\} \in E$, and see for which pairs $u,v \in V$ this contributes to $\delta_{\rm{adjacent}}(u,v)$. More precisely, by definition of bunches and clusters we have $u'\in B(u)\iff u\in C(u')$, so we can compute $\delta_{\rm{adjacent}}$ as follows: for all $\{u',v'\} \in E$, for all $u\in C(u')$, and for all $v\in C(v')$:
\begin{enumerate}
    \item Initialize $\delta_{\rm{adjacent}}(u,v) \leftarrow d(u,u')+w(u',v')+d(v',v)$ if no such entry exists. \label{step:adj_init_intro}
    \item Otherwise: $\delta_{\rm{adjacent}}(u,v)\leftarrow \min\{ \delta_{\rm{adjacent}}(u,v), d(u,u')+w(u',v')+d(v',v)\}$.\label{step:adj_update_intro}
\end{enumerate}
    We note that Step~\ref{step:adj_init_intro} and~\ref{step:adj_update_intro} can both be done in constant time, so computing $\delta_{\rm{adjacent}}$ takes time
\[ \sum_{\{u',v'\}\in E} \sum_{u\in C(u')} \sum_{v\in C(v')} O(1) = \sum_{\{u',v'\}\in E} O(|C(u')|\cdot |C(v')|) = \sum_{\{u',v'\}\in E} \tilde O(\tfrac{1}{p^2}), \]
where the last equality holds since clusters have size at most $\tilde O(\tfrac{1}{p})$. This means it takes $\tilde O(\tfrac{m}{p^2})$ time in total to compute $\delta_{\rm{adjacent}}$. 

Together with the running time for computing bunches and clusters, $\tilde O(\tfrac{m}{p})$, and the running time for computing shortest paths from $S$, $\tilde O(|S|m)=\tilde O(pnm)$ using Dijkstra, we obtain $\tilde O(\tfrac{m}{p^2}+pnm)$. For $p=n^{-1/3}$, we obtain running time $\tilde O(mn^{2/3})$.

We note that for each pair of vertices $u,v\in V$, we still have to take the minimum between the estimate through the pivot, $d(u,p(u))+d(p(u),v)$, and $\delta_{\rm{adjacent}}(u,v)$. Doing this explicitly would take $n^2$ time. Instead, we provide a distance oracle, and perform this minimum in constant time when the pair $u,v$ is queried.

\paragraph{Dense case.} 
We adapt our algorithm in two ways for the dense case. We use a different approach to compute $\delta_{\rm{adjacent}}$, and we compute shortest paths from the set of pivots $S$ differently. 

First, we show how to compute $\delta_{\rm{adjacent}}$ in $\tilde O(\tfrac{n^2}{p})$ time. For each node $u$, we run Dijkstra twice on a graph with $\tilde O(\tfrac{n}{p})$ edges, whose size comes from the fact that for each node the bunches have size $\tilde O(\tfrac{1}{p})$. On the first graph we obtain estimates from $u$ to $v'$ for every $v'$ that neighbors the bunch of $u$, i.e., $\exists u'\in B(u)$ such that $\{u', v'\} \in E$. And on the second graph we obtain estimates $\delta_{\rm{adjacent}}(u,v)$ for all $v\in V$. For more details see \autoref{sc:weighted_dense}.

Second, for computing shortest paths from $S$, we can do something (much) more efficient than computing multiple Dijkstra's by using recent results on approximate multi-source shortest paths (MSSP). Elkin and Neiman~\cite{ElkinN22} provide efficient $(1+\epsilon)$-MSSP results using rectangular matrix multiplication. For example, we can let the number of pivots be as big as $n^{0.8}$, while the running time stays below $O(n^{2.23})$. This means that the sizes of bunches drop dramatically to $\tilde O(n^{0.2})$, making the above very efficient. To be more precise, we need to balance the running time to compute $\delta_{\rm{adjacent}}$, $\tilde O(\tfrac{n^2}{p})$, with the running time to compute shortest paths from $S$, which has size $\tilde O(pn)$. If we use Dijkstra for the latter (for a graph with $m=n^2)$, we need to balance $\tfrac{n^2}{p}$ and $pn^3$, obtaining running time $\tilde O(n^{2.5})$ for $p=1/\sqrt{n}$.

To see how this trade-off improves using~\cite{ElkinN22}, we denote $p=n^{r-1}$. Now we have $|S|=\tilde O(pn)=\tilde O(n^r)$, and we can compute $(1+\epsilon)$-approximate shortest paths from $S$ in $\tilde O(n^{\omega(r)})$ time. 
We obtain total running time $\tilde O(\tfrac{n^2}{p}+n^{\omega(r)})=O(n^{2.214})$, using \cite{GallU18} for an upper bound on~$\omega(r)$.

We note that the $(1+\epsilon)$-factor carries over to our stretch analysis, making it a $(2+\epsilon)$-approximation, see \autoref{lm:weighted_correctness}.

\paragraph{A Density Sensitive Algorithm.}
Next, we describe how we can generalize the ideas from the dense case to a wider density range. Our goal is to combine our approach from the dense case with the $2$-approximate APSP algorithm of Baswana and Kavitha~\cite{BaswanaK10}. Similar to the dense case, they create a set of pivots $S$ of size $\tilde O(pn)$. They compute shortest paths from $S$ to $V$ using Dijkstra in $\tilde O(pnm)$ time, and use this for an estimate through the pivot.

Baswana and Kavitha~\cite{BaswanaK10} do not consider the adjacent case explicitly, but consider $O(\log n)$ additional levels, each with a gradually growing set of pivots $S_i$. For each of those sets, they compute shortest paths in a sparser graph, where the distances to some essential vertices equal the distances in the original graph. They show that on at least one of the levels, a distance estimate through the pivot of that level gives a $2$-approximation. For details we refer to \autoref{sc:parBK}.

By computing shortest paths on the sparser graph, they avoid the expensive computation of shortest paths from $S_i$ to all of $V$. Instead, for each level, they require $\tilde O(m/p)$ time to construct the sparser graph, and $\tilde O(n^2)$ time to compute shortest paths from $S_i$. 
Combining this with the first step, their algorithm takes $\tilde{O} (pnm + m/p+n^2)$ time. Setting $p=1/\sqrt{n}$ balances the terms and gives $\tilde O(m\sqrt n+ n^2)$ time. 

We modify their algorithm in two ways. First of all, we remark that parameterizing the size of the (smallest) set of pivots $S$, by $|S|=\tilde O(pn)$ already gives us the following result, which allows us to get a better trade-off. 

\begin{restatable}{theorem}{ThmParBK}\label{thm:ParBK}
    There exists a randomized algorithm that, given an undirected graph with non-negative edge weights $G=(V,E)$ and parameters $p\in (\tfrac{1}{n},1]$, $\epsilon\geq 0$, computes $(2+\epsilon)$-approximate APSP. With high probability, the algorithm takes $\tilde O(n^2+m/p+T(\tilde O(pn)))$ time, where $T(s)$ is the time to compute $(1+\epsilon)$-MSSP from $s$ sources.
\end{restatable}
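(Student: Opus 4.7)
The plan is to parameterize the algorithm of Baswana and Kavitha~\cite{BaswanaK10} by the sampling rate $p$ of the base pivot set, and to replace their Dijkstra computations from that set by a single call to the $(1+\epsilon)$-approximate MSSP subroutine whose running time is $T(\cdot)$. Concretely, I would first include each vertex of $V$ independently in $S$ with probability $p$; by a Chernoff bound, $|S|=\tilde{O}(pn)$ with high probability. Using the Thorup--Zwick construction, in $\tilde{O}(m/p)$ time I compute, for every $u\in V$, the pivot $p(u)\in S$, the bunch $B(u)$ (of size $\tilde{O}(1/p)$ w.h.p.), and the exact distance $d(u,v)$ for every $v\in B(u)\cup\{p(u)\}$. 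Then a single call to the MSSP subroutine on source set $S$ yields, in time $T(\tilde{O}(pn))$, estimates $\tilde{d}(s,v)$ satisfying $d(s,v)\le \tilde{d}(s,v)\le (1+\epsilon)\,d(s,v)$ for all $s\in S$ and $v\in V$; this produces the ``through the pivot'' estimate $\tilde{d}(u,p(u))+\tilde{d}(p(u),v)$ for every pair.

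Next, I would install the multi-level cascade from~\cite{BaswanaK10}: construct a nested chain $S=S_0\subseteq S_1\subseteq\cdots\subseteq S_L$ of pivot sets (with $L=O(\log n)$ and $|S_i|=\tilde{O}(\min(2^i pn,n))$), and, for each level $i$, build their sparsified graph $G_i$ in which only edges inside each vertex's neighborhood up to its bunch against $S_i$ are retained. On $G_i$ one runs a single-source shortest paths computation from every source in $S_i$; the total work over all levels is $\tilde{O}(n^2)$ by the telescoping argument of~\cite{BaswanaK10}, and the construction of all $G_i$'s contributes a further $\tilde{O}(m/p)$. For a queried pair $(u,v)$, the returned distance estimate $\delta(u,v)$ is the minimum over the through-pivot estimate from $S$, the exact lookup when $v\in B(u)$ or $u\in B(v)$, and the per-level estimates provided by each $S_i$.

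For correctness, I would invoke the Baswana--Kavitha case analysis essentially verbatim: for every pair $(u,v)$ there exists some level $i$ on which one of the above estimates is at most $2\,d(u,v)$. The only quantitative change is that at level $0$ the distances used are $(1+\epsilon)$-approximate rather than exact, which inflates that case's stretch by a factor of $(1+\epsilon)$; every other level remains exact. Hence the overall stretch is at most $2(1+\epsilon)$, which we rescale to $(2+\epsilon)$ by adjusting $\epsilon$ by a constant factor. The main obstacle is to check that the~\cite{BaswanaK10} argument uses $p=1/\sqrt{n}$ only incidentally and that their inductive proof depends on $p$ solely through the bounds $|S_0|=\tilde{O}(pn)$ and $|B(u)|,|C(u)|=\tilde{O}(1/p)$; granted this $p$-uniform reading, the parameterization is automatic. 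Summing over phases yields the claimed running time $\tilde{O}(n^2+m/p+T(\tilde{O}(pn)))$.
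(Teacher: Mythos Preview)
Your proposal is correct and takes essentially the same approach as the paper: parameterize the Baswana--Kavitha multilevel scheme by the sampling rate $p$, run the cascade of Dijkstras on sparsified graphs in $\tilde O(n^2+m/p)$ time, and replace the full-graph shortest-path computation from the smallest pivot set by a single $(1+\epsilon)$-approximate MSSP call costing $T(\tilde O(pn))$. The only cosmetic differences are that the paper indexes the hierarchy in the opposite direction ($V=S_0\supseteq\cdots\supseteq S_k$ with $|S_k|=\tilde O(pn)$) and explicitly augments every level with the refined Thorup--Zwick pivot set of~\cite{TZ01} to guarantee $|C(u)|=\tilde O(1/p)$ with high probability---a property you invoke at the end but do not actually secure from plain i.i.d.\ sampling.
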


Secondly, we use the fast MSSP algorithm of Elkin and Neiman~\cite{ElkinN22} to compute the shortest paths from $S$ faster. If we write $p=n^{q-1}$, for some $r\in [0,1]$, then \cite{ElkinN22} gives $T(pn)=T(n^r)=\tilde O(m^{1+o(1)}+n^{\omega(r)})$. The total running time for $2$-approximate APSP is then $\tilde O(m n^{1-r}+n^{\omega(r)})$, for any $r\in [0,1]$. 
For $m=n^2$ this recovers our result, \autoref{thm:two_apx_wghted}, for dense graphs. See \autoref{thm:two_apx_wghted_par} and \autoref{table_weighted} for our results for graphs with different densities.

\subsection{Additional Related Work} 

\paragraph{Approximation between 2 and 3.} In addition to the 2-approximate APSP algorithms mentioned above, approximate APSP algorithms with approximations between 2 and 3 in weighted undirected graphs are studied in \cite{CohenZ01,BaswanaK10,Kavitha12,AkavR21}. In particular, \cite{CohenZ01,BaswanaK10} studied algorithms with approximation $7/3$, and \cite{Kavitha12} studied an algorithm with approximation $5/2$. These results were generalized by Akav and Roditty~\cite{AkavR21} who showed an algorithm with approximation  $ 2 + \tfrac{k-2}{k} $ and running time $ \tilde O (m^{2/k} n^{2 - 3/k} + n^2) $ for any $ k \geq 2 $.

\paragraph{Algorithms using fast matrix multiplication.}
Algorithms for fast rectangular matrix multiplication are studied in 
\cite{coppersmith1982rapid,lotti1983asymptotic,coppersmith1997rectangular,DBLP:journals/jc/HuangP98,ke2008fast,le2012faster,GallU18}, and have found numerous applications in algorithms (see e.g. \cite{Zwick02,roditty2011all, yuster2009efficient, yuster2004detecting, kaplan2007counting, kaplan2006colored, sankowski2010fast, williams2014finding,BackursRSWW21, van2019dynamic, van2022fast,bergamaschi2021new, williams2020monochromatic,DBLP:conf/icalp/GuR21}).

In the context of APSP, the state of the art algorithm by Zwick for computing APSP in directed graphs with bounds weights is based on rectangular matrix multiplication \cite{Zwick02}. In addition, Kavitha \cite{Kavitha12} used fast rectangular matrix multiplication as one of the ingredients in her $(2+\epsilon)$-approximation algorithm for weighted APSP. Additive +2-approximations for APSP based on fast matrix multiplication are studied in \cite{DengKRWZ22,durr2023improved}. The latter algorithms are based on Min-Plus product
of rectangular bounded difference matrices. The $(1+\epsilon,2)$-approximation of Berman and Kasiviswanathan \cite{berman2007faster} is also based on fast rectangular matrix multiplication.
In addition, Elkin and Neiman showed $(1+\epsilon)$-approximation for multi-source shortest paths based on rectangular matrix multiplication \cite{ElkinN22}.
For the problem of computing shortest paths for $S \times T$, for $|S|,|T|=O(n^{0.5})$, Dalirrooyfard, Jin, Vassilevska Williams, and Wein~\cite{DalirrooyfardJW22} provide a $2$-approximation in $\tilde O(m+n^{(1+\omega)/8})$ time for weighted graphs, by leveraging \emph{sparse} rectangular matrix multiplication.
Dynamic algorithms for shortest paths and spanners based on rectangular matrix multiplication are studied in \cite{van2019dynamic,van2022fast,bergamaschi2021new}, and an algorithm for approximating the diameter based on rectangular matrix multiplication is studied in \cite{BackursRSWW21}.

\subsection{Discussion} 

In this work we showed fast algorithms for 2-approximate APSP, many intriguing questions remain open.
First, our algorithm for 2-approximate APSP in unweighted undirected graphs takes $O(n^{2.032})$ time, and an interesting direction for future work is to try to obtain an $O(n^2)$ time algorithm for this problem.

Second, many of our algorithms are based on fast rectangular matrix multiplication, and it would be interesting to develop also fast combinatorial algorithms for these problems. Currently the fastest combinatorial algorithm for 2-approximate APSP in unweighted undirected graphs is the recent algorithm by \cite{Roditty23} that takes $\tilde{O}(n^{2.25})$ time. For weighted undirected graphs the fastest combinatorial 2-approximate APSP algorithm takes $\tilde{O}(m\sqrt{n}+n^2)$ time, which is $\tilde{O}(n^{2.5})$ for dense graphs, where we show a non-combinatorial $(2+\epsilon)$-approximation algorithm that takes $O(n^{2.214})$ time. Narrowing the gaps between the combinatorial and non-combinatorial algorithms, or proving conditional hardness results for combinatorial algorithms is an interesting direction for future research. We remark that such a gap exists also for the case of additive $+2$-approximations, where the fastest combinatorial algorithm takes $\tilde{O}(\min\{n^{3/2}m^{1/2}, n^{7/3} \})$ time \cite{DHZ00}, and the fastest non-combinatorial algorithm takes $O(n^{2.260})$  time \cite{durr2023improved}.

\section{Preliminaries} \label{sec:prelim}
Throughout, we use $n=|V|$ and $m=|E|$. When writing $\log$-factors, we round them to the closest integer.
We use $d_G(u,v)$ for the distance between $u$ and $v$ in $G$, where we omit $G$ if it is clear from the context. We write $\delta(u,v)$ for distance estimates between $u$ and $v$. 
We denote SSSP for the Single Source Shortest Path Problem, MSSP for the Multi Source Shortest Path Problem, and APSP for the All Pairs Shortest Path Problem. SSSP takes the source as part of the input, and MSSP takes the set of sources as part of the input. 
We say that an algorithm gives an $(\alpha,\beta)$-approximation for SSSP/MSSP/APSP if for any pair of vertices $u,v$ it returns an estimate $\delta(u,v)$ of the distance between $u$ and $v$ such that $d(u,v) \leq \delta(u,v) \leq \alpha \cdot d(u,v) + \beta$, where $d(u,v)$ is the distance between $u$ and $v$. If $\beta=0$, we get a purely multiplicative approximation that we refer to as $\alpha$-approximation. If $\alpha=1$, we get a purely additive approximation, and call it a $+\beta$-approximation. If $\alpha=1+\epsilon$, we refer to it as a near-additive approximation. 

All our randomized algorithms are always correct, and provide running time guarantees `with high probability', which means with probability $1-n^{-c}$ for any constant $c$. 

When we refer to the APSP problem, we are required to output all $n^2$ distances. If we drop this requirement, and only want to give a data structure subject to distance queries, we call it a \emph{distance oracle}. For distance oracles, there are three complexities to consider: preprocessing time, space, and query time. Both the preprocessing time and the required space can be less than $n^2$. In our algorithms, we provide constant query time. 

\paragraph{Rectangular Matrix Multiplication and Shortest Paths}
Let $A$ be an $n_1\times n_2$ matrix, and $B$ be an $n_2 \times n_3$ matrix, then we define the \emph{distance product}, also called \emph{min-plus product}, $A\star B$ by 
$$ (A\star B)_{ij} = \min_{1\leq k \leq n_2} \{A_{ik}+B_{kj}\},$$
for $1\leq i \leq n_1$ and $1\leq j \leq n_3$.

Moreover we say a matrix $A'\in \mathbb R^{n_1\times n_2}$ is a $(1+\epsilon)$-approximation of a matrix $A\in \mathbb R^{n_1\times n_2}$ if $A_{ij} \leq A'_{ij} \leq (1+\epsilon) A_{ij}$, for all $1\leq i \leq n_1$ and $1\leq j \leq n_2$. 

Distance product form a \emph{semiring}, i.e., a ring without guaranteed additive inverses. Although the results for fast matrix multiplication hold for \emph{rings}, it turns out they can be leveraged for distance products as well~\cite{AlonGM97}. In particular, we have the following result for approximate distance products.

\begin{theorem}[\cite{Zwick02}]\label{thm:RMM_Zwick}
    Let $W$ be a positive integer, and $\epsilon>0$ be a parameter. Let $A$ be an $n\times n^r$ matrix and $B$ be an $n^r \times n$ matrix, whose entries are all in $\{0,1, \dots, W\}\cup \{\infty\}$. Then there is an algorithm that computes a $(1+\epsilon)$-approximation to $A\star B$ in time $\tilde O(n^{\omega(r)}/\epsilon \log W)$.
\end{theorem}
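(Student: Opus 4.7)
My plan is to combine an exact sub-routine for bounded-entry min-plus products with a dyadic rescaling of the input entries.

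\textbf{Exact bounded sub-routine.} First I would prove that if the finite entries of two matrices of the given shape are non-negative integers at most $M$, then their min-plus product can be computed exactly in $\tilde O(M\cdot n^{\omega(r)})$ time via the classical polynomial trick of Alon--Galil--Margalit. Pick $x=n^r+1$ and form $A',B'$ over the integers with $A'_{ik}=x^{M-A_{ik}}$ (and $0$ when $A_{ik}=\infty$), similarly for $B'$. Compute the standard product $A'B'$; each entry equals $\sum_k x^{2M-A_{ik}-B_{kj}}$, and because there are at most $n^r<x$ summands, the term minimizing $A_{ik}+B_{kj}$ dominates all others combined. Hence $\min_k(A_{ik}+B_{kj})=2M-\lfloor\log_x(A'B')_{ij}\rfloor$. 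The scalars carry $O(M\log n)$ bits, so fast integer arithmetic introduces only a $\tilde O(M)$ overhead on top of the $n^{\omega(r)}$ ring operations.

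\textbf{Dyadic scaling.} For entries in $\{0,\ldots,W\}\cup\{\infty\}$, I would invoke the sub-routine on $L=\lceil\log_2(2W)\rceil$ rescaled instances. For $\ell=0,1,\ldots,L$ set $s_\ell=2^\ell$ and a rounding unit $\alpha_\ell=\epsilon s_\ell/4$, and construct $A^{(\ell)}$ by replacing every finite $a_{ik}\leq s_\ell$ with $\lceil a_{ik}/\alpha_\ell\rceil$ and every larger entry with $\infty$; analogously for $B^{(\ell)}$. The rounded matrices have integer entries bounded by $\lceil s_\ell/\alpha_\ell\rceil=O(1/\epsilon)$, so each call to the sub-routine costs $\tilde O(n^{\omega(r)}/\epsilon)$. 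The output estimate is
\[
\delta_{ij}\;=\;\min_{\ell}\;\alpha_\ell\cdot(A^{(\ell)}\star B^{(\ell)})_{ij},
\]
where the minimum ranges over scales on which the product is finite.

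\textbf{Correctness.} Fix $(i,j)$ with $d=(A\star B)_{ij}$ finite and optimal index $k^*$. Since ceiling rounding only inflates entries, $\alpha_\ell(A^{(\ell)}\star B^{(\ell)})_{ij}\geq d$ on every surviving scale, so $\delta_{ij}\geq d$. For the upper bound, take $\ell^*=\lceil\log_2 d\rceil$, so $s_{\ell^*}\in[d,2d]$ and both $A_{ik^*},B_{k^*j}\leq d\leq s_{\ell^*}$ are preserved. Each rounded pair overshoots the true sum by less than $2\alpha_{\ell^*}$, giving $\delta_{ij}\leq d+2\alpha_{\ell^*}=d+\epsilon s_{\ell^*}/2\leq (1+\epsilon)d$. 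Summing the sub-routine cost over the $O(\log W)$ scales, together with the linear pre- and post-processing per scale, yields the claimed $\tilde O(n^{\omega(r)}\log W/\epsilon)$ running time.

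\textbf{Main obstacle.} The only delicate point is the bit-complexity accounting in the bounded sub-routine: verifying that $x=n^r+1$ is simultaneously large enough to prevent the $n^r$ non-optimal summands from corrupting the leading base-$x$ digit, yet small enough that $O(M\log n)$-bit scalar arithmetic contributes only a $\tilde O(M)$ factor rather than a polynomial one. Once this is in place, choosing dyadic scales rather than $(1+\epsilon)$-geometric scales is the key bookkeeping move that saves a factor of $1/\epsilon$ in the number of scales while only doubling the per-entry slack --- which is precisely absorbed by setting $\alpha_\ell=\epsilon s_\ell/4$.
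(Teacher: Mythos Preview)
The paper does not give its own proof of this statement; it is simply quoted as a black-box result from \cite{Zwick02}, with the one-line remark that ``this algorithm is a deterministic reduction to rectangular matrix multiplication.'' Your proposal is a correct and faithful reconstruction of Zwick's original argument: the Alon--Galil--Margalit polynomial encoding for an exact bounded-entry min-plus product, followed by a logarithmic number of rescaled instances to handle the full range $\{0,\dots,W\}$. The correctness and running-time analysis you give are sound (including the bit-complexity accounting you flag as the main obstacle), so there is nothing to compare --- you have supplied the proof that the paper chose to cite rather than reproduce.
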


Here $\omega(r)$ denotes the time constant for rectangular matrix multiplication, i.e., the constant such that in $n^{\omega(r)}$ time we can multiply an $n\times n^r$ with an $n^r \times n$ matrix. 
This algorithm is a deterministic reduction to rectangular matrix multiplication, for which the state of the art~\cite{GallU18} is also deterministic. Note that \cite{GallU18} does not provide a closed form for $\omega(r)$. Throughout, we use the tool of van den Brand~\cite{balancer} to balance $\omega(r)$ with other terms to obtain our numerical results.

Backurs, Roditty, Segal, Vassilevska Williams, and Wein~\cite{BackursRSWW21} leverage these results to obtain multi-source approximate shortest paths from $\sqrt{n}$ sources, given that the distances are short. 
Elkin and Neiman~\cite{ElkinN22} show how to exploit rectangular matrix multiplication to compute distances from an arbitrary set of sources $S$.
\begin{theorem}[\cite{ElkinN22}]\label{thm:EN_MSSP}
    There exists a deterministic algorithm that, given a parameter $\epsilon>0$, an undirected graph $G=(V,E)$ with integer weights bounded by $W$ and a set of sources $S$ of size $|S|=O(n^r)$, computes $(1+\epsilon)$-approximate distances for $S\times V$ in $\tilde O(m^{1+o(1)}+n^{\omega(r)}(1/\epsilon)^{O(1)}\log W)$ time. 
\end{theorem}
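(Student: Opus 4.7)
My plan is to prove \autoref{thm:EN_MSSP} by combining three ingredients: Zwick's weight-scaling reduction, a near-additive emulator that bounds the effective hop-diameter of approximate shortest paths, and iterated rectangular approximate distance products in the spirit of \autoref{thm:RMM_Zwick}. First, I would apply Zwick's standard scaling to reduce the weighted $(1+\epsilon)$-MSSP problem to $O(\log W)$ subproblems on graphs with integer weights in a small range, paying a $(1/\epsilon)^{O(1)}$ factor per subproblem and taking pointwise minima of the resulting estimates; this accounts for the $(1/\epsilon)^{O(1)}\log W$ factor in the final running time.

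Second, I would precompute a near-additive emulator $H$ of $G$ with $n^{1+o(1)}$ edges in $\tilde O(m^{1+o(1)})$ time, which is where the additive $m^{1+o(1)}$ term in the theorem comes from. The emulator is engineered so that, for every pair $(s,v)\in S \times V$, some $(1+\epsilon)$-approximate shortest path in $H$ uses only $h = \mathrm{polylog}(n)$ hops. It therefore suffices to compute $h$-hop $(1+\epsilon)$-approximate distances from $S$ in $H$, after which the edge scaling of the previous step converts the error into a multiplicative $(1+\epsilon)$ distortion.

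Third, I would compute these $h$-hop distances by iterated approximate distance products. Starting from the $|S|\times n$ matrix whose entries are direct edge weights from $S$ in $H$ (and $\infty$ otherwise), I would repeatedly apply approximate distance products with $H$'s weighted adjacency-type structure, doubling the hop count each round; after $O(\log h) = O(\log\log n)$ rounds I would read off the required $S \times V$ estimates.

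The main obstacle is achieving the claimed $n^{\omega(r)}$ cost per distance-product step. A direct formulation gives an $n^r \times n$ times $n \times n$ product, whose exponent does not in general match $\omega(r)$, which in the paper is the exponent for $n \times n^r$ times $n^r \times n$ multiplications. To resolve this, I would design the emulator so that $(1+\epsilon)$-approximate paths route through a pivot set $P$ of size $n^r$ (for instance $P$ chosen to contain $S$ together with an appropriately sampled landmark set), so that each iteration factors as an $n \times n^r$ times $n^r \times n$ product whose $(u,w)$-entry is the approximate distance $\min_{p\in P}\{d_H(u,p)+d_H(p,w)\}$. Such a product costs $n^{\omega(r)}(1/\epsilon)^{O(1)}$ by \autoref{thm:RMM_Zwick}, and the desired $S \times V$ estimates are obtained by restricting to the $S$-indexed rows of the resulting matrix.
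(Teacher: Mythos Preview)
The paper does not prove \autoref{thm:EN_MSSP} at all: it is stated in the Preliminaries section as a black-box citation of Elkin and Neiman~\cite{ElkinN22}, with no accompanying argument. So there is no ``paper's own proof'' to compare against; any reconstruction must be measured against the original source.

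Your high-level plan (build a hopset/emulator so that $(1+\epsilon)$-approximate shortest paths have $h=\polylog(n)$ hops, then compute $h$-hop $S\times V$ distances via approximate distance products) is indeed the strategy of~\cite{ElkinN22}. However, two concrete points in your write-up are off.

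First, your ``main obstacle'' is not an obstacle. It is a classical fact that the rectangular matrix multiplication exponent $\omega(a,b,c)$ (for an $n^a\times n^b$ by $n^b\times n^c$ product) is fully symmetric in its three arguments. Hence multiplying an $n^r\times n$ matrix by an $n\times n$ matrix costs $n^{\omega(r,1,1)}=n^{\omega(1,r,1)}=n^{\omega(r)}$, and Zwick's scaling reduction (\autoref{thm:RMM_Zwick}) inherits this symmetry. So the direct $|S|\times n$ by $n\times n$ products already achieve the claimed bound; your proposed fix of routing through a pivot set $P$ of size $n^r$ is unnecessary, and in fact does not make sense as stated, since a single $n\times n^r$ by $n^r\times n$ product only captures two-hop paths through~$P$, not the general $h$-hop paths you need.

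Second, you cannot ``double the hop count each round'' while keeping the products rectangular: squaring requires an $n\times n$ by $n\times n$ product. The correct iteration is sequential: set $D_0$ to the trivial $|S|\times n$ matrix and compute $D_{i+1}=D_i\star A$ with $A$ the weighted adjacency matrix of the hopset-augmented graph. Each of the $h=\polylog(n)$ steps costs $\tilde O(n^{\omega(r)}/\epsilon')$ for $\epsilon'=\Theta(\epsilon/h)$, and the compounded approximation $(1+\epsilon')^h\le 1+\epsilon$ is absorbed into the $\tilde O$ and $(1/\epsilon)^{O(1)}$ factors.
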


\subsection{From \texorpdfstring{$2+\epsilon$}{2+epsilon} to \texorpdfstring{$2$}{2}-Approximate APSP}\label{sc:2eps_to_2}
Dor, Halperin, and Zwick~\cite{DHZ00} provide a $+\log n$-approximate APSP for unweighted graphs in $\tilde O(n^2)$ time. Using this result, we can reduce the problem of an unweighted $2$-approximation to an unweighted $(2+\epsilon)$-approximation. 

\begin{lemma}\label{lm:2eps_to_2}
    Given an algorithm that computes $(2+\epsilon)$-approximate APSP on unweighted graphs in $\tau(n,m)\poly(1/\epsilon)$ time, we obtain an algorithm that computes $2$-approximate APSP in $\tilde O(n^2+\tau(n,m))$ time. The reduction is deterministic and combinatorial.
\end{lemma}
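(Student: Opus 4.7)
The plan is to combine the Dor--Halperin--Zwick $+\log n$-approximate APSP (running in $\tilde O(n^2)$ time on unweighted graphs) with the hypothesized $(2+\epsilon)$-approximation, choosing $\epsilon = \Theta(1/\log n)$ so that the $\poly(1/\epsilon)$ overhead collapses into the $\tilde O(\cdot)$. The two estimates cover complementary regimes of the distance scale, and taking their (floored) minimum will be a $2$-approximation for every pair.

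First, I would invoke DHZ to obtain estimates $\delta_1(u,v)$ with $d(u,v) \leq \delta_1(u,v) \leq d(u,v) + \log n$ in $\tilde O(n^2)$ time. For any pair with $d(u,v) \geq \log n$, this already yields $\delta_1(u,v) \leq 2 d(u,v)$, a valid $2$-approximation for the ``far'' regime.

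Second, I would run the assumed $(2+\epsilon)$-approximation with $\epsilon := 1/(4\lceil \log n \rceil)$ to obtain $\delta_2(u,v)$ satisfying $d(u,v) \leq \delta_2(u,v) \leq (2+\epsilon) d(u,v)$; by hypothesis this costs $\tau(n,m) \cdot \poly(1/\epsilon) = \tilde O(\tau(n,m))$. For pairs with $d(u,v) < \log n$, this gives $\delta_2(u,v) \leq 2 d(u,v) + \epsilon \log n \leq 2 d(u,v) + 1/4$. Crucially, since the graph is unweighted and $d(u,v)$ is an integer, $\lfloor \delta_2(u,v) \rfloor$ is an integer trapped in $[d(u,v), 2 d(u,v)]$, i.e., a $2$-approximation for the ``close'' regime.

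The final output is $\delta(u,v) := \min\{\delta_1(u,v),\, \lfloor \delta_2(u,v) \rfloor\}$, computed in $O(n^2)$ time. The inequality $d(u,v) \leq \delta(u,v)$ is immediate, since both terms are upper bounds on the (integer) distance. For the other direction, since we do not know a priori which regime each pair lies in, the minimum is essential: if $d(u,v) \geq \log n$ the first term certifies $\delta(u,v) \leq 2 d(u,v)$, otherwise the second does. There is no real obstacle beyond ensuring the two regimes actually overlap, which is exactly what our choice of $\epsilon$ achieves, as it forces the additive slack of $\delta_2$ at distance up to $\log n$ to be strictly below $1$ and hence to vanish under flooring. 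The reduction is manifestly deterministic and combinatorial, since both DHZ and the final minimum step are.
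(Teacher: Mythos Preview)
Your argument is correct and is essentially identical to the paper's own proof: both combine the DHZ $+\log n$ estimate with the $(2+\epsilon)$-approximation at $\epsilon=\Theta(1/\log n)$, floor the latter, and take the minimum. The only cosmetic difference is that the paper sets $\epsilon=1/\log n$ and uses $\lfloor \epsilon d(u,v)\rfloor=0$ directly, whereas you set $\epsilon=1/(4\lceil\log n\rceil)$ to force the additive slack below $1/4$; both choices achieve the same effect.
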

\begin{proof}
    Set $\epsilon=1/\log n$, and let $\delta(u,v)$ denote the output of the $(2+\epsilon)$-approximate APSP algorithm. Let $\delta'(u,v)$ denote the output of the $+\log n$ approximation of~\cite{DHZ00} (see also \autoref{thm:DHZ}). 
    We output $\hat{d}(u,v) = \min\{ \lfloor \delta(u,v)\rfloor, \delta'(u,v)\}$.
    Clearly this takes $\tilde O(n^2+\tau(n,m))$ time in total, so it remains to show that $\hat{d}(\cdot,\cdot)$ is a $2$-approximation. 

    First of all we notice that since distances in an unweighted graph are integers, we have that $d(u,v) \leq \delta(u,v)$ implies that $d(u,v) \leq \lfloor\delta(u,v)\rfloor$. Since also $d(u,v)\leq \delta'(u,v)$, we can conclude $d(u,v)\leq \hat{d}(u,v)$. 

    To show that $\hat{d}(u,v)\leq 2d(u,v)$, we distinguish the cases $d(u,v)<\log n$ and $d(u,v)\geq \log n$. If $d(u,v)<\log n$, then from $\delta(u,v) \leq (2+\epsilon)d(u,v)$ we obtain 
    \begin{align*}
    \hat{d}(u,v) &\leq \lfloor \delta(u,v)\rfloor 
    \leq \lfloor (2+\epsilon)d(u,v)\rfloor \\
    &= 2d(u,v)+ \lfloor\epsilon d(u,v) \rfloor 
    =2d(u,v)+ \lfloor \tfrac{d(u,v)}{\log n} \rfloor \\
    &\leq 2d(u,v).
    \end{align*}
    Now if $d(u,v)\geq \log n$, we have that $\hat{d}(u,v)\leq \delta'(u,v)\leq d(u,v)+\log n\leq 2d(u,v)$.
\end{proof}

\section{Approximate APSP Algorithms for Unweighted Graphs} \label{sec:unweighted_APSP}
We obtain our three unweighted APSP results, $2$-approximate (\autoref{thm:two_apx}), combinatorial $2$-approximate (\autoref{thm:two_apx_comb}), and $(1+\epsilon,k)$-approximate for even $k\geq 2$ (\autoref{thm:near_additive}), through a more general framework. In this section, we develop said framework, from which these theorems follow almost immediately. 

The goal of our framework is to split the graph into two cases: a sparse graph and a dense graph. On the sparse graph, we just run an existing approximate APSP algorithm that performs well on sparse graphs (denoted by Algorithm~$\mathcal A$ in \autoref{thm:sparse_APSP} below). For the dense graph, we use more ingenuity. We further split it into $\poly\log n$ density regimes, where in each regime the bottleneck is to compute APSP through a known set $S$, i.e., for each $u,v$ to find a shortest path of the form $(u,x,v)$ for some $x\in S$ (this task is done by Algorithm~$\mathcal B$). If we solve this problem exactly, we obtain a $+2$-approximation. If we solve it approximately, this carries over into the overall approximation factor. For example, if we solve it up to a multiplicative factor $(1+\epsilon)$, we obtain total approximation $(1+\epsilon, 2(1+\epsilon))$ for the dense graph. For our approximations, we only use algorithms $\mathcal B$ that are either exact or $(1+\epsilon)$-approximate. 
The formal statement of the framework is as follows. 

\begin{restatable}{theorem}{ThmSparseAPSP}\label{thm:sparse_APSP}
Let $\mathcal A$ be an algorithm that computes $(\text{mult}_A,\text{add}_A)$-approximate APSP on unweighted graphs with running time $\tau_A(n,m)$, and let $\mathcal B(S)$ be an algorithm that computes $(\text{mult}_B,\text{add}_B)$-approximate all-pairs shortest paths on weighted graphs, among the $u-v$ paths of the form $(u,x,v)$ for some $x$ in a given set $S$, with running time $\tau_B(n,|S|)$. 
Then there exists an algorithm that, given an unweighted, undirected graph $G=(V,E)$ and a parameter $r\in[0,1]$, computes approximate APSP with running time $\tau_A(n,n^{2-r})+\tilde O(\tau_B(n,\tilde O(n^r)))$, where for each pair of vertices we have either a $(\text{mult}_A,\text{add}_A)$ or a $ (\text{mult}_B,\text{add}_B+2\text{mult}_B)$ approximation. 

Besides possibly Algorithm~$\mathcal{A}$ and~$\mathcal{B}$, the procedure is deterministic and combinatorial.
\end{restatable}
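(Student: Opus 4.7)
The plan is to partition shortest paths according to whether they avoid or traverse a high-degree vertex, with the threshold controlled by $r$. Set $\Delta := n^{1-r}$ and let $G_{\text{low}}$ be the subgraph of $G$ containing every edge with at least one endpoint of $G$-degree at most $\Delta$; charging each such edge to its low-degree endpoint gives $|E(G_{\text{low}})|\leq n\Delta=n^{2-r}$. I would first run $\mathcal{A}$ on $G_{\text{low}}$ in time $\tau_A(n,n^{2-r})$. For any pair $u,v$ whose shortest $G$-path uses only low-degree vertices, every edge of that path has both endpoints low-degree and so lies in $G_{\text{low}}$; hence $d_{G_{\text{low}}}(u,v)=d_G(u,v)$ and $\mathcal{A}$'s output is a $(\text{mult}_A,\text{add}_A)$-approximation.

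For the remaining (``dense'') pairs, I would use a layered hitting-set construction. For every integer $i$ with $\Delta<2^i\leq n$, deterministically construct a hitting set $S_i\subseteq V$ of size $\tilde O(n/2^i)$ such that every vertex of $G$-degree at least $2^i$ either lies in $S_i$ or has a neighbor in $S_i$ (greedy set cover on the closed neighborhoods, in $\tilde O(m)$ time). Let $G_i$ be the subgraph of $G$ whose edges have at least one endpoint of degree at most $2^{i+1}$, so $|E(G_i)|\leq n\cdot 2^{i+1}$. Run BFS from every $s\in S_i$ inside $G_i$ to obtain distances $\delta_i(s,\cdot)$. Per level this costs $\tilde O(|S_i|\cdot|E(G_i)|)=\tilde O(n^2)$, giving $\tilde O(n^2)$ total across the $O(\log n)$ levels.

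Now set $S:=\bigcup_i S_i$; a geometric-series bound gives $|S|\leq\sum_i|S_i|=\tilde O(n/\Delta)=\tilde O(n^r)$. Define the weight $w(u,s):=\min_{i:\,s\in S_i}\delta_i(u,s)$ and invoke $\mathcal{B}(S)$ on this input at cost $\tau_B(n,\tilde O(n^r))$; since $\mathcal{B}$ must output $n^2$ estimates, $\tau_B\geq n^2$, and the $\tilde O(n^2)$ BFS cost is absorbed into $\tilde O(\tau_B)$. For the stretch of dense pairs, fix $u,v$ whose shortest $G$-path $\pi$ contains a vertex of degree $>\Delta$, let $x$ be a maximum-degree vertex of $\pi$, and choose $i$ so that $2^i\leq\deg_G(x)<2^{i+1}$ (then $2^i>\Delta$, i.e., $i$ is one of our levels). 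By the maximality of $x$ every vertex of $\pi$ has degree $<2^{i+1}$, so every edge of $\pi$ lies in $G_i$; consequently $d_{G_i}(u,x)=d_G(u,x)$ and $d_{G_i}(x,v)=d_G(x,v)$. The set $S_i$ contains either $x$ itself or some neighbor $s$ of $x$, and the edge $(x,s)$ is also in $G_i$ because $\deg_G(x)<2^{i+1}$. Therefore $\delta_i(u,s)+\delta_i(s,v)\leq d_G(u,x)+1+1+d_G(x,v)=d_G(u,v)+2$, and $\mathcal{B}$'s guarantee yields an estimate at most $\text{mult}_B\cdot d_G(u,v)+(\text{add}_B+2\text{mult}_B)$, as required.

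The main obstacle is the clean interplay between the degree threshold defining $G_i$ and the level of the pivot vertex $x$: the single inequality $\deg_G(x)<2^{i+1}$ must simultaneously force that all of $\pi$ lies in $G_i$ (so the BFS weights fed to $\mathcal{B}$ do not overshoot $d_G$) and that the edge from $x$ to its hitting-set representative lies in $G_i$. Once this observation is in place, the remaining ingredients---the deterministic greedy construction of the $S_i$'s, the geometric-series bound on $|S|$, the telescoping BFS budget, and merging the two outputs---are routine, and summing gives the claimed running time $\tau_A(n,n^{2-r})+\tilde O(\tau_B(n,\tilde O(n^r)))$.
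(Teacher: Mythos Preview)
Your proof is correct and follows essentially the same approach as the paper: the sparse subgraph $G_{\text{low}}$, the layered hitting sets $S_i$ with BFS inside the degree-capped subgraphs $G_i$, and the pivot-through-maximum-degree-vertex stretch argument all match. The one minor variation is that you union the $S_i$ into a single set $S$ and call $\mathcal{B}$ once, whereas the paper invokes $\mathcal{B}(S_i)$ separately at each of the $O(\log n)$ levels; both yield the stated $\tilde O(\tau_B(n,\tilde O(n^r)))$ bound.
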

Note that $\tau_A(n,n^{2-r}),\tau_B(n,\tilde O(n^r)) \geq n^2$ as they both need $n^2$ time to write their output.

In \autoref{sc:unweighted}, \autoref{sec:combinatotial}, and \autoref{sec:near_additive}, we obtain a $2$-approximation, a combinatorial $2$-approximation, and a near-additive approximation respectively, by using different algorithms for $\mathcal{A, B}$ and balancing the parameter $r$ accordingly. 

Next, we proceed by describing the algorithm satisfying \autoref{thm:sparse_APSP}, followed by a correctness proof and running time analysis. Pseudo-code can be found in \autoref{alg:sparse_APSP}. To desecribe the algrithm, we recall the notion of hitting sets.
A set $S$ is said to be a \emph{hitting set} for the vertices that have at least one neighbor in $S$. The following result provides a hitting set $S$ for the vertices with degree at least $s$. Such a set can easily be obtained by random sampling or with a deterministic algorithm~\cite{AingworthCIM99,DHZ00}.

\begin{lemma}[\cite{DHZ00}]
    There exists a deterministic algorithm $\Hit(G,s)$ that, given an undirected graph $G=(V,E)$ and a parameter $1\leq s \leq n$, computes a set $S\subseteq V$ of size $O(\tfrac{n}{s}\log n)$ such that all vertices of degree at least $s$ have at least one neighbor in $S$. The algorithm takes $O(m+n)$ time.
\end{lemma}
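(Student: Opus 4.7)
The plan is to prove this via a standard greedy set-cover construction, combined with a bucket-based implementation to achieve linear time. Let $H = \{v \in V : \deg(v) \geq s\}$ be the set of high-degree vertices; the goal is to pick a small $S \subseteq V$ intersecting $N(v)$ for every $v \in H$. I would iteratively add to $S$ a vertex $u^\star \in V$ that maximizes $|N(u^\star) \cap H'|$, where $H'$ denotes the currently uncovered subset of $H$, and then mark the neighbors of $u^\star$ in $H'$ as covered, terminating once $H' = \emptyset$.

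For the size bound, a double-counting argument on incidences gives
\[
\sum_{u \in V} |N(u) \cap H'| \;=\; \sum_{v \in H'} \deg(v) \;\geq\; s\,|H'|,
\]
so by averaging the greedy choice satisfies $|N(u^\star) \cap H'| \geq s|H'|/n$. Each iteration therefore shrinks $|H'|$ by a factor of at most $1 - s/n$, yielding $|H'| \leq n(1 - s/n)^k \leq n\, e^{-ks/n}$ after $k$ iterations; this is strictly less than $1$, hence zero, once $k > (n/s)\ln n$. The algorithm thus terminates with $|S| = O((n/s)\log n)$.

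The main obstacle is achieving the linear $O(m+n)$ running time, since a naive implementation of the greedy rule would rescan all edges each iteration and cost $\Theta(m)$ per round. To avoid this, I would maintain an integer counter $c(u) = |N(u) \cap H'|$ for every $u \in V$, initialized in $O(m+n)$ by a single pass over the edges (after first identifying $H$ in $O(n)$ time), together with an array of buckets where bucket $b$ stores the vertices with $c(u) = b$, and a pointer $p$ to the largest non-empty bucket. Each iteration reads some $u^\star$ from bucket $p$, and for every $v \in N(u^\star) \cap H'$ that becomes newly covered, it scans $N(v)$ and decrements $c(w)$ for each $w \in N(v)$, moving $w$ down one bucket.

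The running time analysis is the key step. Each $v \in H$ is covered at most once, so the total update work across all iterations is bounded by $\sum_{v \in H} \deg(v) \leq 2m$. Since counters only ever decrease, the pointer $p$ moves monotonically downward, so the total cost of advancing $p$ to the next non-empty bucket is $O(n)$ amortized. Combining the initialization, the neighbor updates, and the pointer maintenance yields overall running time $O(m+n)$, as claimed.
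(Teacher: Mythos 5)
The paper does not prove this lemma itself but cites it from Dor--Halperin--Zwick, where it is established by exactly this argument: greedy set cover (pick the vertex covering the most uncovered high-degree vertices, with the averaging bound $\max_u |N(u)\cap H'| \geq s|H'|/n$ giving the $O(\tfrac{n}{s}\log n)$ size), implemented in linear time with a bucket queue whose pointer only moves downward. Your proposal is correct and matches that standard approach, including the amortized $O(m+n)$ accounting of the counter updates.
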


\paragraph{Algorithm.}
Given the parameter $r\in[0,1]$, we say a vertex is \emph{light} if it has at most $n^{1-r}$ incident edges. We look at the \emph{sparse graph}, where each vertex keeps at most $n^{1-r}$ edges to its neighbors. On this graph we run Algorithm $\mathcal A$. We show that if for two vertices the shortest path only consists of light vertices, then this provides a $(\text{mult}_A,\text{add}_A)$-approximation. 
We also run the following procedure, on the entire input graph. This part ensures that if the shortest path between two vertices contains at least one dense vertex, we obtain a $(\text{mult}_B,\text{add}_B+2)$-approximation. We look at $O(\log n)$ levels.  At level $i$, the goal is to get an approximation for a shortest path with maximum degree in $[2^i,2^{i+1}]$. We let $i$ be all the integer values between $(1-r)\log n$ and $\log n$. By abuse of notation, we write $i=(1-r)\log n,(1-r)\log n+1, \dots, \log n$. At level $i$ we do the following. 
\begin{enumerate}
    \item  Let $S_i \subseteq V$ be defined by $\Hit(G,2^i)$. We set $G_i=(V,E_i)$ to be the graph with $E_i := \{ \{u,v\} \in E : \text{deg}(u)\leq 2^{i+1}\text{ or }\text{deg}(v)\leq 2^{i+1}\}$.
    \item We compute multi-source shortest paths from $S_i$ on $G_i$, by running Dijkstra from each vertex in $S_i$. We store these results in the graph $G_i':= (V,E_i')$. So $E_i' = S_i\times V$, and $w_{E_i'}(a,v):=d_{G_i}(a,v)$.
    \item Now we want to compute shortest paths through $S_i$ on the graph induced by edges in step (2), hereto we call algorithm $\mathcal B(G_i',S_i)$.
\end{enumerate}

\begin{algorithm}[H] \SetAlgoLined 
Let $G'=(V,E')$, where $E':= \{ \{u,v\} \in E : \deg(u)\leq n^{1-r} \text{ or } \deg(v)\leq n^{1-r} \}$ \;
Let $\delta_A(\cdot,\cdot)$ be the result of Algorithm $\mathcal A(G')$ \;
\For{$i=(1-r)\log n, \dots, \log n$}{
Let $G_i=(V,E_i)$, where $E_i := \{ \{u,v\} \in E : \deg(u)\leq 2^{i+1} \text{ or } \deg(v)\leq 2^{i+1} \}$\;
$S_i := \Hit(G,2^i)$\;
\For{$a\in S_i$}{
Run Dijkstra from $a$ on $G_i$}
Let $G_i':= (V,E_i')$ be a weighted graph, where $E_i' = S_i\times V$, and $w_{E_i'}(a,v):=d_{G_i}(a,v)$ for $a\in S_i,v\in V$.\;
Let $\delta_i(\cdot,\cdot)$ be the result of Algorithm $\mathcal B(G_i',S_i)$\;
}
$\delta_B(u,v) := \min\{ \delta_i(u,v) : (1-r)\log n \leq i \leq \log n\}$, for $u,v\in V$.\;
$\delta(u,v) := \min\{ \delta_A(u,v),\delta_B(u,v)\}$, for $u,v\in V$.\;
\Return{$\delta(\cdot,\cdot)$}
\caption{Our algorithm to compute APSP using algorithms~$\mathcal{A}$ and~$\mathcal B$} 
\label{alg:sparse_APSP}
\end{algorithm}

\paragraph{Correctness.}
Given $u,v\in V$, we have to show that the returned approximation $\delta(u,v)$ is a $2$-approximation. 
We distinguish two (non-disjoint) cases:

\begin{enumerate}[a)]
    \item There exists a shortest path from $u$ to $v$ solely consisting of light vertices: vertices of degree at most $n^{1-r}$. In this case, we show we obtain a $(\text{mult}_A,\text{add}_A)$-approximation through $\delta_A$. 
    
    In this case, this shortest path is fully contained in the sparse graph, hence running algorithm $\mathcal A$ on it provides the given approximation $\delta_A$.
    
    \item There exists a shortest path from $u$ to $v$ containing at least one heavy vertex: a vertex with degree at least $n^{1-r}$. In this case, we show we obtain a $(\text{mult}_B,\text{add}_B+2\text{mult}_B)$-approximation through $\delta_B$.
    
    Let $i\in \{(1-r)\log n,(1-r)\log n+1, \dots \log n\}$ be the maximal index such that there exists a vertex $x$ on the shortest path from $u$ to $v$ with degree in $[2^i, 2^{i+1}]$. We show that $\delta_i$ provides the desired approximation. By definition of $S_i$, $x$ has at least one neighbor in $S_i$, denote this neighbor by $a$, see \autoref{fig:stretch_pic_i}.

\setlength{\intextsep}{4pt}
\begin{figure}[h]
\centering
\setlength{\abovecaptionskip}{0pt}
\setlength{\belowcaptionskip}{8pt}
\includegraphics[scale=0.6]{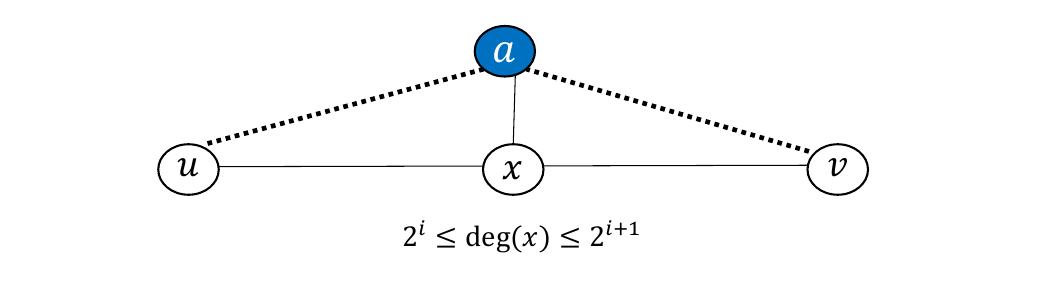}
 \caption{Illustration of the stretch analysis.}
\label{fig:stretch_pic_i}
\end{figure}

Now consider the distance from $u$ to $v$ in $G_i$. Since $G_i$ contains the shortest path in $G$, we have $d_{G_i}(u,v)=d_G(u,v)$. Further we know that $d(u,a)+d(a,v) \leq d_{G_i}(u,v) +2$, since $a$ is neighboring $x$ on the shortest path between $u$ and $v$. In particular, this means that $\min_{a'\in S_i} d_{G_i}(u,a') +d_{G_i}(a',v) \leq d_{G_i}(u,v)+2=d_G(u,v)+2$. So we can focus on computing shortest paths through $S_i$. In the second step of the algorithm, we have computed the (exact) distances $S_i$: $d_{G_i}(a',y)$ for all $y\in V$, in particular to $u$ and $v$. 
To compute $\min_{a'\in S_i} d_{G_i}(u,a') +d_{G_i}(a',v)$, we use Algorithm~$\mathcal{B}(G_i',S_i)$. which then provides an estimate $\delta_i(u,v)$ for the distance from $u$ to $v$ through $S_i$ in $G_i$. 
Hence, the approximation factors aggregate as follows:
\begin{align*} 
    \delta_i(u,v) &\leq \text{mult}_B(d_{G_i}(u,a)+d_{G_i}(a,v))+\text{add}_B\\
    &\leq \text{mult}_B(d_{G_i}(u,x)+1+d_{G_i}(x,v)+1)+\text{add}_B\\
    &=\text{mult}_B\cdot d_{G_i}(u,v)+2\cdot \text{mult}_B+\text{add}_B\\
    &=\text{mult}_B\cdot d(u,v)+2\cdot \text{mult}_B+\text{add}_B.
\end{align*}
\end{enumerate}

If Algorithm $\mathcal A$ is correct with probability at least $1-p_A$, and Algorithm $\mathcal{B}$ is correct with probability at least $1-p_B$, then our algorithm is correct with probability at least $1-p_A-p_B\cdot \log n $. Note that in all applications below we have $p_A=0=p_B$, i.e., both algorithms are always correct.

\paragraph{Running time.}
We run algorithm $\mathcal A(G')$, where $G'$ has $n$ vertices and $O(n^{2-r})$ edges, hence this takes $\tau_A(n,n^{2-r})$ time. Then, for each $i$, we perform three steps. First, we compute a hitting set in $O(m+n)$ time. Second, we run Dijkstra from $|S_i|=\tilde O(\tfrac{n}{2^i})$ vertices on a graph with $O(n2^{i+1})$ edges, taking $\tilde O(n^2)$ time. Third, we run algorithm $\mathcal B(G_i',S_i)$, where we know every shortest path goes through $S_i$ with $|S_i|= O(\tfrac{n}{2^i}\log n)=\tilde O(n^{r})$, hence taking $\tau_B(n,\tilde O(n^r))$ time. 

Any probabilistic guarantees on the running time of Algorithm~$\mathcal{A}$ and $\mathcal{B}$ simply carry over.

\subsection{\texorpdfstring{$2$}{2}-Approximate APSP for Unweighted Graphs} \label{sc:unweighted}
We employ the theorem of the previous section to obtain a multiplicative $2$-approximation. Hereto we use the following result of Baswana and Kavitha~\cite{BaswanaK10}, which gives an efficient algorithm for sparse graphs.\footnote{We note that \cite{BaswanaK10} states their result with an \emph{expected} running time. We can easily make this `with high probability' as follows. We run the algorithm $\log n$ time, stopping whenever we exceed the running time we aim for by more than a factor $\log n$. By a Chernoff bound, at least one of them finishes within this time w.h.p.}

\begin{theorem}[\cite{BaswanaK10}]\label{thm:BK}
     There exists a randomized, combinatorial algorithm that, given an undirected graph with non-negative edge weights $G=(V,E)$, computes $2$-approximate APSP. With high probability, the algorithm takes $\tilde O(n^2+m\sqrt{n})$ time. 
\end{theorem}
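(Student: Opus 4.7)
The plan is to build on the Thorup--Zwick pivot framework already sketched in the Technical Overview, and then add a multi-level refinement to squeeze the approximation from $3$ down to $2$. First, sample a pivot set $S$ by including each vertex independently with probability $p = 1/\sqrt{n}$, so that $|S| = \tilde{O}(\sqrt{n})$ w.h.p. Compute pivots $p(u)$, bunches $B(u)$, and the exact distances $d(u,v)$ for every $v \in B(u)$ using the standard Thorup--Zwick preprocessing, in $\tilde{O}(m/p) = \tilde{O}(m\sqrt{n})$ time. Then run Dijkstra from every $s \in S$, yielding all distances $d(s,v)$ in $|S|\cdot \tilde{O}(m) = \tilde{O}(m\sqrt{n})$ time. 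This already gives the base estimate $\delta_{\text{pivot}}(u,v) = \min_{s\in S}(d(u,s)+d(s,v))$, which is a $2$-approximation whenever the shortest $u$--$v$ path $\pi$ contains some vertex outside $B(u) \cup B(v)$: in that case, $d(u,p(u)) \le d(u,v)/2$ or $d(v,p(v)) \le d(u,v)/2$, and the triangle inequality closes the gap.

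The remaining case is the "adjacent" one where $\pi \subseteq B(u) \cup B(v)$. To handle it, I would set up $O(\log n)$ auxiliary levels. At level $i$, compute a hitting set $S_i$ of size $\tilde{O}(n/2^i)$ that hits the neighborhood of every vertex of degree at least $2^i$, and form a subgraph $G_i$ in which each vertex of degree above $2^{i+1}$ keeps only a sparse set of incident edges (edges to $S_i$ plus a small collection that preserves distances to the relevant bunch neighbors). The aim is to achieve $|E(G_i)| = \tilde{O}(n \cdot 2^i)$ while ensuring that for every pair $(u,v)$ whose shortest path has maximum degree roughly $2^i$, at least one path $u \to s \to v$ with $s \in S_i$ has $G_i$-length at most $d(u,v) + O(1)$. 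Running Dijkstra from every $s \in S_i$ on $G_i$ then costs $|S_i|\cdot \tilde{O}(|E(G_i)|) = \tilde{O}(n^2)$ per level, hence $\tilde{O}(n^2)$ over all levels.

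For correctness, fix $u,v$ and a shortest path $\pi$. If $\pi$ leaves $B(u) \cup B(v)$, the base estimate already gives a $2$-approximation. Otherwise, let $i$ be the level where the maximum-degree vertex $x$ on $\pi$ has degree in $[2^i, 2^{i+1}]$; by the hitting-set property $x$ has a neighbor $s \in S_i$, and the corresponding $G_i$-distance from the second phase yields a $d(u,v)+O(1)$ bound. For weighted graphs this additive slack is traded for a multiplicative factor by observing that either both segments $d(u,s)$ and $d(s,v)$ are at least $d(u,v)/2$ (giving the $2$-bound directly), or the shorter segment is itself an exact distance inside a bunch (since the pivot structure certifies it). The total running time is $\tilde{O}(m\sqrt{n} + n^2)$.

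The main obstacle, I expect, is the careful design of the sparsifications $G_i$: one must simultaneously preserve a witness for every pair whose shortest path passes through a degree-$[2^i,2^{i+1}]$ vertex, and keep $|E(G_i)| = \tilde{O}(n \cdot 2^i)$ so that $|S_i|$ many Dijkstra runs fit in $\tilde{O}(n^2)$ per level. Concretely, the structural lemma needed is that restricting each high-degree vertex to (i) its edges into $S_i$ together with (ii) a carefully chosen representative in each bunch containing it, retains a near-shortest witness path for every relevant pair. Once this lemma is in place, the stretch analysis and the running time $\tilde{O}(m\sqrt{n} + n^2)$ assemble routinely from the two phases.
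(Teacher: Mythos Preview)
Your first phase (sample $S$ with probability $p=1/\sqrt{n}$, compute bunches, run Dijkstra from each $s\in S$, use the pivot estimate when the path leaves $B(u)\cup B(v)$) is correct and matches the Baswana--Kavitha setup. The gap is in your handling of the remaining case. You propose a \emph{degree-based} hierarchy: hitting sets $S_i$ for vertices of degree $\ge 2^i$ and sparsified graphs $G_i$ defined by degree thresholds. That is the Aingworth--Chekuri--Indyk--Motwani device the paper uses for its \emph{unweighted} results (Section~3); it yields a $+2$ additive slack because a high-degree vertex $x$ on the path has a neighbor $a\in S_i$ at unit distance. In a weighted graph the detour cost is $2w(x,a)$, which is uncontrolled, and your suggested repair (``either both segments $d(u,s),d(s,v)$ are at least $d(u,v)/2$, or the shorter segment is exact inside a bunch'') does not follow from the hitting-set property: $s$ being a neighbor of a high-degree vertex gives no relation to $B(u)$ or $B(v)$, and nothing forces either segment to be short.

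The actual Baswana--Kavitha construction, reproduced in Section~4.5 of the paper, differs in both axes. The levels are \emph{nested random samples} $V=S_0\supseteq S_1\supseteq\cdots\supseteq S_k$ obtained by repeated halving (not degree-based hitting sets), so that the pivots $p_i(u)$ at successive levels get monotonically farther from $u$. The sparse graph at level $i$ is $E_{S_{i+1}}$, where each vertex $v$ keeps only those incident edges of weight at most $d(v,S_{i+1})$; this is what makes $|S_i|\cdot|E_{S_{i+1}}|=\tilde O(n^2)$ hold, since the surviving degree of $v$ is at most the size of its bunch w.r.t.\ $S_{i+1}$. Correctness hinges not on a high-degree vertex but on the monotone pivot distances: if $d(u,p_k(u))+d(v,p_k(v))>d(u,v)$, then there is a smallest level $i$ at which $d(u,p_i(u))+d(v,p_i(v))$ crosses $d(u,v)$, and at that level the estimate through $p_i(u)$ or $p_i(v)$ is a $2$-approximation (Lemma~4.3 in the paper, \cite[Theorem~7.3]{BaswanaK10}). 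Your sketch substitutes the unweighted degree mechanism for this, which is why the weighted correctness argument does not go through.
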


The algorithm of \autoref{thm:BK} can be retrieved from our more general algorithm for weighted graphs in \autoref{sc:weighted}.

\ThmTwoApx*
\begin{proof}
    By \autoref{lm:2eps_to_2}, it is sufficient to provide a $(2+\epsilon)$-approximation, given that the running time only depends polynomially on $1/\epsilon$. We utilize \autoref{thm:sparse_APSP} and for that we need to specify which algorithms to use for $\mathcal A$ and $\mathcal B$, and analyze the tradeoffs on the approximation and running time. 
     
    For algorithm $\mathcal A$ we use \autoref{thm:BK} (\cite{BaswanaK10}) on a graph with $n$ vertices and $n^{2-r}$ edges, which results in a $2$-approximation in $\tilde O(n^2+n^{5/2-r})$ time w.h.p. 
    Algorithm $\mathcal B$ has to provide shortest $u-v$ paths, that are minimal among all paths of the form $u-x-v$, for $x$ in a given set $S$. We use rectangular matrix multiplication for this: multiplying the $V\times S$ with the $S\times V$ edge weight matrices gives exactly the paths of length 2. We obtain $(1+\epsilon/2)$-approximate shortest paths through a set of $\tilde O(n^r)$ vertices in time $\tilde O(m^{1+o(1)}+n^{\omega(r)}(1/\epsilon)^{O(1)})$~\cite{Zwick02} (see also \autoref{thm:RMM_Zwick}). 

    By \autoref{thm:sparse_APSP}, for each pair of vertices we either obtain a stretch of $(2,0)$, or a stretch of $(1+\epsilon/2,(1+\epsilon/2)\cdot 2)$. 
    Note that for any $d(u,v)\geq 2$ we have  $(1+\epsilon/2)d(u,v)+(1+\epsilon/2)\cdot 2 \leq (1+\epsilon)d(u,v)+2$, so we have a $(1+\epsilon, 2)$-approximation, hence a $(2+\epsilon)$-approximation in total. 

    Also by \autoref{thm:sparse_APSP}, w.h.p.\ we obtain a running time of $\tilde O(n^2) +\tilde O(n^2+n^{5/2-r} )+ \tilde O(m^{1+o(1)}+n^{\omega(r)}(1/\epsilon)^{O(1)}))= \tilde O((n^{5/2-r}+n^{\omega(r)})(1/\epsilon)^{O(1)}) = \tilde O(n^{2.03184039}(1/\epsilon)^{O(1)})$ for $r=0.46815961$, where we use \cite{balancer} to balance the terms. 
\end{proof}

\subsection{\texorpdfstring{$2$}{2}-Approximate Combinatorial APSP for Unweighted Graphs} \label{sec:combinatotial}
The algorithm of the previous section uses matrix multiplication as a subroutine. In this section, we present a simple combinatorial algorithm for the same problem, matching the very recent result by Roditty~\cite{Roditty23}. 

\begin{restatable}{theorem}{ThmTwoApxComb}\label{thm:two_apx_comb}
There exists a \emph{combinatorial} algorithm that, given an unweighted, undirected graph $G=(V,E)$, computes $2$-approximate APSP. With high probability, the algorithm takes $\tilde O(n^{2.25})$  time. 
\end{restatable}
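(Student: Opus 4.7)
The plan is to specialize the framework of \autoref{thm:sparse_APSP} with the parameter $r = 1/4$, choosing for Algorithm~$\mathcal{A}$ the combinatorial $2$-approximation of Baswana and Kavitha (\autoref{thm:BK}), and for Algorithm~$\mathcal{B}$ a direct combinatorial enumeration of the distance product through the hitting set~$S$. Since both of these subroutines are combinatorial, and the rest of the framework is deterministic and combinatorial, the overall algorithm is combinatorial.

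With $r = 1/4$, the sparse subgraph $G'$ fed to Algorithm~$\mathcal{A}$ has at most $n^{2-r} = n^{7/4}$ edges, so \autoref{thm:BK} runs in $\tau_A(n, n^{7/4}) = \tilde O(n^2 + n^{7/4}\sqrt{n}) = \tilde O(n^{9/4})$ time with high probability and yields a $(\mathrm{mult}_A, \mathrm{add}_A) = (2, 0)$-approximation. On the dense side, at each level $i \in \{(3/4)\log n, \ldots, \log n\}$ the hitting set satisfies $|S_i| = \tilde O(n/2^i) \leq \tilde O(n^{1/4})$. For Algorithm~$\mathcal{B}(G_i', S_i)$ I would simply enumerate, for every pair $(u, v) \in V \times V$ and every $a \in S_i$, the candidate value $w_{G_i'}(u, a) + w_{G_i'}(a, v)$, and keep the minimum. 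This is exact, so $(\mathrm{mult}_B, \mathrm{add}_B) = (1, 0)$, and it runs in $O(n^2 |S_i|) = \tilde O(n^{2.25})$ per level, hence $\tilde O(n^{2.25})$ over all $O(\log n)$ levels.

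By \autoref{thm:sparse_APSP}, each pair ends up with either a $(2, 0)$- or a $(1, 2)$-approximation. Both are $2$-approximations whenever $d(u, v) \geq 2$, since $d(u,v) + 2 \leq 2 d(u,v)$ in that range. The one troublesome case is adjacent pairs ($d(u,v) = 1$) whose edge lies outside~$G'$ (both endpoints are heavy), where case~(b) of the framework can yield an estimate as large as~$3$. This I would patch by a trivial $O(m)$ clean-up at the end: for every edge $\{u,v\} \in E$, set $\delta(u,v) \leftarrow \min\{\delta(u,v), 1\}$. Combined with the $\tilde O(n^2)$ cost of the multi-source Dijkstra computations built into the framework, the total running time is $\tilde O(n^{9/4}) = \tilde O(n^{2.25})$ with high probability, giving the theorem.

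There is no real obstacle once the framework is in place; the only thing to get right is the parameter balance. The choice $r = 1/4$ is the unique one that equalizes the sparse-side cost $\tilde O(n^{5/2 - r})$ of Baswana-Kavitha with the combinatorial distance-product cost $\tilde O(n^{2 + r})$ of brute-force minimization through~$S$, each evaluating to $\tilde O(n^{9/4})$. Since we call Algorithm~$\mathcal{B}$ only to obtain an \emph{exact} minimum, no approximate distance-product machinery (and thus no fast matrix multiplication) is ever invoked, so the resulting algorithm is genuinely combinatorial.
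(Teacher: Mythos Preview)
Your proof is correct and follows essentially the same approach as the paper: instantiate \autoref{thm:sparse_APSP} with $r=1/4$, use \autoref{thm:BK} for $\mathcal{A}$ and an exact combinatorial distance product through $S$ for $\mathcal{B}$, balancing $\tilde O(n^{5/2-r})$ against $\tilde O(n^{2+r})$. The only cosmetic differences are that the paper implements $\mathcal{B}$ by running Dijkstra from every vertex on the $\tilde O(n^{1+r})$-edge graph $(V,\,S\times V)$ instead of your explicit triple enumeration, and that the paper glosses over the $d(u,v)=1$ edge case that you patch with an $O(m)$ scan.
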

\begin{proof}
Again, we use \autoref{thm:sparse_APSP}. For algorithm $\mathcal A$ we use \autoref{thm:BK} to obtain a $2$-approximation in $\tilde O(n^2+n^{5/2-r})$ time w.h.p. 
Algorithm $\mathcal B$ has to provide shortest $u-v$ paths, that are minimal among all paths of the form $u-x-v$, for $x$ in a given set $S$. In particular, we can consider the graph $(V,S\times V)$, which has $n\cdot \tilde O(n^r)=\tilde O(n^{1+r})$ edges. Running Dijkstra from each node gives the (exact) result in $\tilde O(n^{2+r})$ time.

By \autoref{thm:sparse_APSP}, for each pair of vertices we either obtain a stretch of $(2,0)$, or a stretch of $(0,2)$, hence a $(2,0)$-approximation in total. 

Also by \autoref{thm:sparse_APSP}, with high probability, we obtain a running time of $\tilde O(n^{2.5-r}+n^{2+r})=\tilde O(n^{2.25})$, for $r=0.25$.

Since both \autoref{thm:BK} and Dijkstra are combinatorial, the final result is combinatorial.
\end{proof}

\subsection{\texorpdfstring{$(1+\epsilon,k)$}{(1+epsilon,k)}-Approximate APSP for Unweighted Graphs} \label{sec:near_additive}
As a third application of \autoref{thm:sparse_APSP}, we give an algorithm for computing $(1+\epsilon,k)$-approximate APSP for even $k\geq 2$. Hereto, we use the following result of Dor, Halperin, and Zwick~\cite{DHZ00}, which gives an efficient algorithm for sparse graphs.

\begin{theorem}[\cite{DHZ00}]\label{thm:DHZ}
\begin{enumerate}[i)]
    \item There exists a deterministic algorithm that, given an unweighted, undirected graph $G=(V,E)$, computes $+2$-approximate APSP in  $\tilde O(\min\{n^{3/2}m^{1/2},n^{7/3})$ time.
    \item There exists a deterministic algorithm that, given an unweighted, undirected graph $G=(V,E)$ and an even integer $k\geq 4$, computes $+k$-approximate APSP in $\tilde O(\min\{ n^{2-\tfrac{2}{k+2}}m^{\tfrac{2}{k+2}},n^{2+\tfrac{2}{3k-2}}\})$ time.  
\end{enumerate}

\end{theorem}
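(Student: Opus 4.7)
The plan is to use a hitting-set technique combined with sparsified BFS, generalizing the paradigm already used in Section~3 for multiplicative $2$-approximations.

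For part~(i), fix a degree threshold $s$ to be tuned later. Call a vertex \emph{light} if its degree is at most $s$ and \emph{heavy} otherwise. Compute a hitting set $A$ of size $\tilde O(n/s)$ that contains a neighbor of every heavy vertex (via $\Hit(G,s)$), and let $E_s \subseteq E$ be the edges incident to at least one light vertex, so that $|E_s| \leq ns$. Then:
\begin{enumerate}
\item Run exact BFS from every $a \in A$ in the full graph $G$, costing $\tilde O(|A|\cdot m) = \tilde O(nm/s)$ time.
\item Run exact BFS from every $v \in V$ in the subgraph $(V, E_s)$, costing $\tilde O(n \cdot ns) = \tilde O(n^2 s)$ time.
\item Output $\delta(u,v) := \min\bigl\{d_{(V,E_s)}(u,v),\ \min_{a \in A} d_G(u,a) + d_G(a,v)\bigr\}$.
\end{enumerate}
For correctness: if the shortest $u$-$v$ path uses only light vertices then $d_{(V,E_s)}(u,v) = d(u,v)$; otherwise the path contains a heavy vertex $x$ with a neighbor $a \in A$, so $d(u,a)+d(a,v)\leq d(u,v)+2$ by the triangle inequality. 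Balancing $n^2 s$ against $nm/s$ at $s = \sqrt{m/n}$ gives $\tilde O(n^{3/2} m^{1/2})$.

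The dense-graph bound $\tilde O(n^{7/3})$ is where the main obstacle lies, since the $nm/s$ term blows up at $m = \Theta(n^2)$. The fix is to avoid full BFS from each $a \in A$: instead, compute distances from $A$ indirectly by reusing the light-edge BFS from step~2 and inserting only short "heavy" hops through an additional, coarser hitting set. This keeps the per-source BFS work at roughly $\tilde O(n \cdot s)$ rather than $\tilde O(m)$; setting $s = n^{1/3}$ then gives $n^2 s = n^{7/3}$.

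For part~(ii), the $+k$ bound, the idea is to layer the construction: take a geometric sequence of thresholds $s_1 < s_2 < \cdots < s_{k/2}$, compute a hitting set $A_i$ of size $\tilde O(n/s_i)$ at each level, and iterate the step from part~(i) across these levels. For any shortest $u$-$v$ path, consider the largest-degree vertex along it and identify the corresponding level $i$; routing through $A_i$ adds at most $+2$ to the estimate. Doing this across all $k/2$ scales accumulates an additive error of at most $+k$. Choosing the thresholds so that the $k/2$ levels each contribute equal work of $\tilde O(n^{2-2/(k+2)} m^{2/(k+2)})$ gives the first bound; the dense variant $\tilde O(n^{2+2/(3k-2)})$ comes from applying the same indirect BFS trick as in the dense case of part~(i) at every level simultaneously.

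The main obstacle in both parts is the dense-graph regime, where one must avoid BFS from $A$ on the full graph and instead combine already-computed light BFS trees with BFS from an auxiliary coarser hitting set; the delicate piece is to show this composition preserves exact distances for paths that fall in the corresponding degree band. A secondary subtlety in part~(ii) is proving that across the $k/2$ nested hitting sets, no shortest path incurs more than a $+2$ error per level, so that the errors telescope to exactly $+k$ rather than something larger.
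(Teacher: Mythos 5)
First, note that this paper does not prove the statement at all: it is imported verbatim from [DHZ00], so there is no in-paper proof to compare against and your reconstruction has to stand on its own. For the $\tilde O(n^{3/2}m^{1/2})$ bound in part (i), your steps 1--2 and the correctness argument are the right ACIM-style skeleton, but your step 3 is a brute-force min-plus combination over all of $A$, which costs $\tilde O(n^2|A|)=\tilde O(n^{7/2}m^{-1/2})$ for $|A|=\tilde O(n/s)$ and $s=\sqrt{m/n}$. That exceeds the claimed bound by a factor of $n^2/m$, i.e.\ it already fails for every $m=o(n^2)$ (for $m=\tilde O(n)$ it is $\tilde O(n^3)$ against a target of $\tilde O(n^2)$). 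This is not a cosmetic omission: avoiding exactly this combination step is the main content of Dor--Halperin--Zwick beyond Aingworth et al. Their algorithm folds the combination into a per-source Dijkstra on a sparse auxiliary graph (the light edges together with weighted star edges carrying the already-computed estimates from the hitting-set sources), so that each source costs roughly $\tilde O(ns)$ rather than $\tilde O(n|A|)$. Indeed, this same combination step is the bottleneck that the present paper identifies in its technical overview and resolves with rectangular matrix multiplication; a combinatorial proof of the DHZ bound cannot simply leave it uncosted.

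The remaining claims are gestured at rather than proved. For the $\tilde O(n^{7/3})$ bound you yourself flag the dense regime as ``the main obstacle'' and describe the fix only as inserting ``heavy hops through a coarser hitting set''; the delicate part --- that the composition of light-edge BFS trees with a second-level hitting set still certifies a $+2$ surplus --- is precisely what needs proof and is absent. For part (ii), the error accounting is internally inconsistent: you say that for each pair one identifies the single level $i$ matching the largest-degree vertex on its shortest path and that routing through $A_i$ ``adds at most $+2$,'' yet conclude a total error of $+k$. If exact distances to and from $A_i$ were available at every level, the error would indeed be $+2$ at every level, which is too strong at this running time. The actual reason the error is $+k$ is that the levels form a recursion: only the top (smallest) hitting set affords exact BFS on the full graph, and the Dijkstras at level $i$ use the level-$(i-1)$ estimates as edge weights, so an induction shows the surplus grows by $2$ per level of the recursion. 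Your sketch neither sets up this recursion nor gives the induction, so the $+k$ guarantee (and the corresponding running-time balance) is not established.
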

In particular, this gives a $+\log n$-approximation in $\tilde O(n^2)$ time. Also note that $n^{3/2}m^{1/2}= n^{2-\tfrac{2}{k+2}}m^{\tfrac{2}{k+2}}$ for $k=2$. So we can also say that for even $k\geq 2$ there exists a $+k$-approximate APSP algorithm that runs in $n^{2-\tfrac{2}{k+2}}m^{\tfrac{2}{k+2}}$ time.

\ThmNearAdditive*
\begin{proof}
Again, we use \autoref{thm:sparse_APSP}.
For algorithm $\mathcal A$, we use the result of \autoref{thm:DHZ} for sparse graphs. We apply it to a graph with $n$ vertices and $n^{2-r}$ edges, which leads to the running time 
$\tilde O(n^{2+(1-r)\tfrac{2}{k+2}})$. 

For algorithm $\mathcal B$, we use $(1+\epsilon/2)$-approximate rectangular matrix multiplication in time $\tilde O(n^{\omega (r)})(1/\epsilon)^{O(1)}$.

By \autoref{thm:sparse_APSP}, for each pair of vertices we either obtain a stretch of $(0,k)$, or a stretch of $(1+\epsilon/2,(1+\epsilon/2)2)$,
Note that for any $d(u,v)\geq 2$ we have $(1+\epsilon/2)d(u,v)+(1+\epsilon/2)\cdot 2 \leq (1+\epsilon)d(u,v)+2$, so we have a $(1+\epsilon, 2)$-approximation
hence a $(1+\epsilon,k)$-approximation in total. 

Also by \autoref{thm:sparse_APSP}, we obtain a running time of $\tilde O(n^{2+(1-r)\tfrac{2}{k+2}}+n^{\omega(r)}(1/\epsilon))$, for any choice of $r\in [0,1]$~\cite{Zwick02} (see also \autoref{thm:RMM_Zwick}).

Since both \cite{DHZ00} and \cite{Zwick02} are deterministic, the final result is deterministic. 
\end{proof}

To give optimal results, we pick $k$ as a function of $r$. Since there is no closed form for (the state of the art of) $\omega(r)$, we balance it for specific $k$ using~\cite{balancer}. 
\begin{itemize}
    \item[$k=2$:] we have $\tilde O(n^{2+(1-r)/2}+n^{\omega(r)}(1/\epsilon))= \tilde O(n^{2.15195331}/\epsilon)$, for $r= 0.69609339$.
    \item[$k=4$:] we have $\tilde O(n^{2+(1-r)/3}+n^{\omega(r)}(1/\epsilon))= \tilde O(n^{2.11900756}/\epsilon)$, for $r= 0.64297733$.
    \item[$k=6$:] we have $\tilde O(n^{2+(1-r)/4}+n^{\omega(r)}(1/\epsilon))= \tilde O(n^{2.0981921}/\epsilon)$, for $r= 0.60723159$.
    \item[$k=8$:] we have $\tilde O(n^{2+(1-r)/5}+n^{\omega(r)}(1/\epsilon))= \tilde O(n^{2.08383115}/\epsilon)$, for $r=0.58084427$.
    \item[$k\geq 10$:] The exponent will go to $2$, as $k$ goes to $\log n$. However, for $k\geq 10$, we do not improve upon the results of Dor, Halperin, and Zwick~\cite{DHZ00}.
\end{itemize}

\begin{table}[ht]
\begin{center}
\begin{tabular}{|c|c|c|c|} 
\hline
    $k$ & This work & Previous results (for $m=n^2$) \\ \hline
    $2$ & $n^{2.152}$ &  $n^{2.24+o(1)}$ \cite{berman2007faster} \\  \hline
    $4$ & $n^{2.119}$ &  $n^{2.2}$ \cite{DHZ00} \\  \hline
    $6$ & $n^{2.098}$ &  $n^{2.125}$ \cite{DHZ00} \\ \hline
    $8$ & $n^{2.084}$ &  $n^{2.091}$ \cite{DHZ00} \\ \hline
\end{tabular}
\end{center}
\caption{Comparison of our $(1+\epsilon,k)$-approximation and prior work.}
\label{table_near_additive}
\end{table}


\paragraph{Discussion.}
For future work, a further possibility would be to use a $+k$-approximation for Algorithm~$\mathcal B$ in \autoref{thm:sparse_APSP}, which in total gives a $+(k+2)$-approximation. However, that requires an efficient $+k$-approximation for MSSP, to the best of our knowledge no such algorithm exists at this time. Note that the $+k$ algorithm of \cite{DHZ00} is APSP and does not give a faster MSSP. Hence reducing $+(k+2)$ to $+k$ with that algorithm is only \emph{slower}.

\section{Weighted \texorpdfstring{$(2+\epsilon)$}{(2+epsilon)}-Approximate APSP}\label{sc:weighted}
The techniques of the previous section do not generalize well to the weighted setting. In this section we present an alternative approach for weighted graphs. 
First, we review some standard definitions and results on bunches and clusters in \autoref{sc:bunches}. Then we continue with two different approaches to some subroutines to obtain efficient algorithms for sparse and dense graphs, in \autoref{sc:static_distance_oracles} and \autoref{sc:weighted_dense} respectively. Finally, we show that we can generalize the latter result to obtain faster algorithms in a wider density range in \autoref{sc:parBK}.
Moreover, in \autoref{app:static2W}, we provide a faster algorithm for a $(2,W)$-approximation. This can be seen as an adaptation of \autoref{sc:static_distance_oracles}.

\subsection{Bunches and Clusters}\label{sc:bunches} 
We use the concepts bunches and clusters as defined by Thorup and Zwick~\cite{TZ2005}. Given a parameter $p\in[\tfrac{1}{n},1]$, called the \emph{cluster sampling rate}, we define $S$ to be a set of sampled vertices, where each vertex is part of $S$ with independent probability $p$. The \emph{pivot} $p(v)$ of a vertex $v$ is defined as the closest vertex $s$ in $S$, i.e., $p(v):= \arg \min_{s\in S} d(s,v)$. With equal distances, we can break ties arbitrarily. Now we define the bunch $B(u)$ of $u$ by $B(u):= \{v\in V : d(u,v) < d(u,p(u) \}$. If the set $S$ is not clear from context, we write $B(u,S)$. Clusters are the inverse bunches: $C(v) := \{u \in V : d(u,v) < d(u,p(u)\}=\{ u\in V: v\in B(u)\}$. By our random choice of $S$, we have with high probability that $|B(u)|\leq O(\tfrac{\log n}{p})$. 

Thorup and Zwick~\cite{TZ2005} showed how to compute this efficiently, in $\tilde O(\frac{m}{p})$ time. It is immediate that the \emph{total} load of the bunches and clusters is the same: $\sum_{u\in V}|B(u)|=\sum_{v\in V} |C(v)|$. Hence the \emph{average} size of each cluster is $\tilde O(\tfrac{1}{p})$. However, the maximal load over all clusters can still be big. 
In a later work, Thorup and Zwick~\cite{TZ01} showed that by refining the set $S$, we can bound the bunch and cluster sizes simultaneously.

\begin{lemma}[\cite{TZ01},\cite{TZ2005}]\label{lm:computebunches}
    There exists an algorithm \textsc{ComputeBunches}($G,p$) that, given a weighted graph $G$ and a parameter $p\in [\tfrac{1}{n},1]$, computes 
    \begin{itemize}
        \item a set of vertices $S\subseteq V$ of size $\tilde O(pn)$;
        \item pivots $p(u)\in A$ such that $d(u,p(u))=d(u,S)$ for all $u\in V$;
        \item the bunches and clusters with respect to~$S$;
        \item distances $d(u,v)$ for all $u\in V$ and $v\in B(u)\cup\{p(u)\}$.
    \end{itemize} 
    With high probability, we have that both the bunches and clusters have size at most $\tilde O(\tfrac{1}{p})$. The algorithm runs in time $\tilde O(\tfrac{m}{p})$.
\end{lemma}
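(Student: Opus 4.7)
The plan is to follow the classical Thorup--Zwick construction and then apply the refinement that bounds cluster sizes in addition to bunch sizes.

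First, I would sample $S$ by including each vertex of $V$ independently with probability $p$; a standard Chernoff bound gives $|S| = \tilde O(pn)$ with high probability (using $p \ge 1/n$). To compute the pivots $p(u)$ and the distances $d(u,p(u)) = d(u,S)$ for all $u$ simultaneously, I would add a virtual source $s_0$ with a zero-weight edge to every vertex of $S$ and run a single Dijkstra from $s_0$. The distance from $s_0$ to $u$ equals $d(u,S)$, and the predecessor recovered on the tree identifies $p(u)$. This step takes $O(m + n \log n)$ time.

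Next, to compute the bunches $B(u)$ together with the distances $d(u,v)$ for $v \in B(u) \cup \{p(u)\}$, I would run a truncated Dijkstra from each $v \in V$ that explores only those $u$ with $d(u,v) < d(u,p(u))$ and prunes at the first violation (this pruning is sound because $d(u,v)$ is monotone along shortest paths in the Dijkstra order). Since $v$ ends up in $B(u)$ iff $u$ ends up in $C(v)$, the work of the Dijkstra rooted at $v$ is proportional to $|C(v)|$ plus the edges leaving $C(v)$. Once all bunches are known, the clusters $C(v)$ are obtained in $\sum_u |B(u)|$ time by inverting the bunch lists, and the distance tables $d(u,v)$ for $v \in B(u)$ are recorded as a by-product of the truncated Dijkstras.

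The main obstacle is bounding $\max_v |C(v)|$ with high probability, since a uniformly random $S$ only controls bunch sizes and total cluster mass, not the worst-case cluster. Here I would invoke the refinement of~\cite{TZ01}: starting from the random $S$, iteratively augment $S$ by resampling vertices from any cluster $C(v)$ whose size exceeds $c \log n / p$, each with probability $p$. With high probability every such round shrinks the offending cluster by a constant factor, so $O(\log n)$ rounds suffice, $|S|$ grows by only a logarithmic factor and stays $\tilde O(pn)$, and the work over all rounds telescopes into $\tilde O(m/p)$. After termination, every bunch and every cluster has size $\tilde O(1/p)$ with high probability, and the pivot, bunch, cluster, and distance computations described above all fit inside the claimed $\tilde O(m/p)$ budget.
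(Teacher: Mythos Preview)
The paper does not give its own proof of this lemma; it is simply quoted from \cite{TZ2005,TZ01} and used as a black box. So there is no argument in the paper to compare your sketch against. Your outline is essentially the standard Thorup--Zwick construction: random sampling for $S$, a virtual-source Dijkstra for pivots, truncated Dijkstra from each $v$ to grow $C(v)$ (using the fact that clusters are closed under shortest-path prefixes, which is what makes the pruning sound), and then inversion to read off the bunches. The running-time accounting $\sum_v (\text{edges touching } C(v)) = \sum_u \deg(u)\,|B(u)| = \tilde O(m/p)$ is the right one.

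One remark on the refinement step: your description of ``resample inside any oversized cluster with probability $p$, repeat for $O(\log n)$ rounds'' is a reasonable heuristic, but it is not literally the mechanism in \cite{TZ01}. The argument there is more delicate (they build $S$ incrementally and use a charging argument to show that only $\tilde O(pn)$ extra centers are ever added while keeping the total work at $\tilde O(m/p)$), and the claim that ``every such round shrinks the offending cluster by a constant factor'' does not follow just from independent resampling inside one cluster, since clusters of different $v$ overlap and adding a center changes many clusters at once. If you want a self-contained proof rather than a citation, you should either reproduce the actual \cite{TZ01} argument or prove carefully that your resampling variant terminates in $O(\log n)$ rounds and adds only $\tilde O(pn)$ vertices in total; as written, that step is asserted rather than argued.
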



\subsection{Structure of the Algorithm}\label{sc:structure}

At a high-level, our algorithm is as follows. We compute a suitable set of centers with bounded bunch and cluster size using \autoref{lm:computebunches}. Then for each pair of vertices whose bunches are adjacent (we depict this `adjacent case' on the left in \autoref{fig:adjacent_again}) we store a distance estimate through the edge connecting them. This case also includes the case that the bunches overlap or even when $u\in B(v)$. In both cases there is still an edge connecting the bunches, it is just contained \emph{within} one of the bunches already. We show that we maintain the exact distance in the adjacent case, and otherwise we give a $2$-approximation by going through the pivot. We provide pseudo-code in \autoref{alg:2_APSP}.

\begin{algorithm}[H] \SetAlgoLined 
\KwIn{Unweighted graph $G=(V,E)$, parameter $p\in [\tfrac{1}{n},1]$}
\textsc{ComputeBunches}($G,p$)\label{line:1}\;
Compute $(1+\epsilon/2)$-approximate MSSP from $S$, denoted by $\delta_S$\label{line:2}\;
Compute $\delta_{\rm{adjacent}}(u,v) = \min\{ d(u,u')+w(u',v')+d(v',v) : \{u',v'\} \in E, u'\in B(u), v'\in B(v)\}$\label{line:3}\;
\ForEach{$u,v\in V$\label{line:4}}{
$\delta(u,v) \leftarrow \min\{d(u,p(u))+\delta_S(p(u),v), d(v,p(v))+\delta_S(p(v),u), \delta_{\rm{adjacent}}(u,v)\}$\label{line:5}}
\Return{$\delta$}
\caption{A $(2+\epsilon)$-approximate APSP algorithm for unweighted graphs} 
\label{alg:2_APSP}
\end{algorithm}

\begin{lemma}\label{lm:weighted_correctness}
    The distance estimate $\delta$ returned by \autoref{alg:2_APSP} satisfies $d(u,v)\leq \delta(u,v) \leq (2+\epsilon)d(u,v)$ for every $u,v\in V$. 
\end{lemma}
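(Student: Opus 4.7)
I would prove the two inequalities separately; the lower bound is essentially free, and the upper bound reduces to a clean case analysis on the interaction between a shortest $u$–$v$ path and the bunches $B(u),B(v)$.

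For the lower bound $d(u,v)\le\delta(u,v)$, each of the three quantities minimized in Line~\ref{line:5} corresponds to the length of a genuine $u$-to-$v$ walk. The pivot terms $d(u,p(u))+\delta_S(p(u),v)$ and $d(v,p(v))+\delta_S(p(v),u)$ are lower-bounded by $d(u,v)$ via the triangle inequality once one observes that the $(1+\varepsilon/2)$-approximate MSSP oracle $\delta_S$ from \autoref{thm:EN_MSSP} never underestimates distances. The term $\delta_{\rm{adjacent}}(u,v)$ is, by construction, the length of a walk $u\rightsquigarrow u'\to v'\rightsquigarrow v$ using exact distances and the real edge weight $w(u',v')$, hence also at least $d(u,v)$.

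For the upper bound I would fix an arbitrary shortest path $\pi=(u=x_0,x_1,\dots,x_k=v)$ and split on whether every vertex of $\pi$ is covered by $B(u)\cup B(v)$. In the \emph{non-adjacent} case there exists $x_i\in\pi$ with $x_i\notin B(u)$ and $x_i\notin B(v)$; by definition of bunches this forces $d(u,p(u))\le d(u,x_i)$ and $d(v,p(v))\le d(v,x_i)$. Since $d(u,x_i)+d(x_i,v)=d(u,v)$, after swapping the roles of $u,v$ if needed I may assume $d(u,p(u))\le d(u,v)/2$. Then
\[
\delta(u,v)\le d(u,p(u))+\delta_S(p(u),v)\le d(u,p(u))+(1+\varepsilon/2)\bigl(d(u,p(u))+d(u,v)\bigr),
\]
which simplifies to $(2+\varepsilon/2)d(u,p(u))+(1+\varepsilon/2)d(u,v)\le(2+3\varepsilon/4)d(u,v)\le(2+\varepsilon)d(u,v)$. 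The choice $\varepsilon/2$ in Line~\ref{line:2} is exactly what makes the extra cross terms absorb cleanly into $\varepsilon$.

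In the \emph{adjacent} case I would argue $\delta_{\rm{adjacent}}(u,v)\le d(u,v)$ exactly. If $u\in S$ the pivot term already gives $\delta(u,v)\le(1+\varepsilon/2)d(u,v)$ (and symmetrically if $v\in S$), so assume $u,v\notin S$. Then $u\in B(u)$ and $v\in B(v)$ trivially. Let $i$ be the largest index with $x_i\in B(u)$; such $i$ exists since $x_0=u\in B(u)$. If $i<k$ then $x_{i+1}\in B(u)\cup B(v)$ but $x_{i+1}\notin B(u)$, so $x_{i+1}\in B(v)$, and the edge $\{x_i,x_{i+1}\}$ witnesses $\delta_{\rm{adjacent}}(u,v)\le d(u,x_i)+w(x_i,x_{i+1})+d(x_{i+1},v)=d(u,v)$. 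If $i=k$, so $v\in B(u)$, I would argue that $d(u,x_{k-1})\le d(u,v)<d(u,p(u))$ puts $x_{k-1}\in B(u)$, and since $v\in B(v)$ the final edge $\{x_{k-1},v\}$ again witnesses $\delta_{\rm{adjacent}}(u,v)\le d(u,v)$.

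The main obstacle is not any single estimate but rather making sure the case split is exhaustive: the subcase $v\in B(u)$ (or $u\in B(v)$) in the adjacent branch is the one most easily overlooked, because the algorithm never directly plugs in the exact distance stored by \textsc{ComputeBunches}; it must still be recovered through $\delta_{\rm{adjacent}}$ via a carefully chosen edge. The clean way to handle all edge cases uniformly is to first dispose of $u\in S$ or $v\in S$ using only the pivot term, and then exploit that outside $S$ every vertex is in its own bunch.
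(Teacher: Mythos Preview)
Your proposal is correct and follows essentially the same route as the paper: the same lower-bound argument, the same case split on whether the shortest path $\pi$ is entirely covered by $B(u)\cup B(v)$, and the same pivot-based estimate in the non-adjacent case (the paper stops at $(2+\epsilon)d(u,v)$ rather than the intermediate $(2+3\epsilon/4)d(u,v)$, but the computation is identical). In the adjacent case the paper dispatches the existence of a witnessing edge $\{u',v'\}$ with $u'\in B(u)$, $v'\in B(v)$ in a single line (``there is always at least one such edge, since $u\neq v$ and we allow $u'=u$ and $v'=v$''), whereas you unpack it more carefully by first handling $u\in S$ or $v\in S$ via the pivot term and then treating the subcase $v\in B(u)$ explicitly; this is more detailed but not a different approach.
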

\begin{proof}
   First, note that $\delta_S(x,y)\geq d(x,y)$ for any $x,y\in V$, and all other distance estimates making up $\delta(u,v)$ correspond to actual paths in the graph, hence $d(u,v)\leq \delta(u,v)$. Next, let $\pi$ be the shortest path from $u$ to $v$. We distinguish two cases. 

   \begin{figure}
    \centering
    \includegraphics[width=\textwidth]{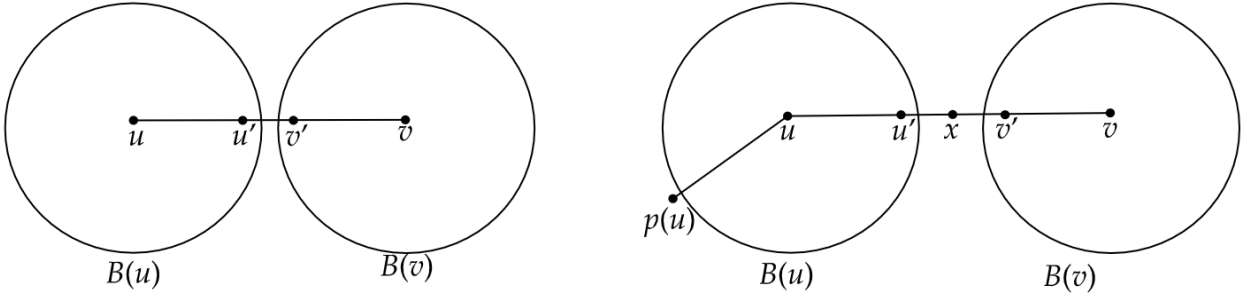}
    \caption{Two different possible interactions between the shortest path between $u$ and $v$, and the bunches of $u$ and $v$.}
    \label{fig:adjacent_again}
\end{figure}

\textbf{Case 1.} There exists $x\in \pi$ such that $x\notin (B(u)\cup B(v))$ (the right case in \autoref{fig:adjacent_again}). \\

Since $x\notin B(u)$, we have $d(u,x)\geq d(u,p(u))$, and since $x\notin B(v)$, we have $d(x,v)\geq d(v,p(v))$. Because $d(u,x)+d(x,v)=d(u,v)$, we have either $d(u,x)\leq \tfrac{d(u,v)}{2}$ or $d(x,v)\leq \tfrac{d(u,v)}{2}$. Without loss of generality, assume $d(u,x)\leq \tfrac{d(u,v)}{2}$. Then we have:
\begin{align*}
    \delta(u,v) &\leq d(u,p(u)) + \delta_S(v,p(u))\\
    &\leq d(u,p(u)) + (1+\epsilon/2)d(v,p(u))\\
    &\leq d(u,p(u)) +  (1+\epsilon/2)(d(u,p(u)) + d(u,v))\\
    &\leq (2+\epsilon/2)d(u,x) + (1+\epsilon/2)d(u,v)\\
    &\leq (2+\epsilon)d(u,v),
\end{align*}
where the third inequality holds by the triangle inequality. 

\textbf{Case 2.} There is no $x\in \pi$ such that $x\notin (B(u)\cup B(v))$ (the left case in \autoref{fig:adjacent_again}).\\
In other words, $\pi \subseteq (B(u)\cup B(v))$. This means there are vertices $u'\in B(u)$ and $v' \in B(v)$ such that $\{u',v'\}$ is an edge on the shortest path. Note that there is always at least one such edge, since $u\neq v$ and we allow $u'=u$ and $v'=v$. Since 
\begin{align*}
    \delta_{\rm{adjacent}}(u,v) 
    &= \min\{ d(u,u')+w(u',v')+d(v',v) : u'\in B(u) \text{ and }v'\in B(v)\},
\end{align*} we get $\delta(u,v) =\delta_{\rm{adjacent}}(u,v)= d(u,v)$ in this case. 

\end{proof}

In the following two sections, we give two different approaches to execute \autoref{alg:2_APSP}, giving efficient algorithms for sparse and dense graphs respectively. 

\subsection{An Efficient Distance Oracle for Sparse Graphs}\label{sc:static_distance_oracles}

In this section, we provide an efficient algorithm for $2$-approximate APSP in sparse graphs. We use the structure of the previous section and use combinatorial subroutines to obtain a combinatorial algorithm.
Note that as opposed to most of our other results (with an exception of \autoref{app:static2W}), this is a \emph{distance oracle} and not explicit APSP. For $m=\tilde O(n)$, our bounds match the conditional lower bound of $\tilde \Omega(m^{5/3})$ preprocessing time for $2$-approximations, conditional on the set intersection conjecture~\cite{PatrascuRT12} or the 3-SUM conjecture~\cite{aboud2022stronger}, where \cite{PatrascuRT12} shows the stronger $\tilde \Omega(m^{5/3})$ \emph{space} lower bound. 




\thmstatic*
\begin{proof}
\emph{Algorithm details and running time.}
Let $p\in [\tfrac{1}{n},1]$ be a parameter to be set later. We use \autoref{alg:2_APSP}, below we specify certain steps and the running time of this algorithm. 
\begin{itemize}
    \item \autoref{line:1} takes $\tilde O(\tfrac{m}{p})$ time by Lemma~\ref{lm:computebunches}. 
    \item For \autoref{line:2}, we use Dijkstra in $O(m|S|)$ time to obtain exact shortest paths ($\epsilon=0$). By \autoref{lm:computebunches} we have that $|S|=\tilde O(pn)$, hence this takes $\tilde O(pnm)$ time.
    \item By definition of bunches and clusters we have $u'\in B(u)\iff u\in C(u')$, so we can compute \autoref{line:3} as follows: for all $\{u',v'\} \in E$, for all $u\in C(u')$, and for all $v\in C(v')$:
\begin{enumerate}
    \item Initialize $\delta_{\rm{adjacent}}(u,v) \leftarrow d(u,u')+w(u',v')+d(v',v)$ if no such entry exists. \label{step:adj_init}
    \item Otherwise: $\delta_{\rm{adjacent}}(u,v)\leftarrow \min\{ \delta_{\rm{adjacent}}(u,v), d(u,u')+w(u',v')+d(v',v)\}$.\label{step:adj_update}
\end{enumerate}
    We note that Step~\ref{step:adj_init} and~\ref{step:adj_update} can both be done in constant time, so \autoref{line:3} takes
\[ \sum_{\{u',v'\}\in E} \sum_{u\in C(u')} \sum_{v\in C(v')} O(1) = \sum_{\{u',v'\}\in E} O(|C(u')|\cdot |C(v')|) = \sum_{\{u',v'\}\in E} \tilde O(\tfrac{1}{p^2}), \]
where the last equality holds by \autoref{lm:computebunches}. This means it takes $\tilde O(\tfrac{m}{p^2})$ time in total. 
\item Instead of executing the for-loop of \autoref{line:4}, we execute \autoref{line:5} in the query. This clearly takes constant time for a fixed pair $u,v\in V$. 
\end{itemize}
Together we obtain a running time of $\tilde O(\tfrac{m}{p}+pnm+\tfrac{m}{p^2})=\tilde O(pnm+\tfrac{m}{p^2})$. Balancing $pnm=\tfrac{m}{p^2}$ gives $p=n^{-1/3}$, so total running time $\tilde O(mn^{2/3}).$

\emph{Correctness.}
This holds by \autoref{lm:weighted_correctness}. As detailed above, we have $\epsilon=0$, obtaining a $2$-approximation.


\emph{Space.} We need $  O(|S|n)=\tilde O(pn^2)$ space for the distances from $S$, and $\tilde O(\tfrac{m}{p^2})$ space for the adjacent data structure. All other space requirements are clearly smaller. Inserting $p=n^{-1/3}$, we obtain total space requirement $\tilde O(mn^{2/3}+n^{5/3})=\tilde O(mn^{2/3})$.
\end{proof}

\subsection{\texorpdfstring{$(2+\epsilon)$}{2+epsilon}-Approximate APSP for Dense Graphs} \label{sc:weighted_dense}
In this section, we provide an efficient algorithm for $(2+\epsilon)$-approximate APSP for dense graphs. In this case, by dense we mean $m=n^2$. The algorithm of this section already improve on the state of the art for a wider range of $m$, but we defer the case of $m=o(n^2)$ to \autoref{sc:parBK}, where we obtain better results for this regime.  

\ThmTwoApxWeighted* 
\begin{proof}
Again, we will use a parameter $p\in [\tfrac{1}{n},1]$. For ease of notation, we let $r\in [0,1]$ such that $p=n^{r-1}$. We use \autoref{alg:2_APSP}, below we specify certain steps and the running time of this algorithm. 
\emph{Algorithm details and running time.}
\begin{itemize}
    \item \autoref{line:1} takes $\tilde O(\tfrac{m}{p})$ time by Lemma~\ref{lm:computebunches}. 
    \item For \autoref{line:2}, we use \cite{ElkinN22} this takes time $\tilde O(m^{1+o(1)}+n^{\omega(r)}(1/\epsilon)^{O(1)}\log W)$.
    \item We execute \autoref{line:3} as follows, in $\tilde O(\tfrac{n^2}{p})$ time:
\begin{enumerate}[a)]
        \item For all $v'\in V$, run Dijkstra on the graph $G_{v'}=(V,E_{v'})$, where $E_{v'}$ consists of all edges incident to $v'$, and for each vertex it contains edges connecting the vertex to the bunch. More formally, $E_{v'}:= \{\{v',x\}\in E : x\in V\}\cup \bigcup_{x\in V} \{ \{x,y\} : y\in B(x)\}$, with weights $w_{G_{v'}}(v',x):=w_G(v',x)$ and $w_{G_{v'}}(x,y):= d_G(x,y)$ respectively. 
        This second term has size $\tilde O(\tfrac{n}{p})$, since each bunch has size $\tilde O(\tfrac{1}{p})$. Hence computing SSSP with Dijkstra takes $\tilde O(\tfrac{n}{p}+n) = \tilde O(\tfrac{n}{p})$ time.\label{step:dijkstra_with_bunches1}
        \item For all $u\in V$, run Dijkstra on the graph $G_u':= (V,E_u')$, where $E_u'$ consists of the distances computed in the previous step and again the edges between vertices and their bunch. More formally, $E_u':= (\{u\}\times V)\cup \bigcup_{x\in V} \{ \{x,y\} : y\in B(x)\}$, with weights $w_{G_u'}(u,v'):=d_{G_{v'}}(v',u)$ and $w_{G_u'}(x,y):= d_G(x,y)$ respectively.
        For each $u$, the graph $G_u'$ has $\tilde O(\tfrac{n}{p})$ edges, hence computing shortest paths takes $\tilde O(\tfrac{n}{p})$ time using Dijkstra.  Denote the output of this step by $\delta_{\rm{adjacent}}(u,v)$.\label{step:dijkstra_with_bunches2}  
\end{enumerate}
\emph{Correctness.}
From Step~\ref{step:dijkstra_with_bunches1} we obtain an edge $\{u,v'\}$ for each $v'\in V$ such that there exists $u'\in B(u)$ with $\{u',v'\}\in E$. This edge has weight $w(u,v')=\min_{u'\in B(u)} d(u,u')+w(u',v')$. By Step~\ref{step:dijkstra_with_bunches2} we combine this with the shortest path from $v'$ to $v$ for $v'\in B(v)$. So in total we get $\delta_{\rm{adjacent}}(u,v)= \min\{ d(u,u')+w(u',v')+d(v',v) : \{u',v'\} \in E, u'\in B(u), v'\in B(v)\}$.
    \item The for-loop of \autoref{line:4} takes $n^2$ time.
\end{itemize}
In total we have $\tilde O(\tfrac{n^2}{p}+n^{\omega(r)}(1/\epsilon)^{O(1)}\log W)=\tilde O(n^{3-r}+n^{\omega(r)}(1/\epsilon)^{O(1)}\log W)$. For $r=0.7868671417236328$, we get $\tilde O(n^{2.21313612}(1/\epsilon)^{O(1)}\log W)=O(n^{2.214}(1/\epsilon)^{O(1)}\log W)$, using \cite{balancer}.

    \emph{Correctness.} By \autoref{lm:weighted_correctness} we obtain $(2+\epsilon)$-approximate APSP. 
\end{proof}


\subsection{A Parameterized APSP Algorithm for Weighted Graphs}\label{sc:parBK}
This section generalizes the $2$-approximation of Baswana and Kavitha~\cite{BaswanaK10}, such that there is a parameter in the running time controlling from how many sources we need to compute shortest paths. We then use fast matrix multiplication results to compute MSSP~\cite{ElkinN22} to do this part efficiently, and balance the parameters. 
We follow the algorithms and proofs of \cite{BaswanaK10}, making adjustments where necessary. 

We start by creating a hierarchy of $k= (1-r)\log n$ sets: $V= S_0 \supseteq S_1 \supseteq \dots \supseteq S_k$. We refer to this as an \emph{$r$-hierarchy}. For notational purposes we also have a set $S_{k+1}$, which we define to be the empty set: $S_{k+1}=\emptyset$.
We create these by starting with a hierarchy created by subsampling: we start with all vertices $S_0:=V$ and each subsequent $S_i'\subseteq S_{i-1}'$ is created by selecting each element of $S_{i-1}'$ with probability $\tfrac{1}{2}$. Now $|S_k'| =\tilde O(n \cdot (\tfrac{1}{2})^{(1-r)\log n})=\tilde O(n^r)$ w.h.p. This creates $V= S_0' \supseteq S_1' \supseteq \dots \supseteq S_{k}'$. Now we use \autoref{lm:computebunches} to compute a set of pivots $S$ of size $|S|=\tilde O(n^r)$, and we set $S_i=S_i'\cup S$. By adding the set $S$, we guarantee that the size of each cluster is bounded by at most $\tilde O(n^{1-r})$, by \autoref{lm:computebunches}.
We note that w.h.p.\ $|S_i|\leq |S_i'|+|S|= \tilde O(\tfrac{n}{2^i})+\tilde O(n^r)=\tilde O(\tfrac{n}{2^i})$. In particular, the last set of sources $S_k$ as size $|S_k| \leq |S_k'|+|S|=\tilde O (n^r)$. 
Using \autoref{lm:computebunches}, it is easy to see that we can compute an $r$-hierarchy in $\tilde O(mn^{1-r})$ time w.h.p., including pivots, bunches and clusters for each $S_i$.


Later, we compute shortest paths from the last level $S_k$, and show that either this gives a $2$-approximation, or that we can obtain a $2$-approximation through the lower levels. The latter is done with \autoref{alg:BK_scheme}. Here, for each level, we compute shortest paths from $S_i$ in a sparser graph, where the distances to some essential vertices equal the distances in the original graph. This suffices for correctness: on at least one of the levels, a distance estimate through the pivot of that level gives a $2$-approximation, see \autoref{lm:BK_scheme_correctness}.

\emph{Notation.} We denote $d(v,A)$ for the distance from a vertex $v\in V$ to a set $A\subseteq V$, i.e., $d(v,A):= \min_{u\in A} d(v,u)$. Next, we define $E_A(v)$ as the set of edges with weight less than $d(v,A)$, i.e., $E_A(v):= \{ \{u,v\} \in E : w(u,v) \leq d(v,A)\}$. And finally, we define $E_A:=\sum_{v\in V} E_A(v)$.
We recall that we use $B(u,A)$ to denote the bunch of $u$ w.r.t.\ some set of pivots $A$. 

\begin{algorithm}[H] \SetAlgoLined 
 \KwIn{An $r$-hierarchy $V= S_0 \supseteq S_1 \supseteq \dots \supseteq S_k \supseteq S_{k+1}= \emptyset$, for $k=(1-r)\log n$} 
\ForEach{$i\in [0,k]$ and $u\in V$}{
Compute pivot $p_i(u)$ of $u$ w.r.t.\ $S_i$\label{line:bk_scheme_pivot}\;
$\delta(u,p_i(u)) \leftarrow d(u,p_i(u))$}
\ForEach{$u\in V\setminus S_k$}{
Compute $B(u,S_k)$\label{line:bk_scheme_bunches}\;
\ForEach{$x \in B(u,S_k)$ and $0\leq i\leq k$}{
\ForEach{neighbor $y$ of $x$}{
$\delta(p_i(u),y) \leftarrow \min\{\delta(p_i(u),y), \delta(u,p_i(u))+\delta(u,x)+w(x,y)\}$ \label{line:bk_scheme_assigny}
}
}
}
\ForEach{$i\in [0,k-1]$ and $s\in S_i$}{
Run Dijkstra from $s$ on $(V,E_{S_{i+1}}\cup \{s\}\times V)$, with $w(s,v)=\delta(s,v)$, and update $\delta(s,v)$ for all $v\in V$ accordingly \label{line:bk_scheme_dijsktra}
}
\Return{$\delta$}
\caption{A subroutine for computing APSP~\cite[Algorithm 8]{BaswanaK10}} 
\label{alg:BK_scheme}
\end{algorithm}

Independent of our choice of $r$, this algorithm ensures a $2$-approximation for certain vertices. 

\begin{restatable}{lemma}{LmBKSchemeCorrectness}
[\cite{BaswanaK10} Theorem 7.3]\label{lm:BK_scheme_correctness}
    Let $u,v\in V$ be any two vertices, and let $\delta$ be the output of \autoref{alg:BK_scheme}. If $d(u,p_k(u))+d(v,p_k(v))>d(u,v)$, then  
    \begin{align*}
        \min_{0\leq i\leq k} \{\delta(u,p_i(u)) +\delta(p_i(u),v), \delta(v,p_i(v))+\delta(p_i(v),u)\} \leq 2d(u,v).
    \end{align*}
    This holds for any choice of $r\in[0,1]$.
\end{restatable}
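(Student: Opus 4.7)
The plan is to adapt the Baswana--Kavitha $2$-approximation analysis to our parameterized setting with $k = (1-r)\log n$ levels. By the hypothesis, at least one of $d(u, p_k(u))$ and $d(v, p_k(v))$ exceeds $d(u,v)/2$, so by the symmetry of the claim I assume WLOG that $d(u, p_k(u)) > d(u,v)/2$. Abbreviate $d := d(u,v)$ and $d_i := d(u, p_i(u))$. Since $0 = d_0 \le d_1 \le \dots \le d_k$ and $d_k > d/2$, there is a largest index $i^* \in \{0, \dots, k-1\}$ satisfying $d_{i^*} \le d/2 < d_{i^*+1}$. Setting $p := p_{i^*}(u)$, we have $\delta(u, p) = d_{i^*} \le d/2$, so it suffices to show $\delta(p, v) \le d + d_{i^*}$, which yields $\delta(u, p) + \delta(p, v) \le 2 d_{i^*} + d \le 2d$.

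The structural fact I would exploit is that the shortest $u$-$v$ path $\pi$ is contained in $B(u, S_k) \cup B(v, S_k)$: any vertex $w$ on $\pi$ satisfies $d(u, w) + d(w, v) = d$, and if $w$ were in neither bunch then $d \ge d_k(u) + d_k(v) > d$, a contradiction. Since $S_k \subseteq S_{i^*+1}$ gives $B(u, S_{i^*+1}) \subseteq B(u, S_k)$, I let $z$ be the vertex of $\pi$ farthest from $u$ along $\pi$ with $z \in B(u, S_{i^*+1})$; this is well defined because $u \in B(u, S_{i^*+1})$. In the main case $z \ne v$, let $z'$ be the vertex of $\pi$ immediately after $z$. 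Since $z \in B(u, S_k)$ and $z'$ is a graph neighbor of $z$, \autoref{alg:BK_scheme} line~\ref{line:bk_scheme_assigny} applied with this $u$, $x = z$, $y = z'$, and $i = i^*$ guarantees $\delta(p, z') \le d_{i^*} + d(u, z) + w(z, z') = d_{i^*} + d(u, z')$.

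When Dijkstra is subsequently run from $p \in S_{i^*}$ on $(V, E_{S_{i^*+1}} \cup \{p\} \times V)$ at \autoref{alg:BK_scheme} line~\ref{line:bk_scheme_dijsktra}, it yields $\delta(p, v) \le \delta(p, z') + d_{E_{S_{i^*+1}}}(z', v)$, where $d_{E_{S_{i^*+1}}}$ denotes distance in the sparsifier. If the sparsifier preserves the subpath of $\pi$ from $z'$ to $v$, then combining the estimates gives $\delta(p, v) \le d_{i^*} + d(u, z') + d(z', v) = d_{i^*} + d$ as required. The main technical hurdle is therefore showing that every edge on the subpath of $\pi$ from $z'$ to $v$ lies in $E_{S_{i^*+1}}$, i.e., that its weight is at most $d(\cdot, S_{i^*+1})$ at one of its endpoints. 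The leverage is that every vertex $w$ on this subpath lies outside $B(u, S_{i^*+1})$ by the choice of $z$; combining this with the triangle inequality relating the pivot distances at consecutive vertices of $\pi$ closes the inclusion in the style of~\cite{BaswanaK10}. The parameter $r$ only changes the number of levels and not this local inclusion, so the original argument transfers verbatim. The degenerate case $z = v$ (equivalently $d < d_{i^*+1}$) is handled symmetrically by relaxing the direct edge $(p, v)$ of the Dijkstra graph through $u \in C(v, S_k)$.
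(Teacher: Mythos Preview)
Your overall outline mirrors the Baswana--Kavitha argument, but the choice of the level $i^*$ is wrong, and this is exactly what makes the ``main technical hurdle'' unprovable as stated. You take the WLOG at level $k$ (namely $d(u,p_k(u))>d/2$) and then set $i^*$ to be the largest $i$ with $d_i(u)\le d/2$. From this you only get $d_{i^*+1}(u)>d/2$; you do \emph{not} get $d_{i^*+1}(u)+d_{i^*+1}(v)>d$, so you cannot conclude that the suffix of $\pi$ beyond $z$ lies in $B(v,S_{i^*+1})$. And without that containment, the claim ``every edge on the subpath from $z'$ to $v$ lies in $E_{S_{i^*+1}}$'' is false in general: take an edge $(a,b)$ on that subpath with both $d(a,S_{i^*+1})$ and $d(b,S_{i^*+1})$ tiny (a level-$(i^*{+}1)$ pivot can sit arbitrarily close to $a$ and $b$ without contradicting anything you have assumed), so $w(a,b)$ exceeds both and the edge is absent from $E_{S_{i^*+1}}$. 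Knowing only that the vertices lie outside $B(u,S_{i^*+1})$, or even that they lie in $B(v,S_k)$, is not enough, since $E_{S_{i^*+1}}\subseteq E_{S_k}$ and not the other way around.

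The fix, which is what \cite{BaswanaK10} actually does, is to choose $i^*$ as the largest index with $d_{i^*}(u)+d_{i^*}(v)\le d$ (this exists since $d_0=0$ and $d_k(u)+d_k(v)>d$), and only \emph{then} apply the WLOG to pick the endpoint with $d_{i^*}(\cdot)\le d/2$. Now $d_{i^*+1}(u)+d_{i^*+1}(v)>d$, so $\pi\subseteq B(u,S_{i^*+1})\cup B(v,S_{i^*+1})$, and every vertex from $z'$ to $v$ lies in $B(v,S_{i^*+1})$. For an edge $(a,b)$ on this subpath with $b$ closer to $v$ one then gets
\[
w(a,b)=d(v,a)-d(v,b)<d_{i^*+1}(v)-d(v,b)\le d(b,S_{i^*+1}),
\]
the last step by the reverse triangle inequality, so $(a,b)\in E_{S_{i^*+1}}(b)$. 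This is the triangle-inequality step you gestured at, but it needs the $B(v,S_{i^*+1})$ containment to go through. With this corrected choice of $i^*$ the rest of your argument (line~\ref{line:bk_scheme_assigny} for $x=z\in B(u,S_{i^*+1})\subseteq B(u,S_k)$, then Dijkstra on $E_{S_{i^*+1}}$, and your handling of the degenerate case) is fine.
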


The running time however does depend on $r$. By computing shortest paths on the sparser graph, we avoid the expensive computation of shortest paths from $S_i$ to all of $V$. Instead, for each level, we require $\tilde O(mn^{1-r})$ time to construct the sparser graph, and $\tilde O(n^2)$ time to compute shortest paths from $S_i$. 

\begin{restatable}{lemma}{LmBKSchemeRuntime}[\cite{BaswanaK10} Lemma 7.4]\label{lm:BK_scheme_time}
    Given $r\in [0,1]$, \autoref{alg:BK_scheme} takes $\tilde O(mn^{1-r}+n^2)$ time w.h.p. 
\end{restatable}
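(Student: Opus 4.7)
The plan is to bound the cost of \autoref{alg:BK_scheme} phase by phase and show that the four contributions sum to $\tilde O(mn^{1-r}+n^2)$. The phases are: (i) computing, for every $i\in[0,k]$, the pivots $p_i(u)$ and the exact distances $d(u,p_i(u))$; (ii) computing the bunches $B(u,S_k)$; (iii) the quadruple loop that updates $\delta(p_i(u),y)$; and (iv) the Dijkstra computations on the sparsified graphs $(V,E_{S_{i+1}}\cup\{s\}\times V)$. Since $k=(1-r)\log n=\tilde O(1)$, multiplicative $k$ factors will always be absorbed into $\tilde O(\cdot)$.

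Phase (i) is handled by one SSSP from a virtual super-source into each $S_i$, costing $\tilde O(m)$ per level and $\tilde O(km)=\tilde O(m)$ in total. Phase (ii) follows directly from \autoref{lm:computebunches} applied with cluster sampling rate $p=n^{r-1}$: since $|S_k|=\tilde O(n^r)$, computing the bunches and clusters with respect to $S_k$ costs $\tilde O(m/p)=\tilde O(mn^{1-r})$, and the same lemma guarantees $|B(u,S_k)|,|C(x,S_k)|=\tilde O(n^{1-r})$ with high probability. For phase (iii), I swap the two outer summations to obtain
\[
\sum_{u\in V\setminus S_k}\sum_{x\in B(u,S_k)}\sum_{i=0}^{k}\sum_{y\in N(x)}O(1)\ =\ O(k)\sum_{x\in V}\deg(x)\cdot|C(x,S_k)|\ =\ \tilde O(n^{1-r})\cdot\sum_{x\in V}\deg(x)\ =\ \tilde O(mn^{1-r}),
\]
using $|C(x,S_k)|=\tilde O(n^{1-r})$ w.h.p.\ and $\sum_x\deg(x)=2m$.

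Phase (iv) is the key step and carries the $n^2$ term. The crucial observation is that any edge $\{u,v\}\in E_{S_{i+1}}(v)$ satisfies $w(u,v)\leq d(v,S_{i+1})$, so $d(v,u)\leq d(v,p_{i+1}(v))$ and therefore $u\in B(v,S_{i+1})\cup\{p_{i+1}(v)\}$; hence $|E_{S_{i+1}}(v)|\leq|B(v,S_{i+1})|+1$. The refined construction of \autoref{lm:computebunches} ensures $|B(v,S_{i+1})|=\tilde O(n/|S_{i+1}|)=\tilde O(2^{i+1})$ w.h.p., because for $i+1\leq k$ we have $|S_{i+1}|=\tilde\Theta(n/2^{i+1})$. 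Summing over $v$ yields $|E_{S_{i+1}}|=\tilde O(n\cdot 2^{i+1})$, so a single Dijkstra from $s\in S_i$ on the graph with $|E_{S_{i+1}}|+n=\tilde O(n\cdot 2^{i+1})$ edges takes $\tilde O(n\cdot 2^{i+1})$ time. Aggregating over $s\in S_i$ and then over levels,
\[
\sum_{i=0}^{k-1}|S_i|\cdot\tilde O(n\cdot 2^{i+1})\ =\ \sum_{i=0}^{k-1}\tilde O\!\left(\tfrac{n}{2^i}\right)\cdot\tilde O(n\cdot 2^{i+1})\ =\ \tilde O(kn^2)\ =\ \tilde O(n^2).
\]

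Adding the four contributions yields $\tilde O(mn^{1-r}+n^2)$, with all high-probability guarantees inherited from \autoref{lm:computebunches}. The main obstacle I foresee is phase (iv): it is essential that the bound $|E_{S_{i+1}}(v)|\leq|B(v,S_{i+1})|+1$ is combined with the \emph{worst-case} bunch-size guarantee of the refined Thorup--Zwick construction (rather than just the average bunch size), since otherwise the geometric cancellation $|S_i|\cdot 2^{i+1}=\tilde O(n)$ across levels would not go through.
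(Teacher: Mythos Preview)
The paper does not give its own proof of this lemma; it merely cites \cite{BaswanaK10} and offers the one-line sketch ``for each level, we require $\tilde O(mn^{1-r})$ time to construct the sparser graph, and $\tilde O(n^2)$ time to compute shortest paths from $S_i$.'' Your phase-by-phase argument is precisely a fleshed-out version of that sketch, and the overall structure and bounds are correct.

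Two small imprecisions are worth fixing. First, in phase~(iv) you invoke \autoref{lm:computebunches} to get $|B(v,S_{i+1})|=\tilde O(2^{i+1})$, but that lemma only controls bunches with respect to the single set $S$ it outputs, not the intermediate levels $S_{i+1}$. The right justification is the standard sampling bound: since $S_{i+1}'\subseteq S_{i+1}$ and $S_{i+1}'$ is a uniform $2^{-(i+1)}$-sample, $B(v,S_{i+1})\subseteq B(v,S_{i+1}')$ and $|B(v,S_{i+1}')|=\tilde O(2^{i+1})$ w.h.p. Second, your bound $|E_{S_{i+1}}(v)|\le |B(v,S_{i+1})|+1$ needs the \emph{strict} inequality $w(u,v)<d(v,S_{i+1})$ in the definition of $E_A(v)$ (the paper's prose says ``less than'' though the displayed formula shows $\le$). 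With strict inequality every such neighbor $u$ satisfies $d(u,v)<d(v,p_{i+1}(v))$ and hence lies in $B(v,S_{i+1})$, so in fact $|E_{S_{i+1}}(v)|\le|B(v,S_{i+1})|$; with non-strict inequality the claim can fail when many neighbors sit at exactly the pivot distance.
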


Next, we combine this subroutine with shortest paths from the last level to obtain a $2$-approximation, see \autoref{alg:BK_APSP}. This corresponds to Algorithm 9 of \cite{BaswanaK10}, where we parameterize the hierarchy.

\begin{algorithm}[H] \SetAlgoLined 
$k\leftarrow (1-r)\log n$\;
Compute an $r$-hierarchy $V= S_0 \supseteq S_1 \supseteq \dots \supseteq S_{k} \supseteq S_{k+1}= \emptyset$\;
Run \autoref{alg:BK_scheme} w.r.t.\ $S_0, S_1, \dots, S_k, S_{k+1}= \emptyset$\;
Compute $(1+\epsilon/2)$-approximate MSSP from $S_k$ in $(V,E)$ \label{line:alg_BK_MSSP}

\ForEach{$u,v\in V$\label{line:alg_BK_for}}{
\If{$v\notin B(u,S_k)$ and $u\notin B(v,S_k)$}{
$\delta(u,v) \leftarrow \min_{0\leq i \leq k}\{\delta(u,p_i(u))+\delta(p_i(u),v), \delta(v,p_i(v))+\delta(p_i(v),u)\}$}
}
\Return{$\delta$}
\caption{A $2$-approximate APSP algorithm} 
\label{alg:BK_APSP}
\end{algorithm}

We show that independent of our choice for $r$, this gives a $2$-approximation. In case $\epsilon=0$, this is \cite[Lemma 7.5]{BaswanaK10}, we adapt this proof to allow for approximate shortest paths.

\begin{lemma}\label{lm:BK_correctness}
    \autoref{alg:BK_APSP} computes $(2+\epsilon)$-approximate APSP for any choice of $r\in[0,1]$.  
\end{lemma}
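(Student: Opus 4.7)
The plan is to adapt the correctness argument of Baswana and Kavitha~\cite{BaswanaK10} (their Lemma~7.5), which gives a clean $2$-approximation when the MSSP step is exact, so that it handles the $(1+\epsilon/2)$-approximate MSSP of \autoref{line:alg_BK_MSSP}. The key observation is that the approximation error only enters through distance estimates of the form $\delta(p_k(u),v)$ (and symmetrically $\delta(p_k(v),u)$), so I only need to show that this error is absorbed into the overall factor $(2+\epsilon)$.

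I would first note that $\delta(u,v)\geq d(u,v)$ always, since every assignment in \autoref{alg:BK_scheme} and \autoref{alg:BK_APSP} sets $\delta$ to either the length of an actual $u$--$v$ path or a $(1+\epsilon/2)$-overapproximation of such a length. I would then split into cases based on whether $u\in B(v,S_k)$ or $v\in B(u,S_k)$ holds. If so, \autoref{lm:computebunches} already furnishes the exact distance $d(u,v)$, which the query returns. Otherwise, by definition of bunches, $d(u,p_k(u))\leq d(u,v)$ and $d(v,p_k(v))\leq d(u,v)$.

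I would then split the main case into two subcases, following \cite{BaswanaK10}. In \emph{Subcase A}, $d(u,p_k(u))+d(v,p_k(v))>d(u,v)$, and \autoref{lm:BK_scheme_correctness} applies directly to give $\delta(u,v)\leq 2 d(u,v)\leq (2+\epsilon)d(u,v)$; the estimates referenced by that lemma are all exact. In \emph{Subcase B}, $d(u,p_k(u))+d(v,p_k(v))\leq d(u,v)$, so one of the two terms is at most $d(u,v)/2$. Assuming without loss of generality that $d(u,p_k(u))\leq d(u,v)/2$, the $(1+\epsilon/2)$-approximate MSSP estimate from \autoref{line:alg_BK_MSSP} yields $\delta(p_k(u),v)\leq (1+\epsilon/2)\,d(p_k(u),v)$, and I would compute
\begin{align*}
\delta(u,p_k(u))+\delta(p_k(u),v)
&\leq d(u,p_k(u))+(1+\epsilon/2)\bigl(d(u,p_k(u))+d(u,v)\bigr) \\
&= (2+\epsilon/2)\,d(u,p_k(u))+(1+\epsilon/2)\,d(u,v) \\
&\leq (2+\epsilon/2)\cdot \tfrac{d(u,v)}{2}+(1+\epsilon/2)\,d(u,v) \\
&\leq (2+\epsilon)\,d(u,v).
\end{align*}

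The main obstacle is ensuring that the $(1+\epsilon/2)$ factor slips into the argument cleanly without blowing up the constant in front of $d(u,v)$. The resolution above works because in Subcase B the pivot $p_k(u)$ lies within distance $d(u,v)/2$ of $u$, so the multiplicative error applies only to a leg whose exact length is at most $\tfrac{3}{2}\,d(u,v)$, and the contribution of the error stays below $\epsilon\,d(u,v)$. I would also verify that the lower bound $\delta(u,v)\geq d(u,v)$ is not compromised by the fact that \autoref{alg:BK_scheme} uses $\delta(u,x)$ values as edge weights in its Dijkstra calls: for $x\in B(u,S_k)$ these values equal $d(u,x)$ exactly by \autoref{lm:computebunches}, so no optimistic estimates leak in.
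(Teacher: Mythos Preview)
Your proposal is correct and follows essentially the same approach as the paper: the identical case split on membership in $B(\cdot,S_k)$, then on whether $d(u,p_k(u))+d(v,p_k(v))>d(u,v)$, invoking \autoref{lm:BK_scheme_correctness} in one subcase and bounding the pivot path with the $(1+\epsilon/2)$-MSSP estimate in the other. Your write-up is slightly more detailed in the final chain of inequalities and in justifying the lower bound, but the argument is the same.
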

\begin{proof}
    First of all, if $u\in B(v,S_k)$ or $v\in B(u,S_k)$ the exact distance is known by $\delta(u,v)$. So for the rest of the proof we assume otherwise. Now if $d(u,p_k(u))+d(v,p_k(v))>d(u,v)$, then we obtain a $2$-approximation by \autoref{lm:BK_scheme_correctness}. 

    We are left with the case that $d(u,p_k(u))+d(v,p_k(v))\leq d(u,v)$. Without loss of generality, let $d(u,p_k(u))\leq d(u,v)/2$. By \autoref{alg:BK_scheme}, we have $\delta(u,p_k(u))=d(u,p_k(u))$. Furthermore, by \autoref{line:alg_BK_MSSP}, we have $\delta(p_k(u),v)\leq (1+\epsilon/2)d(p_k(u),v)$. In total we obtain by \autoref{line:alg_BK_for} and the triangle inequality that 
    \begin{align*}
        \delta(u,v) &\leq \delta(u,p_k(u))+\delta(p_k(u),v)\leq d(u,p_k(u))+(1+\epsilon/2)d(p_k(u),v) \\
        &\leq d(u,p_k(u))+(1+\epsilon/2)d(u,p_k(u))+(1+\epsilon/2)d(u,v) \leq (2+\epsilon)d(u,v).
    \end{align*}

   Since all distance estimates $\delta(u,v)$ correspond to paths in the graph, we trivially have $d(u,v)\leq \delta(u,v)$.
\end{proof}

Next, we show how the running time depends on $r$. 

\begin{lemma}\label{lm:BK_time}
    For $r\in[0,1]$, \autoref{alg:BK_APSP} takes $\tilde O(n^2+mn^{1-r}+T(\tilde O(n^{r}))$ time w.h.p., where $T(s)$ is the time to compute $(1+\epsilon)$-approximate MSSP from $s$ sources in a graph with $n$ vertices and $m$ edges. 
\end{lemma}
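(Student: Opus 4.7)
The plan is to simply sum up the running times of the four stages of \autoref{alg:BK_APSP}, verifying that each is either one of the named terms or dominated by them. The main subtlety will be ensuring the high-probability bound on $|S_k|$ (which controls the MSSP call), since everything else is a deterministic bound given the hierarchy.

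First I will bound the cost of constructing the $r$-hierarchy. As noted in the discussion preceding \autoref{alg:BK_APSP}, the sets $S_i'$ come from $k=(1-r)\log n$ rounds of random sub-sampling, and then we augment by a pivot set $S$ of size $\tilde O(n^r)$ returned by \textsc{ComputeBunches} with cluster-sampling rate $p=n^{r-1}$. By \autoref{lm:computebunches}, this call runs in $\tilde O(m/p)=\tilde O(mn^{1-r})$ time and also yields the pivots, bunches and clusters relative to $S$. Because each $S_i \supseteq S$, the pivots, bunches and clusters relative to each $S_i$ can likewise be produced in $\tilde O(mn^{1-r})$ time per level (a standard Dijkstra-to-nearest-sampled-vertex computation, see \cite{TZ2005}), and there are $k=O(\log n)$ levels, so the total cost of Step~1 is $\tilde O(mn^{1-r})$. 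Here we also use that, with high probability, $|S_i|\le \tilde O(n/2^i)+\tilde O(n^r)=\tilde O(n/2^i)$, and in particular $|S_k|=\tilde O(n^r)$.

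Second, invoking \autoref{lm:BK_scheme_time}, the call to \autoref{alg:BK_scheme} runs in $\tilde O(mn^{1-r}+n^2)$ time with high probability for any choice of $r\in[0,1]$. Third, \autoref{line:alg_BK_MSSP} runs a $(1+\epsilon/2)$-approximate MSSP computation from the set $S_k$ on the original graph $(V,E)$; since $|S_k|=\tilde O(n^r)$ with high probability, this costs $T(\tilde O(n^r))$ by definition of $T$. Finally, the for-loop starting at \autoref{line:alg_BK_for} iterates over the $n^2$ vertex pairs, and for each pair takes the minimum over $O(k)=O(\log n)$ candidate estimates, which contributes $\tilde O(n^2)$ in total.

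Summing the four contributions yields
\[
\tilde O(mn^{1-r}) \;+\; \tilde O(mn^{1-r}+n^2) \;+\; T(\tilde O(n^r)) \;+\; \tilde O(n^2) \;=\; \tilde O\bigl(n^2+mn^{1-r}+T(\tilde O(n^r))\bigr),
\]
which is exactly the claimed bound. The only non-deterministic ingredients are the sub-sampling of the $S_i'$ and the sub-routines already stated to hold with high probability (\autoref{lm:computebunches} and \autoref{lm:BK_scheme_time}); a union bound over the $O(\log n)$ levels keeps all required events simultaneously true with high probability, so the overall guarantee is w.h.p. The main ``obstacle'', if one can call it that, is just being careful that adding the pivot set $S$ to every $S_i$ does not inflate $|S_k|$ beyond $\tilde O(n^r)$, which is immediate since $|S|=\tilde O(n^r)$ and $|S_k'|=\tilde O(n^r)$ w.h.p.
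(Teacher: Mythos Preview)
Your proof is correct and follows essentially the same approach as the paper's own proof: sum the cost of building the $r$-hierarchy (\autoref{lm:computebunches}), the call to \autoref{alg:BK_scheme} (\autoref{lm:BK_scheme_time}), the MSSP call from $S_k$ of size $\tilde O(n^r)$, and the final $O(n^2k)=\tilde O(n^2)$ loop. Your version is simply more detailed about the w.h.p.\ bound on $|S_k|$ and the per-level bunch/cluster computations, but the structure and ingredients are identical.
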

\begin{proof}
    We can compute an $r$-hierarchy in $\tilde O(mn^{1-r})$ time w.h.p.\ (follows directly from the definition and \autoref{lm:computebunches}). \autoref{alg:BK_scheme} takes $\tilde O(mn^{1-r}+n^2)$ time (\autoref{lm:BK_scheme_time}). Next, in \autoref{line:alg_BK_MSSP}, we need to compute MSSP from $|S_k|=\tilde O(n^r)$, for which we denote the running time as $T(\tilde O(n^{r}))$. Finally, the for-loop of \autoref{line:alg_BK_for} takes $O(n^2k)=\tilde O(n^2)$ time. Adding all running times, we obtain $\tilde O(n^2+mn^{1-r}+T(\tilde O(n^{r}))$ time w.h.p.
\end{proof}

Together \autoref{lm:BK_correctness} and \autoref{lm:BK_time} give \autoref{thm:ParBK}. 

\ThmParBK*

\paragraph{\texorpdfstring{$(2+\epsilon)$}{2+epsilon}-Approximate APSP for Weighted Graphs}
Baswana and Kavitha~\cite{BaswanaK10} proceed by setting $r = 1/2$ (or equivalently $p=1/\sqrt n$). For the MSSP computations they use Dijkstra (hence $\epsilon=0$) in $\tilde O(n^{1-r}m)=\tilde O(m\sqrt n)$ time, see also \autoref{thm:BK}. Instead, we keep $r$ as a parameter, and use fast matrix multiplication to obtain $(1+\epsilon)$-approximate MSSP. 

\ThmTwoApxWeightedPar*
\begin{proof}
    This follows directly from \autoref{thm:ParBK}, combined with the $(1+\epsilon)$-approximate MSSP algorithm of \cite{ElkinN22} (see \autoref{thm:EN_MSSP}). 
\end{proof}

For dense graphs, i.e., $m=n^2$, we can balance the terms using \cite{balancer}. If we do so, we recover \autoref{thm:two_apx_wghted}. Results for other densities are obtained in a similar fashion, see \autoref{table_weighted} for the results. 

\ThmTwoApxWeighted*
\begin{proof}
    For $m=n^2$, w.h.p.\ the running time of \autoref{thm:two_apx_wghted_par} becomes $\tilde O(n^{3-r}+n^{\omega(r)}(1/\epsilon)^{O(1)}\log W)=\tilde O(n^{2.21313612}(1/\epsilon)^{O(1)}\log W)$ for $r=0.78686388$.
\end{proof}


\printbibliography[heading=bibintoc] 

\newpage
\appendix
\section{Utilizing Recent Improvements on Rectangular Matrix Multiplication}\label{app:newRMM}
Rectangular matrix multiplication is an active research field, with the bounds on $\omega(r)$ being improved in recent years~\cite{gall2023faster, GallU18, le2012faster}. Throughout this paper, we used \cite{GallU18}, the last published paper on the topic, for the sake of replicability. However, more recent, concurrent work by Vassilevska Williams, Xu, Xu, and Zhou~\cite{VassilevskaWXXZ23} gives better bounds. In this section, we detail how this affects our running times.

Our result for $2$-approximate APSP in unweighted graphs, \autoref{thm:two_apx}, has running time $\tilde O(n^{2.5-r}+n^{\omega(r)})=O(n^{2.031062336})$, for $r=0.4689376644$. 

Our results for $(2+\epsilon)$-approximate APSP in weighted graphs, \autoref{thm:two_apx_wghted_par}, are given in \autoref{table:weighted_new}.
\begin{table}[h]
\begin{center}
\begin{tabular}{|c|c|c|c|} 
\hline
    $m$ & Runnning time & Using \cite{VassilevskaWXXZ23} for $\omega(r)$ & with $r$\\ \hline
    $n^{1.4}$ & $n^{2.4-r}+n^{\omega(r)}$ &  $n^{2.008199835}$ & $0.3918001650$ \\  \hline
    $n^{1.5}$ & $n^{2.5-r}+n^{\omega(r)}$ &  $n^{2.031062336}$ & $0.4689376644$ \\  \hline
    $n^{1.6}$ & $n^{2.6-r}+n^{\omega(r)}$ &  $n^{2.061029532}$ & $0.5389704676$ \\  \hline
    $n^{1.7}$ & $n^{2.7-r}+n^{\omega(r)}$ &  $n^{2.095342149}$ & $0.6046578512$ \\  \hline
    $n^{1.8}$ & $n^{2.8-r}+n^{\omega(r)}$ &  $n^{2.132619229}$ & $0.6673807708$ \\  \hline
    $n^{1.9}$ & $n^{2.9-r}+n^{\omega(r)}$ &  $n^{2.171770761}$ & $0.7282292393$ \\  \hline
    $n^{2.0}$ & $n^{3-r}+n^{\omega(r)}$ &  $n^{2.212352011}$ & $0.7876479892$ \\  \hline
\end{tabular}
\end{center}
\caption{Our $(2+\epsilon)$-approximate APSP results (\autoref{thm:two_apx_wghted_par}) using~\cite{VassilevskaWXXZ23}, for $1/\epsilon=n^{o(1)}$.  \autoref{thm:two_apx_wghted_par} gives the fastest running time when $m\geq n^{1.544}$. For $m < n^{1.544}$, we do not improve on \cite{BaswanaK10}.}
\label{table:weighted_new}
\end{table}

Our results for near-additive APSP in unweighted graphs, \autoref{thm:near_additive}, are given in \autoref{table:near_additive_new}. 

\begin{table}[ht]
\begin{center}
\begin{tabular}{|c|c|c|c|} 
\hline
    $k$ & Runnning time & Using \cite{VassilevskaWXXZ23} for $\omega(r)$ & with $r$\\ \hline
    $2$ & $n^{2+(1-r)/2}+n^{\omega(r)}$ &  $n^{2.151353127}$ & $0.6972937458$ \\  \hline
    $4$ & $n^{2+(1-r)/3}+n^{\omega(r)}$ &  $n^{2.118511896}$ & $0.6444643104$ \\  \hline
    $6$ & $n^{2+(1-r)/4}+n^{\omega(r)}$ &  $n^{2.097785917}$ & $0.6088563325$ \\  \hline
    $8$ & $n^{2+(1-r)/5}+n^{\omega(r)}$ &  $n^{2.083460832}$ & $0.5826958380$ \\  \hline
\end{tabular}
\end{center}
\caption{Our $(1+\epsilon,k)$-approximate APSP results (\autoref{thm:near_additive}) using~\cite{VassilevskaWXXZ23}, for $1/\epsilon=n^{o(1)}$.}
\label{table:near_additive_new}
\end{table}

\section{\texorpdfstring{$(2,W_{u,v})$}{(2,W{u,v})}-Approximate APSP} \label{app:static2W}


In this section, we prove \autoref{thm:static_2W}, stated below, providing $(2,W_{u,v})$-approximate shortest paths. This is a generalized version of Baswana, Goyal, and Sen~\cite{BaswanaGS09}, who provide $(2,1)$-APSP in \emph{unweighted} graphs in $\tilde O(nm^{2/3}+n^2)$ time. We make three improvements in our generalization: 1) the algorithm allows for \emph{weighted} graphs, 2) it achieves subquadratic time for $m\leq n^{3/2}$, since it is a distance oracle rather than explicit APSP, and 3) we achieve a faster running time for $m>n^{3/2}$, by having a wider choice in parameters. 

The structure of the algorithm is similar to \autoref{sc:static_distance_oracles}. We show that for vertices whose bunches not overlap, the path through the pivot is actually a $(2,W_{u,v})$-approximation (even if the bunches are adjacent, top left case in \autoref{fig:overlap}). Then for bunches that do overlap (the bottom case in \autoref{fig:overlap}), we create a data structure to store these distances. 

   \begin{figure}
    \centering
    \includegraphics[width=\textwidth]{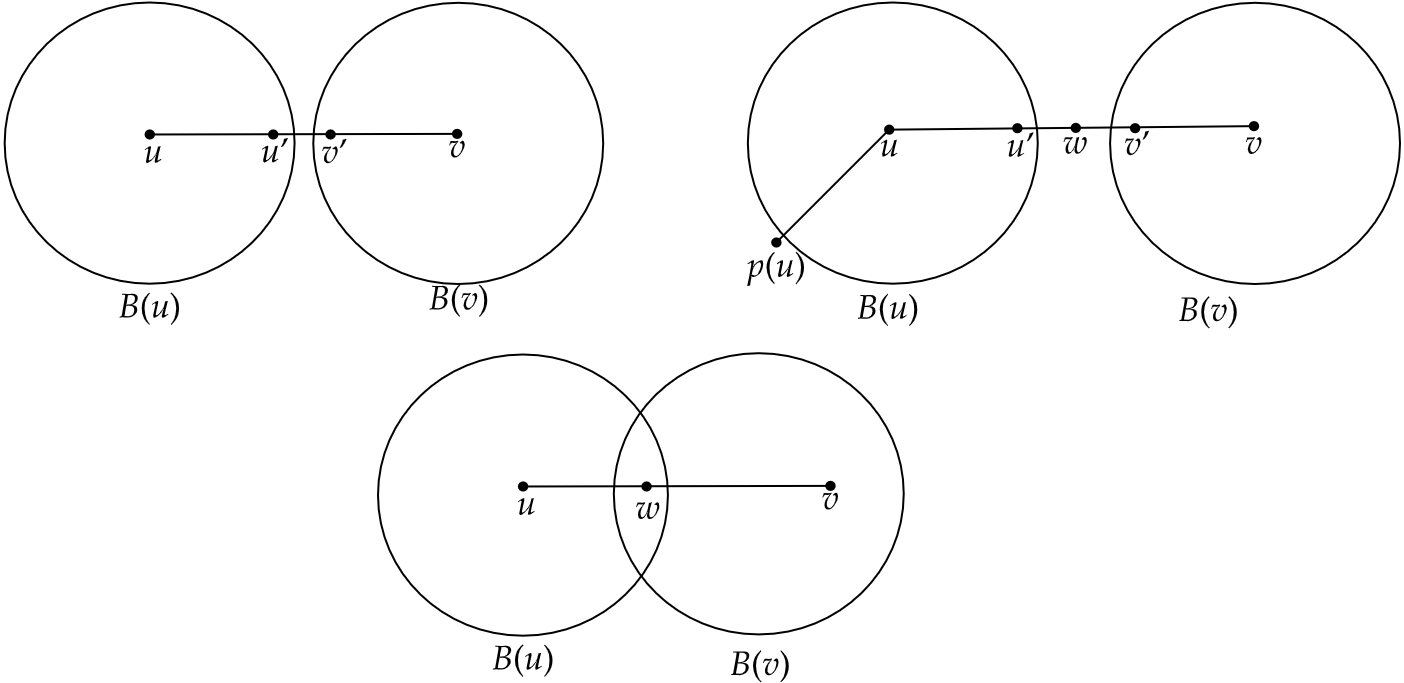}
    \caption{Three different possible interactions between the shortest path between $u$ and $v$, and the bunches of $u$ and $v$.}
    \label{fig:overlap}
\end{figure}


We note that this result is mostly interesting in the regime where $m=O(n^{3/2})$, where we obtain \emph{subquadratic} time and space. For denser graphs, we do not improve upon Baswana and Kavitha~\cite{BaswanaK10}, who provide a $(2,W_{u,v})$-approximation in $\tilde O(n^2)$ time and space.

\begin{restatable}{theorem}{thmstaticW}\label{thm:static_2W}
    Given a weighted graph $G$, we can compute a distance oracle that returns $(2,W_{u,v})$-approximate queries in constant time, where $W_{u,v}$ is the maximum weight on a shortest path from $u$ to $v$, with one of the following guarantees:
    \begin{itemize}
         \item preprocessing time $\tilde O(nm^{2/3})$ and uses space $\tilde O(nm^{2/3})$, when $m\leq n^{3/2}$, and
         \item preprocessing time $\tilde O(mn^{1/2})$ and uses space $O(n^2)$, when $m>n^{3/2}$,
    \end{itemize}
    or it has (optimizing for space) running time $\tilde O(mn^{2/3}) $ and $\tilde O(n^{5/3})$ space. 
\end{restatable}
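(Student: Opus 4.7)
The plan is to adapt the distance oracle of \autoref{sc:static_distance_oracles} by replacing its explicit adjacent-edge table with a simpler \emph{overlap} table: for each pair $u,v$ whose bunches share a vertex, I store
\[ \delta_{\text{overlap}}(u,v) := \min_{x \in B(u)\cap B(v)} \bigl(d(u,x)+d(x,v)\bigr). \]
First, I would invoke \textsc{ComputeBunches}$(G,p)$ from \autoref{lm:computebunches} to obtain a set $S$ of size $\tilde O(pn)$, pivots $p(\cdot)$, and bunches and clusters of size $\tilde O(1/p)$ w.h.p., together with $d(u,v)$ for all $v\in B(u)\cup\{p(u)\}$. Second, I would run Dijkstra from every $s\in S$ so that $d(s,v)$ is known for all $v\in V$. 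Third, I would build the overlap table by iterating over each $x\in V$ and each ordered pair $(u,v)\in C(x)\times C(x)$, updating the corresponding entry in constant time (using $x\in B(u)\cap B(v)\iff u,v\in C(x)$, and the fact that $d(u,x),d(x,v)$ are already stored). A query on $(u,v)$ then returns the minimum of $d(u,p(u))+d(p(u),v)$, $d(v,p(v))+d(p(v),u)$, and $\delta_{\text{overlap}}(u,v)$ when defined.

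For stretch, I would fix a shortest $u$-$v$ path $\pi$ and analyse the three cases suggested by \autoref{fig:overlap}. If some vertex of $\pi$ lies outside $B(u)\cup B(v)$, the argument from Case~1 of the proof of \autoref{lm:weighted_correctness} already yields a pure $2$-approximation through the pivot. If $\pi\subseteq B(u)\cup B(v)$ and $\pi$ contains a vertex $x\in B(u)\cap B(v)$, then $\delta_{\text{overlap}}(u,v)\le d(u,x)+d(x,v)=d(u,v)$ is exact. The main obstacle, and the reason the additive $W_{u,v}$ term is needed, is the case $\pi\subseteq B(u)\cup B(v)$ with $\pi\cap B(u)\cap B(v)=\emptyset$ (the top-left picture of \autoref{fig:overlap}). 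Here I would let $y$ be the last vertex of $\pi$ (from $u$) lying in $B(u)$ and $z$ be its successor on $\pi$; by construction $y\neq z$, $z\notin B(u)$, $y\notin B(v)$, and $e=\{y,z\}$ is an edge of $\pi$. From $z\notin B(u)$ I get $d(u,p(u))\le d(u,z)=d(u,y)+w(e)$ and from $y\notin B(v)$ I get $d(v,p(v))\le d(y,v)=w(e)+d(z,v)$. Summing and using $d(u,y)+w(e)+d(z,v)=d(u,v)$ yields
\[ d(u,p(u))+d(v,p(v))\;\le\;d(u,v)+w(e)\;\le\;d(u,v)+W_{u,v}, \]
so the smaller of the two is at most $(d(u,v)+W_{u,v})/2$, and the triangle inequality then gives $d(u,p(u))+d(p(u),v)\le 2d(u,v)+W_{u,v}$ through the corresponding pivot.

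For the resources, \textsc{ComputeBunches} costs $\tilde O(m/p)$, the Dijkstras from $S$ cost $\tilde O(|S|\,m)=\tilde O(pnm)$, and the overlap construction costs $\sum_{x\in V}|C(x)|^2=\tilde O(n/p^2)$ in both time and space, while the SSSP trees from $S$ occupy an additional $\tilde O(pn^2)$ space. I would then balance $p$ separately in each regime. Setting $p=m^{-1/3}$ gives total time $\tilde O(nm^{2/3}+m^{4/3})$ and space $\tilde O(nm^{2/3}+n^{2}m^{-1/3})$, both collapsing to $\tilde O(nm^{2/3})$ when $m\le n^{3/2}$. Setting $p=n^{-1/2}$ gives time $\tilde O(m\sqrt n+n^2)=\tilde O(m\sqrt n)$ and space $O(n^2)$ when $m>n^{3/2}$. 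Finally, setting $p=n^{-1/3}$ yields the space-optimised tradeoff $\tilde O(mn^{2/3})$ time with $\tilde O(n^{5/3})$ space for every density.
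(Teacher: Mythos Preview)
Your proposal is correct and follows essentially the same approach as the paper. The only cosmetic difference is the order of iteration when building the overlap table: the paper loops over $u\in V$, then $w\in B(u)$, then $v\in C(w)$, whereas you loop over $x\in V$ and then over pairs $(u,v)\in C(x)\times C(x)$; these enumerate exactly the same triples (with $x$ playing the role of $w$) and give the same $\tilde O(n/p^2)$ time and space bound, and your case analysis, pivot argument with the crossing edge, and balancing of $p$ match the paper's proof.
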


\begin{proof}
\emph{Algorithm.} 
Let $p\in [\tfrac{1}{n},1]$ be a parameter, to be chosen later. 

\noindent\PreProcessing{$G,p$}:
\begin{enumerate}
    \item Run \textsc{ComputeBunches}($G,p$) \label{step:computebunchesW}
    \item Run Dijkstra for each vertex~$u\in A$ to compute all distances $d(u,v)$ for $u\in A, v\in V$. \label{step:dijkstraW}
    \item For $u\in V$, for $w\in B(u)$, for $v\in C(w)$: \label{step:overlap}
    \begin{enumerate}
        \item Initialize $\delta_{\rm{overlap}}(u,v) \leftarrow d(u,w)+d(w,v)$ if no such entry exists. \label{step:overlap_init}
        \item Otherwise: $\delta_{\rm{overlap}}(u,v)\leftarrow \min\{ \delta_{\rm{overlap}}(u,v), d(u,w)+d(w,v)\}$. \label{step:overlap_update}
    \end{enumerate}
\end{enumerate}

\noindent\Query{$u,v$}:\\
    \indent Output $\delta(u,v)$ to be the minimum of \label{step:cases_staticW}
    \begin{enumerate}[(a)]
         \item $\min\{ d(u,p(u))+d(v,p(u)),d(u,p(v))+d(v,p(v))\}$;\label{step:pivot_staticW} 
        \item $\delta_{\rm{overlap}}(u,v)$; \label{step:overlap_static}
    \end{enumerate}

\emph{Correctness.}
We will show that $\delta(u,v)$ gives a $(2,W_{u,v})$-approximation of $d(u,v)$. First, note that all distances making up $\delta(u,v)$ correspond to actual paths in the graph, hence $d(u,v)\leq \delta(u,v)$. Next, let $\pi$ be the shortest path from $u$ to $v$. We distinguish two cases. 

\textbf{Case 1.} There exists $w\in \pi$ such that $w\in B(u)\cap B(v)$ (the bottom case in \autoref{fig:overlap}).

We have that 
\begin{align*}
    \delta_{\rm{overlap}}(u,v) &= \min\{ d(u,x)+d(x,v) : x\in B(u) \text{ and } y\in C(x)\}\\
    &= \min\{ d(u,x)+d(x,v) : x\in B(u) \text{ and } x\in B(y)\}.
\end{align*}
In particular, this includes $x=w$, and $w$ is on the shortest path, so by Query Step~\ref{step:overlap_static}, we have that $\delta (u,v)\leq d(u,v)$. 

\textbf{Case 2.} There is no $w\in \pi$ such that $w\in B(u)\cap B(v)$.

This means there is either a vertex $w\in \pi$ such that $w \notin B(u)\cup B(v)$ (the top right case in \autoref{fig:overlap}), or there is an edge $\{u',v'\}$ on $\pi$ such that $u' \in B(u)\setminus B(v)$ and $v'\in B(v)\setminus B(u)$ (the top right case in \autoref{fig:overlap}). The first case gives a 2-approximation by the same reasoning as in the proof of \autoref{lm:weighted_correctness}, where we use exact shortest path, hence $\epsilon =0$. Here we only consider the second case. Since $v'\notin B(u)$, we have $d(u,v')\geq d(u,p(u))$, and since $u'\notin B(v)$, we have $d(u',v)\geq d(v,p(v))$. Combining this, we obtain $d(u,p(u))+d(v,p(v)) \leq d(u,v)+w(u',v')$. Without loss of generality, assume that $d(u,p(u)) \leq \tfrac{d(u,v)+w(u',v')}{2}$. By Query Step~\ref{step:pivot_staticW} we have:
\begin{align*}
    \delta(u,v) &\leq d(u,p(u)) + d(v,p(u))\\
    &\leq d(u,p(u)) + d(u,p(u)) + d(u,v)\\
    &\leq 2d(u,v)+w(u',v')\\
    &\leq 2d(u,v)+W_{u,v}.
\end{align*}

\emph{Running time.}
Step~\ref{step:computebunchesW} and~\ref{step:dijkstraW} take $\tilde O(\tfrac{m}{p})$ and $\tilde O(pnm)$ time respectively, see \autoref{lm:computebunches}. For Step~\ref{step:overlap}, notice that Step~\ref{step:overlap_init} and~\ref{step:overlap_update} both take constant time. So Step~\ref{step:overlap} takes total time
\begin{align*}
    \sum_{u\in V} \sum_{w\in B(u)} \sum_{v\in C(w)}O(1)= \tilde O(\tfrac{n}{p^2}),
\end{align*}
since $|B(u)|=\tilde O(\tfrac{1}{p})$ and $|C(w)|=\tilde O(\tfrac{1}{p})$, by \autoref{lm:computebunches}. We obtain total time $\tilde O(\tfrac{m}{p}+pnm+\tfrac{n}{p^2})$. We can balance this in three different ways:
\begin{itemize}
    \item $\tfrac{m}{p}=pnm$, which implies $p=n^{-1/2}$ and gives running time $\tilde O(mn^{1/2}+n^2)$.
    \item $\tfrac{m}{p}=\tfrac{n}{p^2}$, which implies $p=\tfrac{n}{m}$ and gives running time $\tilde O(\tfrac{m^2}{n}+n^2)$. 
    \item $pnm = \tfrac{n}{p^2}$, which implies $p=m^{-1/3}$ and gives running time $\tilde O(m^{4/3}+nm^{2/3})$.
\end{itemize}
Note that $\tilde O(mn^{1/2}+n^2)$ is always smaller than $\tilde O(\tfrac{m^2}{n}+n^2)$ , since $mn^{1/2} \leq \tfrac{m^2}{n}$ for $m\geq n^{3/2}$ and $mn^{1/2}\leq n^2$ for $m\leq n^{3/2}$.

Further we see that $\tilde O(m^{4/3}+nm^{2/3})$ is smaller than $\tilde O(mn^{1/2}+n^2)$ when $m\leq n^{3/2}$, since then $m^{4/3}\leq n^2$ and $mn^{1/2}\leq n^2$. 

Finally we notice that $nm^{2/3} \geq m^{4/3}$ when $m\leq n^{3/2}$, and that $mn^{1/2}\geq n^2$ for $m\geq n^{3/2}$. So our running time simplifies to 
\begin{itemize}
     \item $\tilde O(nm^{2/3})$, when $m\leq n^{3/2}$, and
     \item $\tilde O(mn^{1/2})$, when $m>n^{3/2}$.
\end{itemize}

\emph{Query time.} We note that we have computed $B(u)$ and $B(v)$, so checking if $v\in B(u)$ or $u\in B(v)$ can be done in constant time. Further, we have also computed $d(u,v)$ if $v\in B(u)$. For Query Step~\ref{step:pivot_staticW}, notice that we computed distances from $A$ to $V$, and for Query Step~\ref{step:overlap_static} we have already computed the value $ \delta_{\rm{overlap}}(u,v)$. So the whole query can be done in constant time. 

\emph{Space.} We need $  O(|A|n)=\tilde O(pn^2)$ space for the distances from $A$, and $\tilde O(\tfrac{n}{p^2})$ space for the overlap data structure. All other space requirements are clearly smaller. Using $p=n^{-1/2}$ gives space $O(n^2)$, using $p=m^{-1/3}$ gives space $\tilde O(nm^{2/3}+n^2m^{-1/3})= \tilde O(nm^{2/3})$. 

To optimize the space usage, we set $pn^2=\tfrac{n}{p^2}$, so $p=n^{-1/3}$, we obtain total space requirement $\tilde O(n^{5/3})=\tilde O(mn^{2/3})$. This gives running time $\tilde O(mn^{1/3}+mn^{2/3}+n^{5/3})=\tilde O(mn^{2/3})$.
\end{proof}

\newpage

\end{document}